\documentclass[a4paper,UKenglish,cleveref, autoref, thm-restate,nolineno]{socg-lipics-v2021}

\usepackage{amsthm,amsmath,amssymb,amsfonts}
 \usepackage{pgfmath}
\usepackage{dsfont}
\usepackage{xspace}
\usepackage{graphicx}
\usepackage{nicefrac}
 \usepackage{colortbl}
\usepackage{subcaption} 
 \usepackage{xcolor} 

\usepackage{tikz}
\usetikzlibrary{arrows.meta,calc,positioning}
\usepackage{boxedminipage}
\usepackage{framed}
\usepackage{tabularx}
\usepackage{tcolorbox}
\usepackage{etoolbox}
\usepackage{xifthen}
\usepackage{listings}

\newcommand{\shortversion}[1]{}

\newcommand{\mcm}[3]{\newcommand{#1}[#2]{{\ensuremath{#3}}}} 
\mcm{\Nbb}{0}{\mathbb{N}}
\mcm{\Zbb}{0}{\mathbb{Z}}
\mcm{\Rbb}{0}{\mathbb{R}}
\mcm{\Cbb}{0}{\mathbb{C}}
\mcm{\Qbb}{0}{\mathbb{Q}}
\mcm{\Acal}{0}{\cal A}
\mcm{\Bcal}{0}{\cal B}
\mcm{\Ccal}{0}{\cal C}
\mcm{\Dcal}{0}{\cal D}
\mcm{\Ecal}{0}{\cal E}
\mcm{\Fcal}{0}{\cal F}
\mcm{\Gcal}{0}{\cal G}
\mcm{\Hcal}{0}{\cal H}
\mcm{\Ical}{0}{\cal I}
\mcm{\Jcal}{0}{\cal J}
\mcm{\Kcal}{0}{\cal K}
\mcm{\Lcal}{0}{\cal L}
\mcm{\Mcal}{0}{\cal M}
\mcm{\Ncal}{0}{\cal N}
\mcm{\Ocal}{0}{{\cal O}}
\mcm{\Pcal}{0}{{\cal P}}
\mcm{\Qcal}{0}{{\cal Q}}
\mcm{\Rcal}{0}{{\cal R}}
\mcm{\Scal}{0}{{\cal S}}
\mcm{\Tcal}{0}{{\cal T}}
\mcm{\Ucal}{0}{{\cal U}}
\mcm{\Vcal}{0}{{\cal V}}
\mcm{\Wcal}{0}{{\cal W}}
\mcm{\Xcal}{0}{{\cal X}}
\mcm{\Ycal}{0}{{\cal Y}}
\mcm{\Zcal}{0}{{\cal Z}}
\newcommand{\wtilde}[1]{\widetilde{#1}}
\newcommand{\Oh}{\mathcal{O}}

\newcommand{\bigoh}[0]{{\mathcal O}}

\newcommand{\dist}{\operatorname{\rho}}

\newcommand{\edm}{\textsc{EDM}\xspace}

\newcommand{\dedmcomp}{$d$-\textsc{\edm Completion}\xspace}
\newcommand{\dedmcompshort}{$d$-\textsc{EDMC}\xspace}

\DeclareMathOperator{\operatorClassNP}{NP}
\newcommand{\classNP}{\ensuremath{\operatorClassNP}\xspace}

\DeclareMathOperator{\operatorClassFPT}{FPT\xspace}
\newcommand{\classFPT}{\ensuremath{\operatorClassFPT}\xspace}
\DeclareMathOperator{\operatorClassW}{W}
\newcommand{\classW}[1]{\ensuremath{\operatorClassW[#1]}}
\DeclareMathOperator{\operatorClassParaNP}{Para-NP\xspace}

\DeclareMathOperator{\operatorClassXP}{XP\xspace}
\newcommand{\classXP}{\ensuremath{\operatorClassXP}\xspace}

\newcommand{\classParaNPComplete}{\ensuremath{\operatorClassParaNP}\xspace-Complete}
   
\newlength{\RoundedBoxWidth}
\newsavebox{\GrayRoundedBox}
\newenvironment{GrayBox}[1]%
   {\setlength{\RoundedBoxWidth}{.93\textwidth}
    \def\boxheading{#1}
    \begin{lrbox}{\GrayRoundedBox}
       \begin{minipage}{\RoundedBoxWidth}}%
   {   \end{minipage}
    \end{lrbox}
    \begin{center}
    \begin{tikzpicture}%
       \node(Text)[draw=black!20,fill=white,rounded corners,%
             inner sep=2ex,text width=\RoundedBoxWidth]%
             {\usebox{\GrayRoundedBox}};
        \coordinate(x) at (current bounding box.north west);
        \node [draw=white,rectangle,inner sep=3pt,anchor=north west,fill=white] 
        at ($(x)+(6pt,.75em)$) {\boxheading};
    \end{tikzpicture}
    \end{center}}
    
\newenvironment{defproblemx}[2][]{\noindent\ignorespaces%
                                \FrameSep=6pt%
                                \parindent=0pt%
                \vspace*{-1.5em}
                \ifthenelse{\isempty{#1}}{%
                  \begin{GrayBox}{\textsc{#2}}%                
                }{%
                  \begin{GrayBox}{\textsc{#2}  parameterized by~{#1}}%  
                }
                \begin{tabular*}{\textwidth}{@{\hspace{.1em}} >{\itshape} p{1.8cm} p{0.8\textwidth} @{}}%        
            }{
                \end{tabular*}%
                \end{GrayBox}%
                \ignorespacesafterend
            }

\usepackage[framemethod=tikz]{mdframed}

\definecolor{mycolor}{rgb}{0.122, 0.435, 0.698}
\newmdenv[innerlinewidth=0.02pt, roundcorner=4pt,linecolor=mycolor,innerleftmargin=6pt,
innerrightmargin=6pt,innertopmargin=6pt,innerbottommargin=6pt]{mybox}
\newmdenv[innerlinewidth=0.5pt, roundcorner=4pt,linecolor=black,innerleftmargin=6pt,
innerrightmargin=6pt,innertopmargin=6pt,innerbottommargin=6pt]{myboxblack}
\newmdenv[innerlinewidth=0.5pt, roundcorner=4pt,linecolor=mycolor,innerleftmargin=6pt,
innerrightmargin=6pt,innertopmargin=6pt,innerbottommargin=6pt]{myboxthick}

\pdfoutput=1 %uncomment to ensure pdflatex processing (mandatatory e.g. to submit to arXiv)
\hideLIPIcs  %uncomment to remove references to LIPIcs series (logo, DOI, ...), e.g. when preparing a pre-final version to be uploaded to arXiv or another public repository

\title{Algorithms for Euclidean Distance Matrix Completion: Exploiting Proximity to Triviality}

\titlerunning{Algorithms for Euclidean Distance Matrix Completion}

\author{Fedor V. Fomin}{University of Bergen, Norway}{fedor.fomin@uib.no}{https://orcid.org/0000-0003-1955-4612}{Supported by the Research Council of Norway under BWCA project (grant no.~314528) and by the European Research Council (ERC) under the European Union's Horizon 2020 research and innovation programme (NewPC grant agreement No.  101199930)
    }

\author{Petr A. Golovach}{University of Bergen, Norway}{petr.golovach@uib.no}{https://orcid.org/0000-0002-2619-2990}{Supported by the Research Council of Norway under the BWCA (grant no.~314528) and  Extreme-Algorithms (grant no.~355137) projects.}

\author{M. S. Ramanujan}{University of Warwick, UK}{r.maadapuzhi-sridharan@warwick.ac.uk}
{https://orcid.org/0000-0002-2116-6048}{Supported by Engineering and Physical Sciences Research Council (EPSRC) grant EP/V044621/1.}

\author{Saket Saurabh}{Institute of Mathematical Sciences, Chennai, India and University of Bergen, Norway}{saket@imsc.res.in}{https://orcid.org/0000-0001-7847-6402}{Supported by the European Research Council (ERC) under the European Union's Horizon 2020 research and innovation programme (grant agreement No. 819416); and Swarnajayanti
Fellowship grant DST/SJF/MSA-01/2017-18.  \begin{minipage}{0.2\textwidth}\includegraphics[width=0.9\textwidth]{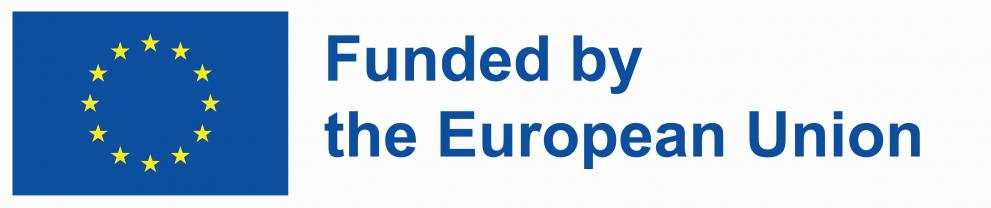}\end{minipage}}

\authorrunning{F. V. Fomin, P. A. Golovach, M. S. Ramanujan, S. Saurabh}

\Copyright{Fedor V. Fomin, Petr A. Golovach, M. S. Ramanujan, Saket Saurabh} 
\ccsdesc[500]{Mathematics of computing~Combinatorial algorithms}
\ccsdesc[500]{Theory of computation~Fixed parameter tractability}

\keywords{Parameterized Complexity, Euclidean Embedding, Polynomial Compression}

\begin{document}

\maketitle

\begin{abstract}
In the $d$-Euclidean Distance Matrix Completion ($d$-EDMC) problem, one aims to determine whether a given partial matrix of pairwise distances can be extended to a full Euclidean distance matrix in $d$ dimensions. This problem is a cornerstone of computational geometry with numerous applications. While classical work on  this problem often focuses on exploiting connections to semidefinite programming typically leading to approximation algorithms, we focus on exact algorithms and propose a novel {\em distance-from-triviality parameterization} framework to obtain tractability results for $d$-EDMC. 

We identify key structural patterns in the input that capture entry density, including chordal substructures and coverability of specified entries by fully specified principal submatrices. We obtain:

\begin{enumerate}
    \item The first fixed-parameter algorithm (FPT algorithm) for {\dedmcompshort} parameterized by $d$ and the maximum number of unspecified entries per row/column. This is achieved through a novel {\em compression algorithm} that reduces a given instance to a  submatrix on $\mathcal{O}(1)$ rows (for fixed values of the parameters).
    \item The first FPT algorithm for {\dedmcompshort}  parameterized by $d$ and the minimum number of fully specified principal submatrices whose entries cover all specified entries of the given matrix. This result is also achieved through a compression algorithm. 
\item A polynomial-time algorithm for {\dedmcompshort} when both $d$ and the minimum fill-in of a  natural graph representing the specified entries are fixed constants.
This result is achieved by combining tools from distance geometry and algorithms from real algebraic geometry.
\end{enumerate}
Our work identifies interesting parallels between EDM completion and graph problems, with our algorithms exploiting techniques from both domains.
\end{abstract}

\section{Introduction}

Consider the following problem: given an \( n \times n \) symmetric hollow matrix \( M \), and integer $d$, determine whether \( M \) is a Euclidean Distance Matrix (EDM)  that is {isometrically embeddable} into $\mathbb{R}^d$, also said to be 
%. We also use the term 
 \emph{realizable in $\mathbb{R}^d$}. In other words,  determine whether there exists a set \( P = \{p_1, \dots, p_n\} \) of \( n \) points in \( \mathbb{R}^d \) such that
%\[
$M_{ij} = \| p_i - p_j \|_2^2$, 
%\]
where \( M_{ij} \) denotes the \( (i,j) \)-th entry of \( M \), and \( \| p_i - p_j \|_2 \) is the Euclidean distance between \( p_i \) and \( p_j \). This question of determining the existence of such a point configuration \( P \) (up to congruence) is a central problem in Distance Geometry, whose solution is well understood and dates back to the works of Cayley~\cite{Cayley1841} and Menger~\cite{Menger1928}.

In this paper, we are interested in the version of this problem where the matrix \( M \) is {\em partial}, i.e., some entries may be ``unspecified.''
In this version, called \dedmcomp{} (\dedmcompshort{}), the goal is to decide whether the missing entries in a matrix can be filled in so that the result is an EDM that can be realized in \( \mathbb{R}^d \). In general, such questions of reconstructing point configurations from {\em partial} distance measurements naturally occur across numerous disciplines.  The study of \dedmcompshort{} in particular, goes back to mathematicians and mechanical engineers of the nineteenth century---see, for example, the work of Cauchy from 1813~\cite{cauchy1813polygones}---and is at the heart of rigidity theory.
This problem has many applications in different areas, such as molecular conformation in bioinformatics~\cite{Havel1983,Crippen1988,Crippen1991}, dimensionality reduction in machine learning and statistics, and localization in wireless sensor networks~\cite{Alfakih1999,Doherty2001,Biswas2004}.

Saxe \cite{saxe1979embeddability} and Yemini \cite{Yemini1979} showed that the \dedmcompshort{} problem is already \emph{strongly} NP-hard in dimensions \(d=1\) and \(d=2\).
While numerous heuristics for solving \dedmcompshort{} under specific structural assumptions on~\(M\) have been proposed in the literature \cite{fang2012euclidean,Weinberger2004,Biswas2006,Candes2010},
\emph{theoretical} results with provable performance guarantees are scarce. This lack of rigorous algorithmic understanding of the problem is a primary motivation for us to explore the computational complexity of this problem and in particular, its parameterized complexity.

In this work we adopt the \emph{distance-from-triviality} viewpoint to study the complexity of \dedmcompshort{}.  Distance-from-triviality is  a
well-established paradigm in parameterized complexity, where the idea is to introduce a parameter (or a combination of parameters) that
measures how far a given instance lies from a class that can be solved
efficiently; algorithms then exploit this “distance’’ to achieve tractability.

\smallskip
\noindent\emph{What is the notion of “triviality’’ for \dedmcompshort?}  
Two natural extremal cases suggest themselves.

\noindent{\bf 1. }\emph{All entries are  unspecified}. 
 If no entry of the matrix is given, \emph{any} Euclidean distance matrix of the appropriate size is a valid completion, so the instance is trivial.  
        Unfortunately, even a very small deviation from this case already yields hardness: Saxe's reduction~\cite{saxe1979embeddability} shows NP-hardness when each row and column contains \emph{exactly two} specified entries.  
        Consequently, parameterizing by distance to the completely unspecified matrix appears unpromising.
        
\noindent{\bf 2. }\emph{All entries are specified}.  
        At the opposite extreme, when every entry is given, one need only test whether the matrix is an EDM that embeds isometrically into $\mathbb{R}^{d}$, a task solvable in polynomial time. 
        As we show in this paper, distance to this notion of ``triviality''  is a much more interesting parameter.

We may therefore formulate our main question as follows.

\begin{tcolorbox}[
   colback = blue!6,           % pastel background
   colframe = blue!65!black,   % darker border
   coltitle = black,           % title text colour
  % title = Guiding Question,   % box title
   boxrule = 1pt,              % border thickness
   arc = 3pt,                  % rounded corners
   left = 2mm, right = 2mm,    % interior padding
   top = 1mm, bottom = 1mm,
   fonttitle = \bfseries       % bold title
]
\centering
\parbox{0.93\linewidth}{%
Which notions of distance to a fully specified matrix can be exploited
algorithmically for \dedmcompshort{}?
}
\end{tcolorbox}
From the standpoint of algorithms with provable guarantees, this question is largely unexplored.  
A notable exception is the work of Berger, Kleinberg, and Leighton~\cite{BergerKL99}.  
They show that if, in every row of an $n\times n$ partial matrix $M$, at least $3n/4+1$ entries are specified, then one can decide in polynomial time whether $M$ can be completed to a $3$-EDM.  
There is, however, a crucial caveat: their theorem assumes that the target point set is in \emph{general position} and that  no ten points lie on a quadric surface.  
We make no such assumptions---the points to be embedded need not be in general position or even distinct---rendering the problem substantially more realistic but also  more challenging.

 \subsection{Our results and methods}

 We begin by observing that high density of specified entries (or, equivalently, a numerical sparsity of unspecified entries) does not by itself make the problem any easier. 
To exploit sparsity algorithmically, one must therefore identify and leverage additional {\em structural properties} that reflect the density of entries.

\begin{restatable}{theorem}{denseObs}
\label{obs:dense}
 $\forall\varepsilon\in(0,1)$, \dedmcompshort{} remains strongly \classNP-hard even for instances $(M,d)$ with $d=2$ in which the $n\times n$ matrix $M$ contains at most $\varepsilon n$ unspecified entries.
\end{restatable}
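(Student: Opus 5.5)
We reduce from a strongly \classNP-hard instance of $2$-\textsc{EDMC}, namely Saxe's instances (or any instance $(M',2)$ on $m$ points with integer entries of polynomial magnitude), and pad it until the unspecified entries are only an $\varepsilon$-fraction of $n$. A hard instance on $m$ points has at most $m^2$ unspecified entries. We add $N$ new points whose pairwise distances, and whose distances to the original points, are all specified, without changing the answer. Taking $N$ with $m^2\le \varepsilon(m+N)$, that is $N=\lceil m^2/\varepsilon\rceil$, makes the number of unspecified entries of the new $n\times n$ matrix, $n=m+N$, at most $\varepsilon n$. Since $\varepsilon$ is a fixed constant, $N$ is polynomial in $m$.

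The difficulty is that the distances from new points to old points are specified, so we cannot attach the new points independently of the old configuration. The natural fix is to place every new point at the same location as a designated old point, say $p_1$. We set every new–new distance to $0$, and for each new point $q$ and old point $p_j$ we set $M_{q,p_j}=M'_{1j}$. The problem is that $M'_{1j}$ may itself be unspecified. We avoid this by choosing an old vertex whose row is fully specified, or by first adding one auxiliary point to the hard instance: a point $z$ whose distances to all of $p_1,\dots,p_m$ are specified, which preserves hardness.

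A cleaner route is to locate the hardness at a point that is forced anyway. In Saxe's reduction for $d=2$ (or via $1$-dimensional hardness lifted to $d=2$), the coordinates can be assumed to be integers, and we may pin one vertex $p_1$. We then copy $p_1$ $N$ times: each copy $q$ gets $M_{q,x}=M'_{1,x}$ whenever $M'_{1,x}$ is specified, distance $0$ to $p_1$ and to the other copies, and otherwise the entry is left unspecified. Correctness is immediate. Any realization of $M'$ extends by placing all copies at $p_1$. Conversely, any realization of $M$ restricts to a realization of $M'$, since the distance $0$ forces each copy to coincide with $p_1$. Points need not be distinct, so this is allowed.

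Each copy adds as many unspecified entries as row $1$ of $M'$ has, however, which breaks the counting. So we want row $1$ fully specified in a hard instance. We obtain this by taking a strongly hard instance and adding a new point $z$ together with a guarantee that $z$'s distances to everyone are determined. Concretely, start from $1$-\textsc{EDMC} hardness (Saxe) on the line, take $z$ at the origin and fix positions in the yes-case up to the known integer coordinates. This is the step I expect to be the main obstacle: guaranteeing a hard $d=2$ family with at least one fully specified row without leaking the solution.

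If that cannot be made to work, the fallback is a different padding. We embed the instance and add $N$ copies of a rigid frame far away whose distances to the old points are left unspecified. That again creates too many unspecified entries, so the frame must be tied to a fully specified vertex, and the same obstacle reappears. The fully specified row should therefore be obtained by adapting Saxe's planar reduction, in which one rigid "anchor" gadget vertex has all its distances computable.
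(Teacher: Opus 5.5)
\textbf{There is a genuine gap.} Your padding mechanism is sound and is essentially the one the paper uses: if some vertex has a fully specified row, then adding $\lceil m^2/\varepsilon\rceil$ copies of it (pairwise distance $0$, same row) preserves the answer and keeps all numbers polynomially bounded. But the whole difficulty of the theorem is to produce a strongly \classNP-hard family of instances in dimension $2$ that has such a row. Your proposal leaves exactly this step open, as you acknowledge, so it does not prove the theorem. None of your suggested fixes closes it:
\begin{itemize}
\item Pinning $p_1$ to a fixed location says nothing about the unknown distances $\|p_1-p_j\|$ for the non-edges $1j$.
\item Adding an auxiliary point $z$ with every distance $\|z-p_j\|$ specified requires knowing those distances in every solution. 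For a generic hard instance this amounts to encoding (part of) the solution, which is the ``leaking'' problem you name.
\item The ``rigid frame'' fallback runs into the same issue.
\item Appealing to an anchor vertex in Saxe's planar reduction is only a hope; no such vertex is exhibited.
\end{itemize}

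\textbf{The missing idea.} Choose the source problem so that some point is at the same distance from every other point, whatever the configuration: the center of a circle. The paper reduces from Saxe's one-dimensional problem: a connected graph $G$ on $n$ vertices with edge weights $w(uv)\in\{1,2,3,4\}$, to be realized on a line.
\begin{itemize}
\item It wraps the line onto a circle of radius $r$, using central angle $\alpha$ per unit length, and gives each edge $uv$ the squared chord length $4r^2\sin^2(w(uv)\alpha/2)$.
\item Since $G$ is connected with weights at most $4$, a line realization spans length less than $4n$. Hence for $\alpha<\pi/(2n)$ the wrapping never closes up, and circle realizations correspond exactly to line realizations.
\item Taking $\alpha=2\arcsin(1/(2n))$ and $r=2^4n^4$ makes every squared chord an integer of size $\Oh(n^6)$, via the multiple-angle formulas in $\sin(\alpha/2)=1/(2n)$. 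This is what preserves strong hardness.
\item Finally, add $\Oh(n^2/\varepsilon)$ new points with pairwise distance $0$ and squared distance $r^2$ to every original point. They are forced to coincide at a single point $p$, and the original points are forced onto the circle of radius $r$ around $p$.
\end{itemize}
These new points are exactly the ``vertex with a fully specified row'' you were looking for, and their rows reveal nothing about the solution. All unspecified entries lie in the original $n\times n$ block, which gives the required density bound.
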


The proof of the theorem is an adaptation of the classic complexity result of  Saxe~\cite{saxe1979embeddability}.   
It also underscores that the general-position assumption employed by Berger, Kleinberg, and Leighton \cite{BergerKL99} is essential for their polynomial-time algorithm.

\subsubsection{Distance from Triviality I}
Our first main algorithmic result  shows that imposing even a mild structural constraint on the unspecified entries---beyond mere numerical  sparsity---makes the problem tractable.
Our notion of density of a matrix is expressed by {\em forbidden $t$-block patterns}.

\begin{definition}[$t$-block pattern]
{\em 	For an $n\times n$ matrix $M$ and disjoint sets $I_1,I_2\subseteq [n]$, we define by $M[I_1,I_2]$ the submatrix of $M$ indexed by the rows in $I_1$ and the columns in $I_2$. We say that $M[I_1,I_2]$ is a {\em $t$-block pattern} if it is a $t\times t$ submatrix of $M$ where every entry is unspecified.  We say that $M$ {\em excludes} a $t$-block pattern if there is no $I_1$ and $I_2$ such that $M[I_1,I_2]$ is a $t$-block pattern. See 
%\Cref{fig:partial-edm-3-block}.
top of \Cref{fig:partial-edm-3-block-with-graphc}.
}
\end{definition}
 
%\usetikzlibrary{calc,positioning,arrows.meta}
\tikzset{
  vertex/.style     ={circle, draw, minimum size=7mm, inner sep=0pt},
  normaledge/.style ={line width=0.7pt},
  rededge/.style    ={line width=1.3pt, red}
}
\newcommand{\Edge}[3][]{%
  \pgfmathtruncatemacro{\A}{#2}%
  \pgfmathtruncatemacro{\B}{#3}%
  \draw[#1] (v\A) -- (v\B);
}

\begin{figure}[ht]
\begin{nolinenumbers}
\centering
\begin{subfigure}{0.42\textwidth}
\centering
\[
\left[
\begin{array}{ccccccccc}
0 & \ast & \ast & \ast & \ast & 2 & \ast & \ast & \ast \\
\ast & 0 & 1 & \ast & \ast & \ast & \ast & \ast & \ast \\
\cellcolor{red!30}\ast & 1 &  0 & \ast & \cellcolor{red!30}\ast & \ast & \cellcolor{red!30}\ast &  2 &  \ast \\
\ast & \ast & \ast & 0 & 1 & \ast & \ast & \ast & 2 \\
\ast & \ast & \ast & 1 & 0 & 1 & \ast & \ast & \ast \\
2 & \ast & \ast & \ast & 1 & 0 & 1 & \ast & \ast \\
\ast & \ast & \ast & \ast & \ast & 1 & 0 & \ast & \ast \\
\cellcolor{red!30}\ast & \ast & 2 & \ast & \cellcolor{red!30}\ast & \ast & \cellcolor{red!30}\ast &  0 &  \ast \\
\cellcolor{red!30}\ast & \ast &  \ast & 2 & \cellcolor{red!30}\ast & \ast & \cellcolor{red!30}\ast & \ast &  0
\end{array}
\right]
\]
\phantomcaption	
\end{subfigure}

\begin{subfigure}{0.44\textwidth}
\centering
\begin{tikzpicture}[scale=0.6]
  \def\R{3.2}
  \foreach \i/\ang in {1/90,2/50,3/10,4/-30,5/-70,6/-110,7/-150,8/-190,9/-230}
    \node[vertex] (v\i) at ({\R*cos(\ang)},{\R*sin(\ang)}) {\i};

  % edges from numeric (non-*) entries
  \Edge[normaledge]{1}{6}
  \Edge[normaledge]{2}{3}
  \Edge[normaledge]{3}{8}
  \Edge[normaledge]{4}{5}
  \Edge[normaledge]{4}{9}
  \Edge[normaledge]{5}{6}
  \Edge[normaledge]{6}{7}
\end{tikzpicture}
\phantomcaption
\label{fig:partial-edm-3-blocka}
\end{subfigure}
\hfill
% (b) RIGHT: graph from * entries (unchanged)
\begin{subfigure}{0.44\textwidth}
\centering
\begin{tikzpicture}[scale=0.6]
  \def\R{3.2}
  \foreach \i/\ang in {1/90,2/50,3/10,4/-30,5/-70,6/-110,7/-150,8/-190,9/-230}
    \node[vertex] (v\i) at ({\R*cos(\ang)},{\R*sin(\ang)}) {\i};

  % black edges: * entries (excluding highlighted ones)
  \Edge[normaledge]{1}{2}
  \Edge[normaledge]{1}{4}
  \Edge[normaledge]{1}{5}
  \Edge[normaledge]{1}{7}
  \Edge[normaledge]{2}{4}
  \Edge[normaledge]{2}{5}
  \Edge[normaledge]{2}{6}
  \Edge[normaledge]{2}{7}
  \Edge[normaledge]{2}{8}
  \Edge[normaledge]{2}{9}
  \Edge[normaledge]{3}{4}
  \Edge[normaledge]{3}{6}
  \Edge[normaledge]{3}{9}
  \Edge[normaledge]{4}{6}
  \Edge[normaledge]{4}{7}
  \Edge[normaledge]{4}{8}
  \Edge[normaledge]{5}{7}
  \Edge[normaledge]{6}{8}
  \Edge[normaledge]{6}{9}
  \Edge[normaledge]{8}{9}

  % red edges: highlighted star entries
  \Edge[rededge]{1}{3}
  \Edge[rededge]{3}{5}
  \Edge[rededge]{3}{7}
  \Edge[rededge]{1}{8}
  \Edge[rededge]{5}{8}
  \Edge[rededge]{7}{8}
  \Edge[rededge]{1}{9}
  \Edge[rededge]{5}{9}
  \Edge[rededge]{7}{9}
\end{tikzpicture}
\phantomcaption
\label{fig:graph-from-matrixb}
\end{subfigure}

\caption{Top: A \( 9 \times 9 \) partial matrix $M$ that  excludes $4$-block pattern. It contains a $3$-block pattern formed by rows \( \{3, 8, 9\} \) and columns \( \{1, 5, 7\} \). Bottom Left: the graph $G$ underlying the partial EDM matrix $M$.
% from \Cref{fig:partial-edm-3-block}.  
 The adjacencies of $G$ are formed by the unspecified entries in the matrix. Bottom Right: the complement graph  $\overline{G}$  induced by the unspecified entries of $M$. The  red edges of $\overline{G}$  form $K_{3,3}$, however $\overline{G}$ does not contain $K_{4,4}$.}
\label{fig:partial-edm-3-block-with-graphc}
\end{nolinenumbers}
\end{figure}
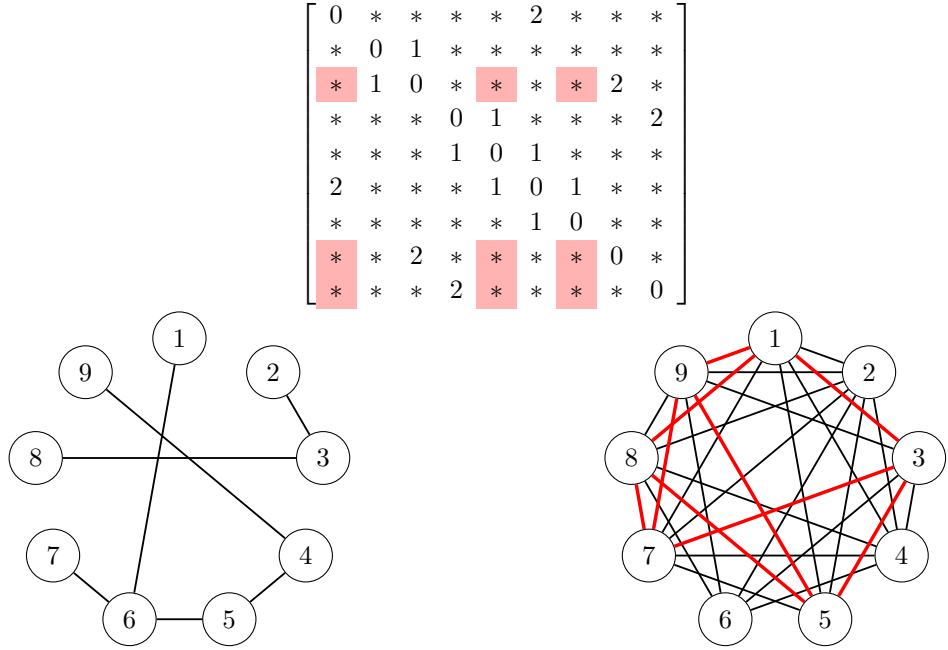

\begin{example}
Consider a partial matrix $M$. 
$M$ excludes a 1-block pattern if and only if it is fully specified;
  if $M$ has at most $\Delta$ unspecified entries in total, then $M$ excludes a $\lfloor\sqrt{\Delta+1}\rfloor$-block pattern;
  if each row of $M$ contains at most $\Delta$ unspecified entries, then $M$ excludes a $(\Delta+1)$-block pattern.
\end{example}

We are now ready to state our first tractability result. By {\em solving} an instance of {\dedmcompshort}, we mean deciding\footnote{As is standard in Computational Geometry, all our algorithms operate under the \emph{real RAM} computational model, assuming that basic operations over real numbers can be executed in unit time.} whether the answer is yes or no. We also say that an instance $(M,d)$  is {\em equivalent} to an instance $(M',d')$ if solving both instances leads to the same answer. 
 
\begin{restatable}[Compression for $K_{t,t}$-free  complements]{theorem}{compressionThm}
\label{thm:compression}
There is a polynomial-time algorithm that, given an instance
\((M,d)\) of \dedmcompshort{} in which \(M\) excludes a \(t\)-block pattern
for some \(t \ge 1\),
either 
%(i) 
solves the problem, or 
%(ii)
 outputs an equivalent
instance \((M',d)\), such that $M'$ is a principal submatrix of $M$ with  
$(d+1)^{\mathcal{O}(t^{2})}$ rows and columns. 
\end{restatable}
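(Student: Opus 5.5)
The approach is to find a large fully specified principal submatrix, realize it rigidly, and then use it as a coordinate frame that pins down every other point. Let $G$ be the graph on $[n]$ whose edges are the specified pairs. Excluding a $t$-block pattern means the complement $\overline{G}$ contains no $K_{t,t}$ whose two sides are disjoint. The first step is to find a clique $C$ of $G$ of size $r=(d+1)\cdot f(t)$, that is, a fully specified principal submatrix. By Kővári--Sós--Turán, $\overline{G}$ has $o(n^2)$ edges, so $G$ is very dense. A Ramsey-type argument then gives such a clique greedily whenever $n \ge (d+1)^{\mathcal{O}(t^2)}$. Concretely, I would repeatedly pick a vertex and restrict to its $G$-neighbourhood. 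Because of the $K_{t,t}$-freeness, all but fewer than $t$ vertices have high $G$-degree inside any large set, so the process runs long enough. If $n$ is below the bound, we output $M$ itself.

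Next we test whether $M[C]$ is realizable in $\mathbb{R}^d$ using the Cayley--Menger / Gram criterion; if it is not, we answer no. Otherwise we compute an affine basis. If the realization of $C$ affinely spans a $k$-dimensional flat, pick $k+1$ affinely independent points $B\subseteq C$. Since $r$ is large, we can in fact choose $C$ to contain many disjoint such bases, or many copies of points, by pigeonhole.

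The key step is \emph{rigidity propagation}. A vertex $v$ with at least $d+1$ specified distances to an affinely spanning subset of $C$ has its position determined up to reflections through the flat of $C$. More precisely, its projection onto $\mathrm{aff}(C)$ is determined, as is its distance to that flat. By $K_{t,t}$-freeness, at most $t-1$ vertices outside $C$ miss $t$ or more entries towards $C$. So if $|C| \ge (d+1)+t$ and $C$ contains $t$ disjoint affine bases, every vertex $v$ except fewer than $t$ exceptional ones is adjacent to an entire affine basis of $C$, and its "shadow" data (projection plus height) is forced. We then verify consistency of these data against all entries of $M[C\cup\{v\}]$, answering no on failure. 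For two such vertices $u,v$ with $M_{uv}$ specified, the distance depends only on their heights and the relative orientation of their orthogonal components. When $C$ is full-dimensional ($k=d$), all such points are fully determined and every specified entry among them can be checked directly, so those vertices can be deleted.

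The hard case is when $\dim\mathrm{aff}(C)<d$, because the orthogonal components of the other points remain free. Here I would iterate. Add a vertex with nonzero height to $C$, increasing the dimension, and re-find a large clique inside the common neighbourhood; this happens at most $d$ times. Each round loses a factor of $(d+1)^{\mathcal{O}(t)}$ in size, which gives the $(d+1)^{\mathcal{O}(t^2)}$ bound after accounting for the $t$ levels of the Ramsey argument. Vertices that are determined relative to a full-dimensional frame are removed after their constraints have been checked. The remaining kernel consists of the frame together with the exceptional vertices collected at each level, and we return this as the principal submatrix $M'$.

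The main obstacle is making removal safe: any realization of $M'$ must extend to the deleted vertices. The deleted points' positions are forced relative to the frame and were verified against every specified entry, including the entries between pairs of deleted vertices and the entries to kept exceptional vertices. So we must also keep enough frame vertices to make the exceptional vertices' constraints rigid. Bounding this carefully is where the exponent $\mathcal{O}(t^2)$ comes from.
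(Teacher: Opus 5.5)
Your plan deletes vertices \emph{outside} the frame. The step that makes this deletion safe is the one you leave open, and several intermediate claims are false as stated.

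\textbf{Counting exceptional vertices.} $K_{t,t}$-freeness does not imply that at most $t-1$ vertices miss $\ge t$ entries towards $C$. Vertices missing pairwise disjoint $t$-subsets of $C$ form only disjoint stars in $\overline{G}$. Even with two disjoint bases $\{a_1,a_2\},\{b_1,b_2\}$, four vertices missing $(a_i,b_j)$ give a $C_4$-free $\overline{G}$ with $4>t-1$ exceptions. The correct bound comes from pigeonholing over tuples of one missed element per basis, which gives $(t-1)(d+1)^t$. Also, a large $C$ need not contain $t$ disjoint bases spanning $\mathrm{aff}(C)$; for example, all but one point may be collinear.

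\textbf{Entries to exceptional vertices.} This is the more serious problem. You cannot ``verify'' the entries between deleted vertices and kept exceptional vertices, because an exceptional vertex's position is not determined by the frame. A realization of $M'$ may therefore place such a vertex $e$ where $m_{ew}$ fails for some deleted $w$. You would need an extra device, which the proposal does not contain. For instance, for each kept non-rigid $u$ you could keep an affine basis of the forced positions of its deleted neighbours and check that the resulting linear system for $p_u$ is consistent.

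\textbf{Lower-dimensional frames.} As written, you only delete vertices pinned by a full-dimensional frame. If the instance is realizable on a line with $d=2$, no vertex ever has nonzero height and nothing is deleted. If only a few vertices have nonzero height, adding one to $C$ gives a frame without $t$ disjoint spanning bases, so ``re-find a large clique'' stalls.

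\textbf{Size bound.} Repeated Ramsey steps compound exponents rather than multiplying by $(d+1)^{\mathcal{O}(t)}$ per round. So the $(d+1)^{\mathcal{O}(t^2)}$ bound is not derived.

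\textbf{The missing idea: delete one vertex \emph{inside} the large clique at a time.} The paper's procedure is:
\begin{enumerate}
\item Take a clique $X$ with $|X|=(d+1)t+(t-1)(d+1)^{t+1}+1$, which exists by Ramsey since $\overline{G}$ has no $K_{2t}$.
\item Check that $M[X]$ and every $M[\{v\}\cup(N_G(v)\cap X)]$ are $d$-embeddable.
\item Take nested metric bases $X_i$ of $X\setminus\bigcup_{j<i}X_j$ for $i\le t$.
\item Let $Y$ be the set of vertices missing an entry in every $X_i$, so $|Y|\le (t-1)(d+1)^t$, and take metric bases $Z_y$ of $N_G(y)\cap X$ for $y\in Y$.
\item Pick $w\in X$ outside all of these sets.
\end{enumerate}
Given any realization of $M-w$, put $p_w$ at the point forced by $X_1$. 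Since $w$ lies in the affine hull of each relevant basis, its position does not depend on the unused dimensions. For every neighbour $v$ of $w$, an already-checked clique forces $\|p_v-p_w\|_2^2=m_{vw}$: either $X_i\cup\{v,w\}$, or $\{v\}\cup(N_G(v)\cap X)$ with basis $Z_v$. This avoids both the free orthogonal components and the constraints towards non-rigid kept vertices. Iterating until fewer than $\binom{2t+|X|-2}{2t-1}$ vertices remain gives the $(d+1)^{\mathcal{O}(t^2)}$ bound.
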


The algorithm in \Cref{thm:compression} compresses the input instance in polynomial time to an
equivalent matrix whose size is independent of~$n$.  This compression, however, does {not} guarantee that the numerical entries of the reduced
matrix depend only on the parameters $t$ and~$d$; hence it is {not} a
kernel in the strict parameterized complexity sense~ \cite{cygan2015parameterized,kernelizationbook19}. 

Moreover, 
\Cref{thm:compression} by itself does not yield a complete algorithm for solving \dedmcompshort{} and we need 
 the following algorithm, which relies on  tools from real algebraic geometry.

\begin{restatable}{theorem}{exactAlgoCompletion}
\label{thm:exactAlgoCompletion}
There is an algorithm that, given an $n\times n$ partial matrix $M$ and $d\in\mathbb{N}$, runs in time $2^{\mathcal{O}(n^{2}\log n)}$ and correctly decides whether $(M,d)$ is a yes-instance of \dedmcompshort{}.
\end{restatable}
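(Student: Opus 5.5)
We reduce the question to the decision problem for the existential theory of the reals and apply a standard quantifier-elimination or sign-condition algorithm. The running time of such algorithms (Renegar; Basu--Pollack--Roy) is $s^{k+1}\cdot D^{\mathcal{O}(k)}$ arithmetic operations for an existential sentence with $k$ variables, $s$ polynomial constraints, and maximum degree $D$. In the real RAM model the input entries of $M$ may be treated as real coefficients; the Basu--Pollack--Roy bound holds over arbitrary real closed fields with operations counted in the coefficient ring, so it applies directly.

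The first step is to cut the dimension down. Any $n$ points lie in an affine subspace of dimension at most $n-1$. Hence, if $d\ge n-1$, the instance $(M,d)$ is equivalent to $(M,n-1)$. We may therefore assume $d'=\min(d,n-1)$, which is at most $n$. This step is essential: without it the number of variables would depend on $d$, which is unbounded with respect to $n$.

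Next we write the sentence. Introduce real variables $p_{i,\ell}$ for $i\in[n]$ and $\ell\in[d']$, which gives $k=nd'\le n^2$ variables. For every specified entry $(i,j)$ add the equation
\[
\sum_{\ell=1}^{d'}(p_{i,\ell}-p_{j,\ell})^2-M_{ij}=0.
\]
Unspecified entries impose no constraint, since any point configuration yields a completion by setting the missing entries to the realized squared distances. We also check that every specified diagonal entry is $0$ and that the specified entries are symmetric; if not, we reject immediately. The sentence is then $\exists p\colon \bigwedge_{(i,j)} F_{ij}(p)=0$. It has $s\le n^2$ polynomials of degree $D=2$ in $k\le n^2$ variables. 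By construction, this sentence is true if and only if $(M,d)$ is a yes-instance.

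Finally we bound the running time. With a single sign condition, or equivalently after combining the equations into $\sum F_{ij}^2=0$ (one polynomial of degree $4$), Renegar's or Basu--Pollack--Roy's existential decision algorithm runs in time $s^{\mathcal{O}(k)}D^{\mathcal{O}(k)}$. Substituting $s=\mathcal{O}(n^2)$, $D=\mathcal{O}(1)$ and $k\le n^2$ gives
\[
(n^2)^{\mathcal{O}(n^2)}=2^{\mathcal{O}(n^2\log n)}.
\]
Using the single-polynomial variant yields $s=1$ and time $D^{\mathcal{O}(k)}=2^{\mathcal{O}(n^2)}$, which already fits within the claimed bound.

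I expect the main obstacle to be making sure the cited algebraic algorithm is applied correctly in the real RAM setting, with real, possibly irrational, coefficients and no bit-complexity assumptions, and in particular confirming that its operation count matches the $s^{k+1}D^{\mathcal{O}(k)}$ form. The dimension-reduction step is also needed so that the variable count is $\mathcal{O}(n^2)$ regardless of $d$.
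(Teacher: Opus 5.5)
Your proof is correct, and it takes a genuinely different and more direct route than the paper.

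\textbf{The paper's route.} The paper never introduces coordinates. It keeps one indeterminate per unspecified entry of $M$, so there are at most $n^2$ variables. It then guesses, among the at most $2^n$ subsets $Y\subseteq[n]$ of size at most $d+1$, a metric basis of the hypothetical completion. For each guess it writes Blumenthal's Cayley--Menger conditions as $n^{\mathcal{O}(1)}$ atoms of degree $n^{\mathcal{O}(1)}$:
\begin{itemize}
\item sign conditions on the augmented determinants of the prefixes of $Y$;
\item vanishing of the determinants of $Y\cup\{x\}$ and $Y\cup\{x,y\}$;
\item nonnegativity and symmetry of the indeterminates.
\end{itemize}
It then calls the Basu--Pollack--Roy algorithm once per guess. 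In that encoding the dimension enters only through $|Y|\le d+1$, and since $Y\subseteq[n]$, no explicit dimension cap is needed. In your encoding the variable count is $nd'$, so the reduction to $d'=\min(d,n-1)$ is indeed necessary, as you point out.

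\textbf{What your route buys.} Your coordinate encoding avoids the Cayley--Menger characterization and the basis guessing entirely, and uses only degree-$2$ constraints. It also gives a sharper, $d$-sensitive bound of $2^{\mathcal{O}(n\min(d,n)\log n)}$, or $2^{\mathcal{O}(n\min(d,n))}$ with your sum-of-squares variant. Combined with the compression of Theorem~\ref{thm:Deltacompr}, this would lower the exponent in Corollary~\ref{cor:fptDeltacor} from $d^2\Delta^4$ to $d^2\Delta^2$, up to logarithmic factors.

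\textbf{What the paper's route buys.} Its distance-variable encoding makes the number of quantified variables equal to the number of \emph{unknown entries} rather than $nd$. This is exactly what drives the fill-in algorithm of Theorem~\ref{thm:XPAlgorithmbyFillIn}, where only $2k$ variables are quantified. The paper's proof of the present theorem is essentially a reuse of that machinery, with the full set of unspecified entries as variables.
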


 Now, by 
combining \Cref{thm:compression} with \Cref{thm:exactAlgoCompletion}, we obtain the following  \classFPT algorithm for {\dedmcompshort} with parameters $d$ and $t$.

\begin{restatable}{corollary}{fptKttCor}
\label{cor:fptKtt}
For every $d,t\in\mathbb{N}$, and $n\times n$ partial matrix \(M\) excluding a \(t\)-block pattern, 
\dedmcompshort{} is solvable  in time $2^{(d+1)^{\bigoh(t^2)}}+n^{\bigoh(1)}$. 
\end{restatable}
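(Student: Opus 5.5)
The corollary follows by composing \Cref{thm:compression} with \Cref{thm:exactAlgoCompletion}. Given an $n\times n$ partial matrix $M$ that excludes a $t$-block pattern, together with $d$, I first run the polynomial-time algorithm of \Cref{thm:compression}, which takes $n^{\bigoh(1)}$ time. If it solves the instance outright, we return its answer. Otherwise it outputs an equivalent instance $(M',d)$, where $M'$ is a principal submatrix of $M$ with $N=(d+1)^{\bigoh(t^2)}$ rows and columns. I then apply the exact algorithm of \Cref{thm:exactAlgoCompletion} to $(M',d)$. Because the two instances are equivalent, its answer is the correct answer for $(M,d)$.

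It remains to bound the running time. The second phase costs $2^{\bigoh(N^2\log N)}$. With $N=(d+1)^{c t^2}$ for some constant $c$, we get $N^2\log N\le N^3=(d+1)^{3ct^2}$, since $\log N\le N$. Hence $2^{\bigoh(N^2\log N)}=2^{(d+1)^{\bigoh(t^2)}}$, where the constant in the exponent absorbs the factor $3$ and the hidden constant of the $\bigoh$. The degenerate case $d=0$ still fits: $(d+1)^{\bigoh(t^2)}=1$, and a constant-size matrix is handled in constant time. The total running time is therefore $n^{\bigoh(1)}+2^{(d+1)^{\bigoh(t^2)}}$, as claimed.

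This argument has no real obstacle, because all the substantive work is inside \Cref{thm:compression} and \Cref{thm:exactAlgoCompletion}. The only points that need care are that the compressed instance keeps the same dimension $d$, which \Cref{thm:compression} guarantees, and that the exponent arithmetic correctly absorbs the $\log N$ factor, which it does as shown above.
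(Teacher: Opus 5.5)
Your proposal is correct and is exactly the paper's argument: the corollary comes from running the polynomial-time compression of \Cref{thm:compression} and then the $2^{\bigoh(N^2\log N)}$ exact algorithm of \Cref{thm:exactAlgoCompletion} on the resulting $N\times N$ instance with $N=(d+1)^{\bigoh(t^2)}$. Your estimate $N^2\log N\le N^3$ fills in the exponent arithmetic that the paper leaves implicit. The aside on $d=0$ is unnecessary, since the paper takes $d$ to be a positive integer, and that case is trivial anyway because all points must coincide.
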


\begin{remark}[Generality of \Cref{thm:compression} and \Cref{cor:fptKtt}]
{
 A common way to capture the structure of a partial matrix, see e.g.  \cite{alfakih2018euclidean,BergerKL99}, is to encode it by an underlying  graph $G$ as follows.  
Let $G$ be a graph with vertex set  $V(G)=\{1,\ldots,n\}$.
 An $n \times n$ symmetric hollow partial matrix $M = (m_{ij})$  is said to be a \emph{$G$-partial} matrix if the entry $m_{ij}$ is specified\footnote{In this paper, the specified entries of $M$ will always be in $\mathbb{R}_{\geq 0}$.} if and only if $ij \in E(G)$. We refer to $G$ as the \emph{underlying graph} of~$M$, see \Cref{fig:partial-edm-3-block-with-graphc}.
By studying partial matrices through a graph-theoretic lens motivated by this definition, we can identify direct correspondences between their properties and those of their underlying graphs. The condition of \Cref{obs:dense} also translates naturally to the property that the average vertex degree of $\overline{G}$ is at most $\varepsilon$. 
  For instance, saying that a partial matrix $M$  {excludes a $t$-block pattern} is equivalent to saying that the complement $\overline{G}$ of its underlying graph $G$ does not contain the  complete bipartite graph $K_{t,t}$ as a subgraph.

   Although \Cref{obs:dense} shows that \dedmcompshort{} remains intractable even on matrices where the underlying graph's complement has  constant average degree, \Cref{thm:compression} and \Cref{cor:fptKtt} demonstrate that imposing $K_{t,t}$-freeness on the complement of the underlying graph (i.e., excluding a  $t$-block pattern in the input matrix) radically alters the complexity landscape and leads to tractability for a large class of instances.

Many widely studied graph classes with numerous applications can be expressed as $K_{t,t}$-free graphs for a suitable $t$. In particular, planar graphs are $K_{3,3}$-free by Kuratowski's theorem, graphs excluding a fixed minor $H$ are $K_{t,t}$-free for some $t$ depending on $H$, graphs of bounded expansion and nowhere dense graphs are $K_{t,t}$-free for an appropriate $t$, and any graph of maximum degree $\Delta$ (or more generally, bounded degeneracy) is $K_{t,t}$-free for $t=\Delta+1$ or $t$ equal to one plus the degeneracy.

}	
\end{remark}

In the rest of the paper, we freely switch between the graph and matrix viewpoints when describing structural properties of partial matrices.

Notice that in the more restricted case when every row of $M$ contains at most $\Delta$ unspecified entries, the maximum vertex degree of $\overline{G}$ is at most $\Delta$. In this case, we show that the size of the compression provided by \Cref{thm:compression} can be refined as follows. 
\begin{restatable}
%[{\rm{$\star$}}]
%[Compression for degree-bounded complements]
{theorem}{DeltaCompr}
\label{thm:Deltacompr}
There is a polynomial-time algorithm that, given an instance \((M,d)\) of \dedmcompshort{} such that  every row of \(M\) has at most \(\Delta\) unspecified entries,  either
%(i) 
solves the problem, or
%  (ii)
   outputs an equivalent instance 
   \((M',d)\), such that $M'$ is a principal $n'\times n'$ submatrix of $M$, where     \(n'\leq (d+1)(\Delta+1)^{2}\).
\end{restatable}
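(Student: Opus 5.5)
The plan is to find a large clique $K$ in the underlying graph $G$ (a fully specified principal submatrix) and use it as an affine ``anchor''. Every row has at most $\Delta$ unspecified entries, so $\overline{G}$ has maximum degree at most $\Delta$. A greedy procedure therefore yields a clique in $G$ of size at least $n/(\Delta+1)$: pick a vertex, discard its at most $\Delta$ non-neighbours, and repeat.

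If $n\le (d+1)(\Delta+1)^2$ we output $M$ itself. Otherwise we build a clique $K$ with $|K|>(d+1)(\Delta+1)$. We check in polynomial time, via the Cayley--Menger/Schoenberg Gram-matrix criterion, whether $M[K]$ is an EDM of embedding dimension at most $d$. If it is not, we answer no. Otherwise we fix a realization of $M[K]$, which is unique up to congruence. Every vertex $v\notin K$ is adjacent to all but at most $\Delta$ vertices of $K$, so its position is constrained by many known distances.

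The core step is to choose an affinely spanning subset $B\subseteq K$ such that every outside vertex $v$ sees a full affine basis of $\mathrm{aff}(K)$ among its neighbours in $K$. Once that holds, $v$'s position is determined up to reflection through $\mathrm{aff}(K)$ (or up to an orbit in the orthogonal complement when $\dim\mathrm{aff}(K)<d$), and its distances to all of $K$ are then forced.

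To obtain $B$, I would take $\Delta+1$ pairwise disjoint subsets $B_1,\dots,B_{\Delta+1}\subseteq K$, each an affine basis of $\mathrm{aff}(K)$. Each $B_i$ has at most $d+1$ points, so these sets use at most $(d+1)(\Delta+1)$ vertices. Any vertex $v$ misses at most $\Delta$ vertices of $K$, so $v$ is fully adjacent to at least one $B_i$. Extracting disjoint affine bases is easy when the points of $K$ are in general position. Degeneracies such as coincident points make it a matroid-packing question, and this is where I expect the main obstacle.

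I would handle the obstacle via an affine-rank argument. If $\mathrm{aff}(K)$ cannot accommodate $\Delta+1$ disjoint bases, then some affine subspace of lower dimension contains all but at most $\Delta$ points of $K$ (matroid base packing, Edmonds). In that case we recurse on a sub-clique living in a lower-dimensional flat. The dimension drops at most $d$ times, which keeps the budget within $(d+1)(\Delta+1)$ anchor vertices.

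Next comes the reduction itself. Let $U=\bigcup_i B_i$. Because the specified distances from each outside $v$ to $K$ pin $v$ down relative to $\mathrm{aff}(K)$ up to isometries fixing $K$, I would check for each $v$ in polynomial time that its distances to $K$ are consistent with a single position relative to $U$. If some $v$ fails, we answer no. Given consistency, we delete all vertices of $K\setminus U$: any realization of the remaining instance extends uniquely, since the deleted points are determined affinely by $U$ and the distances from each $v$ to them are forced.

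This produces an instance with at most $|U|+(n-|K|)$ vertices. We iterate: as long as the current instance has more than $(d+1)(\Delta+1)^2$ vertices, the greedy clique has size more than $(d+1)(\Delta+1)$, so deleting $K\setminus U$ strictly shrinks the instance. The procedure terminates in polynomial time with $n'\le (d+1)(\Delta+1)^2$ and with $M'$ a principal submatrix of $M$, as the statement requires.

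Besides the packing step, correctness of the ``forced distances'' claim needs care in the case $\dim\mathrm{aff}(K)<d$. There the positions of outside vertices have a rotational freedom around $\mathrm{aff}(K)$. I would argue that this freedom does not affect the distances from $v$ to points inside $\mathrm{aff}(K)$, which is exactly what makes deleting $K\setminus U$ safe.
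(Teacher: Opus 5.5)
Your proposal is correct in outline and follows the paper's skeleton closely:
\begin{itemize}
\item the same size threshold;
\item a greedy clique $K$ with $|K|>(d+1)(\Delta+1)$;
\item an embeddability check on $M[K]$ and on each $\{v\}\cup(N_G(v)\cap K)$ (read your ``consistency'' test this way, since $v$ need not see all of $U$);
\item $\Delta+1$ pairwise disjoint anchor sets of size at most $d+1$, with pigeonhole giving every other vertex a fully adjacent anchor;
\item the fact that distances from $v$ to points in that anchor's affine hull are forced.
\end{itemize}
The real difference is how the anchors are built. You demand that every $B_i$ be an affine basis of $\mathrm{aff}(K)$, and that is what creates your packing obstacle. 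The paper never needs this. It peels: $X_i$ is a metric basis of $X\setminus\bigcup_{j<i}X_j$. The hulls of the $X_i$ are nested and may shrink, but any vertex $w$ that is never peeled lies in the hull of every $X_i$. That is the only property the forcing argument uses. Peeling therefore removes your ``main obstacle'' with no matroid theory, needing only repeated calls to \Cref{prop:realization}. It is also compatible with your bulk deletion of all unpeeled vertices, although the paper removes one vertex per round.

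Your Edmonds detour can be made to work, but as written it is misstated. Failure to pack $\Delta+1$ bases does not give a lower-dimensional flat that misses at most $\Delta$ points of $K$. It gives a flat $F$ with $|K\setminus F|<(\Delta+1)(r(K)-r(F))$, where $r$ denotes affine rank ($\dim\mathrm{aff}+1$).

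For a counterexample, take $\Delta=1$, $d=2$, and let $K$ consist of $m\ge 4$ copies of $(0,0)$ together with $(1,0)$, $(0,1)$, $(1,1)$. Each affine basis contains at most one copy of the origin, so two disjoint bases would need four non-origin points. Yet every line misses at least two points of $K$.

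Your recursion nevertheless survives. From $|K|>(\Delta+1)r(K)$ and the Edmonds inequality you get $|K\cap F|>(\Delta+1)r(F)$, so this invariant is preserved while the rank strictly drops. You would also have to invoke the polynomial-time matroid-union algorithm to actually produce $F$.

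In short, your route yields the stronger structure of anchors that all span one common flat, at the cost of base packing and a recursion. The paper's peeling uses exactly the weaker property the argument needs and is simpler to compute.
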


Combined with \Cref{thm:exactAlgoCompletion}, we have the following algorithm.
\begin{restatable}{corollary}{fptDeltacor}
\label{cor:fptDeltacor}
For every $d,\Delta\in\mathbb{N}$, and $n\times n$ partial matrix \(M\) such that
every row of \(M\) has at most \(\Delta\) unspecified entries, \dedmcompshort{} is solvable  in time $2^{\bigoh(d^2\cdot \Delta^4\cdot (\log d+\log \Delta))}+n^{\bigoh(1)}$. 
\end{restatable}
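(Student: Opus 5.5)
The corollary follows by composing \Cref{thm:Deltacompr} with \Cref{thm:exactAlgoCompletion}. Given $(M,d)$ with at most $\Delta$ unspecified entries per row, we first run the polynomial-time algorithm of \Cref{thm:Deltacompr}. This takes $n^{\bigoh(1)}$ time. It either solves the instance outright, in which case we are done, or it outputs an equivalent instance $(M',d)$. In the latter case $M'$ is a principal $n'\times n'$ submatrix of $M$ with $n'\le (d+1)(\Delta+1)^2$. Since the two instances are equivalent, it suffices to decide $(M',d)$.

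We then apply the algorithm of \Cref{thm:exactAlgoCompletion} to $(M',d)$. Its running time is $2^{\bigoh(n'^2\log n')}$. Substituting $n'\le (d+1)(\Delta+1)^2$ gives
\[
n'^2 \le (d+1)^2(\Delta+1)^4 = \bigoh(d^2\Delta^4) \qquad\text{and}\qquad \log n' \le \log(d+1)+2\log(\Delta+1) = \bigoh(\log d+\log\Delta).
\]
The exponent is therefore $\bigoh(d^2\cdot\Delta^4\cdot(\log d+\log\Delta))$. For this we assume $d,\Delta\ge 1$, so that the $+1$ shifts are absorbed in the $\bigoh$. In the degenerate case $\Delta=0$ the matrix is fully specified, and the problem is a polynomial-time EDM test, so it falls within the $n^{\bigoh(1)}$ term.

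Adding the polynomial compression time to the exponential time on the compressed instance gives the claimed bound $2^{\bigoh(d^2\cdot\Delta^4\cdot(\log d+\log\Delta))}+n^{\bigoh(1)}$. There is no real obstacle here; all the work lies in \Cref{thm:Deltacompr}, which we take as given. The only point needing care is that the exact algorithm runs in the real RAM model. Its cost therefore depends only on the dimension $n'$ and not on the magnitudes of the entries that $M'$ inherits from $M$, so the fact that the compression is not a kernel does no harm.
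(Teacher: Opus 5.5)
Your proposal is correct and follows the paper's argument: run the polynomial-time compression of \Cref{thm:Deltacompr}, then apply the $2^{\bigoh(n'^2\log n')}$ algorithm of \Cref{thm:exactAlgoCompletion} to the resulting instance with $n'\le (d+1)(\Delta+1)^2$. Your explicit substitution and your handling of the degenerate case $\Delta=0$ fill in arithmetic that the paper leaves implicit.
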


\subsubsection{Distance from Triviality II}

Consider the case where  all specified entries of $M$ lie in one
{fully specified principal submatrix} $M'$.
In this case, \dedmcompshort{} is polynomial-time solvable: one first determines
the minimum dimension $d$ for which $M'$ is isometrically embeddable in
$\mathbb{R}^{d}$ and then assigns values to the remaining entries of $M$ so
that the new points lie at appropriate distances.
So, a natural definition of distance to this notion of triviality is the least number  of fully specified principal submatrices such that their entries cover all specified entries of $M$. In terms of graphs, if all specified entries of $M$ lie in one
{fully specified principal submatrix}, then the underlying graph $G$ consists of a clique and isolated vertices. Thus all edges of $G$ are covered by one clique. So, when we say that the  specified entries in $M$ can be covered by $k$ fully specified principal submatrices, this is equivalent to saying that the underlying graph $G$ admits an \emph{edge clique cover} of size~$k$, that is,
the edges of $G$ can be covered by at most $k$ cliques.

\begin{remark}
%[Incomparability of edge clique cover size and $K_{t,t}$-freeness]
{  A small edge clique cover does \emph{not} imply that
$\overline{G}$ is $K_{t,t}$-free for any fixed~$t$.
Indeed, the disjoint union of two copies of $K_{n}$ can be covered by just two cliques, yet its complement contains $K_{n,n}$.
Thus, the edge clique cover parameter is \emph{incomparable} with the assumption of $K_{t,t}$-freeness, which motivates
%This motivates 
the parameterization by edge clique cover.
% and leads to our next result.
}
\end{remark}

\begin{restatable}
%[Compression for edge clique cover]
{theorem}{CoverCompr}
\label{thm:Covercompr}
There is a polynomial-time algorithm that, given an instance \((M,d)\) of \dedmcompshort{} in which \(M\) is an $n\times n$ \(G\)-partial matrix and the graph \(G\) is given together with an edge clique cover \(\mathcal{C}\) of size \(k\),
%(i) 
 either solves the problem, or
%  (ii)
   outputs an equivalent instance 
  \((M',d)\), such that $M'$ is an $n'\times n'$ principal submatrix of $M$, where  $n'\le (d+1)k^{2}$.
  
\end{restatable}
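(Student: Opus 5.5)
Let $\mathcal{C}=\{C_1,\dots,C_k\}$ be the given cover. The plan is to keep, inside each clique, only a small affinely spanning core plus the vertices that link it to other cliques, and to argue that all other vertices are forced or can always be placed.

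\emph{Step 1 (local checks).} For each $i$, the principal submatrix $M[C_i]$ is fully specified. Test in polynomial time whether it is an EDM realizable in $\mathbb{R}^d$ (Cayley--Menger/Gram criterion). If some test fails, answer no. Otherwise the realization $P_i$ of $C_i$ is unique up to congruence, and we compute its affine dimension $r_i\le d$. We also pick a set $B_i\subseteq C_i$ of at most $d+1$ points whose realization affinely spans $\mathrm{aff}(P_i)$. Every point of $C_i$ is then determined, relative to $B_i$, by its distances to $B_i$; these are its affine coordinates.

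\emph{Step 2 (vertex classes).} For each vertex $v$, let $S(v)=\{i : v\in C_i\}$. Vertices with $|S(v)|=0$ are isolated in $G$ and can be deleted. Vertices with $|S(v)|=1$ that are not in any $B_i$ are deleted as well. For every pair $i<j$, the set $C_i\cap C_j$ is a clique contained in both. We pick an affine basis $B_{ij}\subseteq C_i\cap C_j$ of at most $d+1$ vertices for its realization. Let $M'$ be the principal submatrix on
\[
X=\bigcup_i B_i\cup\bigcup_{i<j} B_{ij}.
\]
Then $|X|\le (d+1)k+(d+1)\binom{k}{2}\le (d+1)k^2$.

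\emph{Step 3 (equivalence).} One direction is trivial, since $M'$ is a principal submatrix of $M$. For the converse, take a realization $Q$ of $M'$ in $\mathbb{R}^d$. For each $i$, the points $Q(B_i)$ satisfy the distances of $M[B_i]$. So there is an isometry $\phi_i$ from $\mathrm{aff}(P_i)$ onto $\mathrm{aff}(Q(B_i))$ mapping the realization of $B_i$ to $Q(B_i)$. Place each $v\in C_i$ at $\phi_i(p_v)$. This correctly realizes all distances inside $C_i$, because affine maps preserve affine coordinates and isometries preserve distances.

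The main obstacle is \textbf{consistency}: a vertex $v$ lying in several cliques $C_i, C_j$ must receive the same position from $\phi_i$ and from $\phi_j$. The realization of $C_i\cap C_j$ sits inside both $P_i$ and $P_j$. Moreover, $Q$ fixes $B_{ij}\subseteq X$, and $\phi_i$ and $\phi_j$ both map the realization of $B_{ij}$ onto $Q(B_{ij})$. Both maps are therefore isometries agreeing on an affine basis of $\mathrm{aff}(C_i\cap C_j)$, so they agree on all of $\mathrm{aff}(C_i\cap C_j)$. Every $v\in C_i\cap C_j$ lies in that hull, hence gets a single well-defined position. Vertices in three or more cliques are handled pairwise by the same argument.

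Every edge of $G$ lies in some $C_i$, so all specified entries are satisfied. The unspecified entries are filled by the resulting point distances, giving an EDM realizable in $\mathbb{R}^d$.

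Some care is needed so that $B_{ij}$ is chosen from an arbitrary realization of $C_i\cap C_j$. This is fine, because realizations of a fully specified matrix are unique up to congruence, so affine dependencies are intrinsic. All steps are polynomial (Gram matrix ranks), which gives the claimed algorithm.
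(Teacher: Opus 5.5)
Your proof is correct, but it takes a different route from the paper's. The paper uses iterated irrelevant-vertex deletion. While $n>(d+1)k^2$ and there are no isolated vertices, pigeonhole gives a clique $C_i$ with $|C_i|>(d+1)k$. The paper picks metric bases $X_j$ of $M[C_i\cap C_j]$ for every $j$, including $j=i$. It then shows that any $w\in C_i\setminus\bigcup_j X_j$ can be re-inserted into every $d$-realization of a completion of $M-w$: place $w$ via the basis of $C_i$, and check each neighbour $v\in C_j$ through the basis of $C_i\cap C_j$. Finally it deletes $w$ and repeats.

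You instead fix, once and for all, bases $B_i$ of the cliques and $B_{ij}$ of the pairwise intersections. You keep only their union and rebuild every deleted vertex simultaneously through the per-clique isometries $\phi_i$. Your consistency check on $C_i\cap C_j$ is the global counterpart of the paper's Case~2.

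What each route buys:
\begin{itemize}
\item Yours gives an explicit, non-iterative kernel and the slightly better bound $(d+1)\binom{k+1}{2}$.
\item The paper's keeps this proof uniform with \Cref{thm:compression} and \Cref{thm:Deltacompr}. There, no cover is available to organize a global reconstruction, and only the one-vertex extension argument works.
\end{itemize}

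Two steps should be written out more carefully.
\begin{itemize}
\item You need $\phi_i(p_u)=Q(u)$ for $u\in B_{ij}$. This holds because $B_i\cup B_{ij}\subseteq C_i\cap X$, so $M'[B_i\cup B_{ij}]$ is fully specified. Hence $Q$ and $P_i$ are congruent on $B_i\cup B_{ij}$, and $p_u\in\mathrm{aff}(P_i(B_i))$.
\item $P_i$ and $P_j$ are computed independently, so $\phi_i$ and $\phi_j$ do not share a domain. Phrase consistency through the intrinsic affine coordinates of $v$ with respect to $B_{ij}$: both $\phi_i(p^{i}_v)$ and $\phi_j(p^{j}_v)$ equal the same affine combination of $Q(B_{ij})$. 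Do not phrase it as two maps agreeing on a common affine hull.
\end{itemize}
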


\begin{restatable}[{\rm{$\star$}}]{corollary}{fptECC}
\label{cor:fptECC}
For every $d,k\in\mathbb{N}$, and $n\times n$ partial matrix  \(M\) such that
the specified entries in \(M\) can be covered by at most \(k\) fully specified principal submatrices, \dedmcompshort{} is solvable  in time $2^{\bigoh(d^2\cdot k^4\cdot (\log d+\log k))}+2^{2^{\bigoh(k)}}\cdot n^{\bigoh(1)}$. 
\end{restatable}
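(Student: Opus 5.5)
The corollary's hypothesis only says that a cover by $k$ fully specified principal submatrices exists; it is not supplied with the input. Theorem~\ref{thm:Covercompr}, however, needs the cover given explicitly. The plan is to first compute an edge clique cover of size at most $k$ of the underlying graph $G$, then apply Theorem~\ref{thm:Covercompr}, and finally run Theorem~\ref{thm:exactAlgoCompletion} on the compressed matrix. The two summands in the running time come from these stages: $2^{2^{\bigoh(k)}}\cdot n^{\bigoh(1)}$ from finding the cover, and $2^{\bigoh(d^2k^4(\log d+\log k))}$ from solving the compressed instance.

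\textbf{Finding the cover.} Build $G$ from $M$, with $ij\in E(G)$ exactly when $m_{ij}$ is specified. By the equivalence noted before Theorem~\ref{thm:Covercompr}, $G$ has an edge clique cover of size at most $k$. We use the classical FPT algorithm for \textsc{Edge Clique Cover} (Gramm, Guo, H\"uffner, Niedermeier):
\begin{itemize}
\item Remove isolated vertices and isolated edges, decreasing $k$ for each isolated edge.
\item Merge \emph{twins}, i.e.\ vertices with identical closed neighbourhoods, keeping one representative per class. A cover of the reduced graph lifts to $G$ by adding each twin to every clique that contains its representative.
\item In a reduced yes-instance, each vertex is determined by the subset of the $k$ cliques containing it. Hence at most $2^k$ vertices remain.
\item Find an optimal cover of this kernel by brute force, e.g.\ by trying all assignments of vertices to subsets of $[k]$. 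This takes time $2^{2^{\bigoh(k)}}$.
\end{itemize}
If no cover of size $k$ is found, the hypothesis is violated; we can then report this, or equivalently fall back. Otherwise we lift the cover back to $G$ in polynomial time.

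\textbf{Compression.} Apply Theorem~\ref{thm:Covercompr} to $(M,d)$ with the cover $\mathcal C$, in polynomial time. It either solves the instance, or outputs an equivalent principal submatrix $M'$ of size $n'\le (d+1)k^2$.

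\textbf{Solving the compressed instance.} Run the algorithm of Theorem~\ref{thm:exactAlgoCompletion} on $(M',d)$. It takes time
\[
2^{\bigoh(n'^2\log n')}=2^{\bigoh((d+1)^2k^4\log((d+1)k^2))}=2^{\bigoh(d^2k^4(\log d+\log k))}.
\]

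Summing the three stages gives the claimed bound. The main obstacle is the cover-finding step, specifically checking that the twin-merging kernel is correct and that the cover lifts back to $G$. Everything else is direct invocation of earlier results plus arithmetic on the exponent.
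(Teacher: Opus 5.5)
Your proposal is correct and follows the paper's proof: compute an edge clique cover of size at most $k$ with the algorithm of Gramm et al.\ (Proposition~\ref{prop:ECC}), apply the compression of Theorem~\ref{thm:Covercompr}, and then solve the resulting instance on at most $(d+1)k^2$ rows with Theorem~\ref{thm:exactAlgoCompletion}. The only difference is that you unpack the Gramm et al.\ kernel, which the paper simply cites; there, the isolated-edge rule must take priority over twin merging, so that every kept representative lies in some clique when the cover is lifted back.
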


Let us next give a brief overview of the proof technique behind  \Cref{thm:compression}, \Cref{thm:Deltacompr} and \Cref{thm:Covercompr}. Say that an instance $(M,d)$ of {\dedmcompshort} is {\em efficiently $\alpha$-reducible} if there is a polynomial-time computable pair comprising a  fully specified $\alpha\times \alpha$ principal submatrix indexed by $X\subseteq [n]$ and an element $w\in X$ such that $(M,d)$ is equivalent to $(M-w,d)$. Here, $M-w$ is defined as the matrix obtained from $M$ by  removing the row and column indexed by $w$.  The element $w$ is called an {\em irrelevant} element.
Clearly as long as such a pair $X$ and $w$ can be found, we can iteratively reduce the instance by deleting $w$. To prove these three theorems, we show the following, where $G$ is the underlying graph of $M$. 
\begin{itemize}
	\item if $\overline G$ is $K_{t,t}$-free, then either the instance  is solvable in polynomial time or it is already compressed to the claimed size or it is efficiently $d^{\bigoh(t)}$-reducible; and 
	\item if $\overline G$ has max-degree at most $\Delta$, then either the instance is solvable in polynomial time or it is already compressed to the claimed size or it is efficiently $\bigoh(d\cdot \Delta)$-reducible; and 
	\item given an edge clique cover of size $k$ for $G$, either the instance is solvable in polynomial time or it is already compressed to the claimed size or it is efficiently $\bigoh(d\cdot k)$-reducible.
\end{itemize}

Proving each of the above statements is done along similar lines, but there are certain instance-specific aspects that we exploit in each case. In fact, for the first two statements, we show the existence of an $\alpha$ depending only on the parameters such that the index set of {\em any} fully specified $\alpha\times \alpha$ principal submatrix must contain an irrelevant element. Then, we can use the sparsity of $\overline G$ to infer the existence of such a fully specified principal submatrix of $M$. For the third statement, we observe that one of the cliques in the given edge clique cover of $G$ must have size at least $\alpha$ (where $\alpha$ depends on the parameters) and then we argue that such a clique contains an irrelevant element.

\subsubsection{Distance from Triviality III}

Our final result concerns a further class of matrices that admit polynomial-time
completion algorithms.  
Recall that a graph $G$ is \emph{chordal} if it has no induced cycles of
length greater than $3$, see \Cref{fig:matrix-and-chordal-graph}.
Bakonyi and Johnson~\cite{Bakonyi1995} proved the following: when $G$ is
chordal, a $G$-partial matrix $M$ can be completed to an EDM that is realizable 
in $\mathbb{R}^{d}$  {if and only if} the submatrix induced by the vertex set of each maximal
clique of $G$ is itself an EDM realizable in $\mathbb{R}^{d}$.
Since an $n$-vertex chordal graph has at most $n$ maximal
cliques and these can be listed in polynomial time~\cite{Golumbic80}, Laurent~\cite{Laurent2001} observed that this
characterization yields a straightforward polynomial-time algorithm for
\dedmcompshort{} as follows. For each maximal clique of $G$ one checks, in polynomial time, whether the
corresponding principal submatrix of $M$ (all of whose entries are specified, by definition) is an EDM realizable in $\mathbb{R}^{d}$. So, considering partial matrices with chordal underlying graphs as our notion of triviality, let us address the next notion of distance from triviality.

\tikzset{
  vertex/.style = {circle, draw, fill=white, minimum size=6pt, inner sep=0pt},
  edge/.style   = {line width=0.8pt},
}

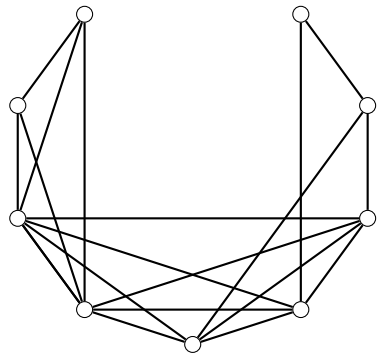
\begin{figure}[ht]
\begin{nolinenumbers}
\centering

% --- (a) Matrix on the left ---
\begin{subfigure}{0.47\textwidth}
\centering
\renewcommand{\arraystretch}{1.1}
\[
\left[
\begin{array}{ccccccccc}
0 & 1 & 1 & 1 & * & * & * & * & * \\      % 2
1 & 0 & 1 & 1 & * & * & * & * & * \\      % 3
1 & 1 & 0 & 1 & 1 & 1 & 1 & * & * \\      % 4
1 & 1 & 1 & 0 & 1 & 1 & 1 & * & * \\      % 5
* & * & 1 & 1 & 0 & 1 & 1 & 1 & * \\      % 6
* & * & 1 & 1 & 1 & 0 & 1 & * & 1 \\      % 7
* & * & 1 & 1 & 1 & 1 & 0 & 1 & * \\      % 8
* & * & * & * & 1 & * & 1 & 0 & 1 \\      % 9
* & * & * & * & * & 1 & * & 1 & 0        % 10
\end{array}
\right]
\]
\phantomcaption
\label{fig:edm-chordal}
\end{subfigure}
\hfill
% --- (b) Graph on the right ---
\begin{subfigure}{0.47\textwidth}
\centering
\begin{tikzpicture}[scale=1.1, rotate=90]
  % vertices on a circle (2..10)
  \foreach \i in {2,...,10} {
    \node[vertex] (v\i) at ({360/10 * (\i - 1)}:2.2) {};
  }
  % clique 1: v2–v5
  \foreach \a/\b in {2/3,2/4,2/5,3/4,3/5,4/5}
    \draw[edge] (v\a)--(v\b);
  % clique 2: v4–v8
  \foreach \a/\b in {4/5,4/6,4/7,4/8,5/6,5/7,5/8,6/7,6/8,7/8}
    \draw[edge] (v\a)--(v\b);
  % extra edges
  \draw[edge] (v8)--(v9);
  \draw[edge] (v9)--(v10);
  \draw[edge] (v7)--(v10);
  \draw[edge] (v6)--(v9);
\end{tikzpicture}
\phantomcaption
\label{fig:graph-chordal}
\end{subfigure}

\caption{A partial EDM matrix whose underlying graph is chordal.}
\label{fig:matrix-and-chordal-graph}
\end{nolinenumbers}
\end{figure}

The popular measure of the distance of a graph $G$ to a chordal graph is the \emph{fill-in} of a graph, which is the minimum number of edges that should be added to $G$ to make it chordal. Our next theorem extends the polynomial-time algorithm of 
Bakonyi and Johnson~\cite{Bakonyi1995}  for chordal graphs to graphs with constant fill-in. 

\begin{restatable}
%[XP algorithm parameterized by $d$ and fill-in]
{theorem}{XPFillIn}
\label{thm:XPAlgorithmbyFillIn}
Let $M$ be an $n\times n$ $G$-partial matrix and let $d\in\mathbb{N}$.  
There is an algorithm that runs in  
$
  d^{\mathcal{O}(kd)}\; 2^{\mathcal{O}(k^{2})}\; n^{\mathcal{O}(k)}
$
time, where $k$ is the size of a minimum fill-in of $G$, and correctly decides whether $(M,d)$ is a yes-instance of \dedmcompshort{}.
\end{restatable}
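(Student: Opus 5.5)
We would first compute a minimum fill-in $F$ of $G$; this takes $2^{\mathcal{O}(k^2)}n^{\mathcal{O}(1)}$ time with known fill-in algorithms, or simply $n^{\mathcal{O}(k)}$ time by trying all $k$-subsets of non-edges. Then $H=G+F$ is chordal and has at most $n$ maximal cliques. The added edges $F=\{e_1,\dots,e_k\}$ correspond to unspecified entries of $M$. If we knew suitable values $x_1,\dots,x_k$ for these entries, the resulting $H$-partial matrix $M_x$ could be decided by the Bakonyi--Johnson characterization: $M_x$ is completable in $\mathbb{R}^d$ if and only if every maximal clique of $H$ induces an EDM realizable in $\mathbb{R}^d$. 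Any completion of $M$ also fixes values for the $e_i$. So $(M,d)$ is a yes-instance if and only if there is some $x\in\mathbb{R}^k_{\ge 0}$ for which all clique submatrices of $M_x$ are realizable in $\mathbb{R}^d$. The plan is therefore to decide this existential question over $k$ real unknowns.

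The main obstacle is that a maximal clique of $H$ may have unbounded size, so its realizability condition cannot be written directly as a bounded-size polynomial system. We would handle this in two steps.

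\textbf{Step 1: reduce each clique to a small witness set.} Fix a maximal clique $Q$ of $H$. Let $S_Q\subseteq Q$ be the set of at most $2k$ endpoints of fill edges inside $Q$. All entries of $M[Q]$ are specified except those on fill edges. First check that $M[Q\setminus S_Q]$ is realizable in $\mathbb{R}^d$; if not, reject. Otherwise, compute an affine basis $B_Q$ of $Q\setminus S_Q$ in that realization; it has at most $d+1$ points. A standard distance-geometry fact handles the rest: a point configuration extends a fixed realization, and every other point of $Q\setminus S_Q$ is determined by its distances to $B_Q$. Hence $M_x[Q]$ is realizable in $\mathbb{R}^d$ if and only if $M_x[B_Q\cup S_Q]$ is realizable in $\mathbb{R}^d$ and the specified distances from $S_Q$ to the remaining points of $Q\setminus S_Q$ agree with the positions forced by $B_Q$. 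This consistency check is the subtle part. We would try to enforce it by also recording, for each $s\in S_Q$, its specified distances to $B_Q$, and verifying that these uniquely pin down $s$ relative to the affine hull of $Q\setminus S_Q$. Along the way we may need a case split on the dimension of that hull and on the side of the hull on which each $s$ lies, of which there are $d^{\mathcal{O}(d)}$ choices per point. Over all the points of $\bigcup_Q S_Q$ this gives the factor $d^{\mathcal{O}(kd)}$.

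\textbf{Step 2: bound the number of distinct constraints and solve.} Only cliques containing a fill edge involve $x$. The remaining cliques are checked directly in polynomial time. However, many cliques may contain fill edges, up to $n$ of them. To obtain a single global system, we would guess, for each of the at most $2k$ fill endpoints, the relevant basis points. This is where the $n^{\mathcal{O}(k)}$ factor comes from: we enumerate $\mathcal{O}(kd)$-size vertex subsets $W\supseteq V(F)$ such that the realizability of every relevant clique restricted to $W$ certifies the whole instance. For each guess we introduce coordinates in $\mathbb{R}^d$ for the points of $W$, giving $\mathcal{O}(kd^2)$ variables, together with polynomial equations for all specified distances inside $W$. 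We then decide existence with a real-algebraic existential-theory algorithm, as in \Cref{thm:exactAlgoCompletion}, whose running time is exponential only in the number of variables, which depends on $k$ and $d$. For correctness, we would argue that a realization of $W$ gives values $x$, and that with these values each clique check of Step 1 succeeds, so the Bakonyi--Johnson theorem yields a full completion. Conversely, a completion restricted to $W$ satisfies the system.
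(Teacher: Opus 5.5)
\textbf{There is a genuine gap, and it is in Step~2.} Step~1 is sound, and easier than you fear. A fill edge inside $Q$ has both ends in $S_Q$, so every entry between $S_Q$ and $Q\setminus S_Q$ is specified. By the barycentric identity, your consistency condition then does not involve $x$ at all. It is just $d$-realizability of the fully specified matrices $M[(Q\setminus S_Q)\cup\{s\}]$ for $s\in S_Q$, and no case split on sides is needed.

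The problem in Step~2 is that a single fill edge can lie in $\Theta(n)$ maximal cliques of $H$. Each of these cliques has its own basis $B_Q$ and imposes its own, independent constraint on $x$. A realization of an $O(kd)$-size set $W$ enforces only the distances inside $W$. So your claim that ``with these values each clique check of Step~1 succeeds'' is unjustified, and it is false in general.

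Concretely, let $d=1$ and $G=K_{2,m}$ with sides $\{u,v\}$ and $\{a_1,\dots,a_m\}$. The minimum fill-in is $\{uv\}$, and the maximal cliques of $H$ are the triangles $\{u,v,a_i\}$. Writing $\alpha_i^2=M_{ua_i}$ and $\beta_i^2=M_{va_i}$, triangle $i$ forces $\sqrt{x}\in\{|\alpha_i-\beta_i|,\alpha_i+\beta_i\}$. Take $\alpha_i=\beta_i=1$ for $i<m$, and $\alpha_m=1$, $\beta_m=4$. Then this is a no-instance, since $\{0,2\}\cap\{3,5\}=\emptyset$. Yet for every $W\supseteq\{u,v\}$ avoiding $a_m$, the distances inside $W$ are realizable by $u\mapsto 0$, $v\mapsto 2$, $a_i\mapsto 1$. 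You cannot tell which guessed $W$ is the ``certifying'' one without solving the problem, so accepting when some $W$ is feasible is unsound.

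The bookkeeping also misses the claimed bound. Enumerating $O(kd)$-subsets costs $n^{O(kd)}$, not $n^{O(k)}$. Using $O(kd^2)$ coordinate variables makes the existential-theory solver cost about $(kd)^{O(kd^2)}$, which exceeds $d^{O(kd)}2^{O(k^2)}$ already for $k=1$.

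The missing idea is to keep the $O(k)$ unknown entries as the \emph{only} variables. You then write $d$-realizability of \emph{every} maximal clique directly as a quantifier-free condition in them; point coordinates must be avoided, since every clique would need fresh ones for its basis. The paper does this as follows.
\begin{itemize}
\item For each maximal clique $C_i$ of $G+F$, take a metric basis $B_i$ of the fully specified part of $C_i$ (your $B_Q$).
\item Form the disjunction, over $Y\subseteq V(F)\cap C_i$ with $|B_i\cup Y|\le d+1$, of the Cayley--Menger conditions for basis $B_i\cup Y$. These are sign conditions on the leading determinants, plus vanishing of $CM(B_i\cup Y,x)$ and $CM(B_i\cup Y,x,y)$ for all $x,y\in C_i$, with the fill entries as indeterminates.
\item Take the conjunction over all at most $n$ cliques.
\end{itemize}
This gives $2^{O(k)}n^{O(1)}$ polynomials of degree at most $\min(d+3,2k)$ in $2k$ variables. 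The Basu--Pollack--Roy bound $s^{O(k)}\ell^{O(k)}$ is polynomial in the number $s$ of polynomials. So neither large cliques nor many cliques is an obstacle, and this is exactly where the $n^{O(k)}2^{O(k^2)}$ factor comes from.
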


Laurent~\cite{Laurent2001} presented an XP algorithm, that is, polynomial for
every fixed $k$, which decides whether a $G$-partial matrix can be completed to
an EDM realizable in \emph{some} Euclidean space, where $k$ is the
minimum size of a fill-in of $G$.  This algorithm, however, does not determine the
\emph{smallest} dimension $d$ that suffices, and therefore does not 
solve \dedmcompshort{}.
\Cref{thm:XPAlgorithmbyFillIn} can thus be viewed as a non-trivial extension
of Laurent's result, tailored specifically for embeddability into a prescribed
dimension~$d$.

We next give a brief outline of the techniques behind this algorithm. 
Due to the result of Bakonyi and Johnson~\cite{Bakonyi1995}, it is necessary and sufficient to decide whether a given $G$-partial matrix $M$ can be extended to a $G'$-partial matrix $M'$ satisfying the following conditions:
(i) $G'$ is chordal, (ii) $G'$ is a supergraph of $G$, and (iii) for every maximal clique of $G'$, the submatrix of $M'$ induced by the clique's vertex set is an EDM realizable in $\mathbb{R}^{d}$.
For us, $G'$ is defined by a fill-in set $X$, that is, $G' = G + X$, where $X$ is a set of non-edges of $G$ whose addition to $G$ makes it chordal. Thus, the task reduces to computing $M'$ by assigning values to the entries corresponding to the edges in $X$, ensuring that condition~(iii) is satisfied. To do so, we reduce this task to testing the truth of an existentially quantified statement over a bounded set of polynomial equations and inequalities and then finally invoke an algorithm of Basu, Pollack, and Roy~\cite{BasuPR06}. We note that a similar approach was used in~\cite{BentertFGRS25}, however our case is more complex as the maximal cliques of $G'$ may share edges, requiring us to handle multiple EDMs simultaneously in this reduction.

 \subsection{Related Work}

The study of \emph{Euclidean distance matrices} (EDMs) dates back to the pioneering works of Cayley~\cite{Cayley1841} and Menger~\cite{Menger1928}.  
Foundational results were later obtained by Schoenberg~\cite{Schoenberg1935} and by Young and Householder~\cite{Young1938}.  
The subject was further developed in a series of papers by Gower, Critchley, Farebrother, and others~\cite{Gower1982,Gower1985,Critchley1988,Farebrother1987}.  
Schoenberg~\cite{Schoenberg1935} also established the deep connection between EDMs and positive-semidefinite (PSD) matrices.  
A comprehensive treatment of the vast literature on EDMs lies beyond the scope of this article; we refer the interested reader to the monographs and surveys~\cite{Blumenthal1970,Crippen1988,deza1997geometry,Laurent1998,Dattorro2008,alfakih2018euclidean}.

The \textsc{EDM Completion} problem, that is the problem of deciding whether an input $G$-partial EDM  could be completed to an EDM (without constraints on the rank of the resulting matrix) could be formulated as a semidefinite programming problem (SDP) and thus can be solved approximately in polynomial time, up to any given accuracy. 
The complexity of the \textsc{EDM Completion} problem is unknown, and its membership in \classNP{} remains open. Several works have investigated how structural properties of the graph $G$ influence the complexity of \textsc{EDM Completion}.

As already mentioned, Bakonyi and Johnson~\cite{Bakonyi1995} proved that when $G$ is \emph{chordal}, a $G$-partial EDM admits a completion to a full EDM if and only if the submatrix indexed by every maximal clique of $G$ is itself an EDM. Laurent~\cite{Laurent2001} exploited this characterization to design a polynomial-time algorithm for \textsc{EDM Completion} 
%(the problem of deciding of whether an input $G$-partial EDM  could be completed to EDM) 
when $G$ is chordal. Moreover, she extended this result to graphs that can be made chordal by adding only a constant number of edges.   In other words, she provided an XP algorithm parameterized by the \emph{fill-in} of $G$. 
This result for chordal graphs in the EDM setting is a natural extension of the analogous result for the positive semidefinite (PSD) matrix completion problem
~\cite{Grone1984}.

Both \textsc{EDM Completion} and \dedmcomp{} have numerous applications in
statistics~\cite{deleeuw1982mds}, chemistry~\cite{Crippen1988}, sensor
network localization~\cite{Biswas2006}, and many other areas; see the survey
of Laurent~\cite{laurent2008matrix} for additional references.
Since the rank function is not convex, \dedmcompshort{} cannot be formulated as a semidefinite program. It is \classNP-complete for \(d=1\) and remains \classNP-hard for every fixed \(d \ge 2\), even when restricted to partial matrices with specified entries in \(\{1,2\}\)~\cite{saxe1979embeddability,Yemini1979}. Numerous heuristics have been proposed in the literature (e.g., \cite{Crippen1988,glunt1998molecular,havel1991evaluation,hendrickson1995molecule,kuntz1993distance,more1997global,Weinberger2004}), but to our knowledge there are no algorithms for \dedmcompshort{} with provable performance guarantees. A notable exception is the work of Berger, Kleinberg, and Leighton \cite{BergerKL99}, which addresses \dedmcompshort{} for \(d=3\); however, their general-position assumptions differ substantially from those in this paper. Schaefer in \cite{schaefer2013realizability} proved that  \dedmcompshort (and actually 2-EDMC) 
 is complete for the existential theory of the reals. See also  the $\exists \mathbb{R}$ compendium   \cite{schaefer2024existential}, problem CG62 (Graph Realizability).
The area closely related to \textsc{EDM Completion} is rigidity theory. Rigidity theory, dating back to mathematicians and mechanical
engineers of the nineteenth century~\cite{cauchy1813polygones}, studies when a system of points connected by fixed-length bars is rigid. 
Here, the basic problem is to determine when a partial distance matrix uniquely determines a set of points in Euclidean space up to congruence. 
See \cite{asimow1978rigidity,connelly1991generic,graver1993combinatorial} for an introduction to this area.

\section{Preliminaries}

For every $x\in {\mathbb N}_{>0}$, we use $[x]$ to denote the set $\{1,\dots,x\}$.

\paragraph*{Distance spaces and matrices}  
Let $X$ be a set. A function $\dist \colon X\times X \to \mathbb{R}_{\ge 0}$  is a \emph{distance} on $X$ if: 
% \begin{itemize}
% \item[(i)]
(i)~$\dist$ is symmetric, that is,  for any $x, y \in X$, $\dist(x, y) = \dist(y, x)$, and  
% \item[(ii)] 
(ii)~$\dist(x, x) = 0$ for all  $x\in X$.
% \end{itemize}
Then, $ (X, \dist)$ is called a \emph{distance space}. 
%More precisely, recall that 
%for two points %(vectors)  
%$p, q \in \mathbb{R}^d$, the Euclidean distance between $p$ and $q$ is  $\| p - q \|_2=\sqrt{\langle %p,p\rangle+ \langle q,q\rangle - 2\langle p,q\rangle}$. 
A distance space $(X,\dist)$ for a finite $X=\{x_1,\ldots,x_n\}$ can be equivalently defined by the  \emph{distance matrix}, in which the value in row $i$ and column $j$
is $\dist_{ij}^2$, where 
%, denoted by $\dist_{ij}$ is defined as
%\[
%D(\rho)=\left(
% \begin{matrix}
%0 & \dist_{1,2}^2 & \dist_{1,3}^2 & \dots & \dist_{1,n}^2 \\
%\dist_{2,1}^2 & 0 & \dist_{2,3}^2 & \dots & \dist_{2,n}^2 \\
%\dist_{3,1}^2 & \dist_{3,2}^2 & 0 & \dots & \dist_{3,n}^2 \\
%\vdots & \vdots & \vdots & \ddots & \vdots \\
%\dist_{n,1}^2 & \dist_{n,2}^2 & \dist_{n,3}^2 & \dots & 0 \\
%\end{matrix} \right)
%\]
%where 
$\dist_{ij}=\dist(x_i,x_j)$ for $i,j\in\{1,\ldots,n\}$.
Throughout the paper we do not distinguish metric spaces and the corresponding distance matrices.

A distance space $(X,\dist)$ is \emph{isometrically embeddable} into $\mathbb{R}^d$ if there is a map, called \emph{isometric embedding}, $\varphi \colon X\to \mathbb{R}^d$ such that 
$\dist(x,y)=\|\varphi(x)-\varphi(y)\|_2$ for all $x,y\in X$, where
$\| p - q \|_2=\sqrt{\langle p,p\rangle+ \langle q,q\rangle - 2\langle p,q\rangle}$ for $p,q\in\mathbb{R}^d$.
Notice that we do not require $\varphi$ to be injective, that is, several points of $(X,\dist)$ may be mapped to the same point of $\mathbb{R}^d$.
Throughout the paper, whenever we mention an embedding, we mean an isometric embedding. Moreover, when we use the term $d$-embedding, we are referring to embedding into $\mathbb{R}^{d}$.  A $d$-embeddable  
distance space is {\em strongly $d$-embeddable} if it is not $(d-1)$-embeddable. %We use $X^{(2)}$ to denote the set of unordered pairs of two distinct elements of $X$. 
As convention, we assume that the empty set of points is $d$-embeddable for every $d$.
A symmetric $n\times n$ matrix $D=(d_{ij})$ over $\mathbb{R}_{\geq 0}$ with $d_{ii}=0$ for all $i\in\{1,\ldots,n\}$ is a \emph{Euclidean distance matrix} (\edm) in $\mathbb{R}^d$ if $D=D(\dist)$ is the distance matrix of a distance space $(X,\dist)$ embeddable in $\mathbb{R}^d$. 
Suppose that $\Xcal=(X,\dist)$ is embeddable into $\mathbb{R}^d$. The ordered set 
$P=(p_1,\ldots,p_n)$ of points in $\mathbb{R}^d$ is said to be a \emph{realization} of $\Xcal$ if there is an embedding $\varphi \colon X\to \mathbb{R}^d$ such that $\varphi(x_i)=p_i$ for all $i\in\{1,\ldots,n\}$.

For a $d$-embeddable distance space $(X,\dist)$, a set $Y\subseteq X$ is a \emph{metric basis} if, given an isometric embedding $\varphi$ of $(Y,\dist)$ into $\mathbb{R}^d$, there is a unique way to extend $\varphi$ to an isometric embedding of $(X,\dist)$. Equivalently, if a 
realization of $(Y,\dist)$ is fixed then the embedding of any point of $X\setminus Y$ in a $d$-embedding of $(X,\dist)$ is unique.  
In our paper we also use the embeddability characterization based on the properties of Cayley-Menger matrices. Towards this, we use terminology and notation from \cite{BentertFGRS25}. 
For $r+1$ points $x_0,x_1, \dots, x_r$ of distance space $(X, \dist)$
the \emph{Cayley-Menger determinant} $CM ( x_0,x_1, \dots, x_r)$ is the determinant of the matrix obtained from the distance matrix induced by these points by prepending a row and a column whose first element is zero and the other elements are one.

\begin{proposition}[{\cite[Chapter~IV]{Blumenthal1970}}]\label{thm:Blumental}
A distance space $\Xcal= (X, \rho)$ with $n$ points is strongly 
% embeddable into $\mathbb{R}^d$ 
$d$-embeddable if and only if there exist $d+1$ points, say $X_d = \{ x_0, \ldots, x_d \}$, such that:
\begin{enumerate}
\item  $(-1)^{j+1}CM ( x_0,x_1, \dots, x_j)>0$ for $1\leq j\leq d$, and 
\item for any $x, y \in X\setminus X_d$,
\[CM ( x_0,x_1, \dots, x_d,x) =CM ( x_0,x_1, \dots, x_d,y)=CM ( x_0,x_1, \dots, x_d,x,y)=0.\] 
\end{enumerate}
Equivalently (see, for example,  \cite{SidiropoulosWW17}), 
$\Xcal$ is strongly 
% embeddable into $\mathbb{R}^d$  
$d$-embeddable if and only if there is a set of $d+1$ points $X_d=\{x_0,\ldots,x_d\}$ such that 
% $(X_d, \rho)$ is strongly embeddable into $\mathbb{R}^d$ and 
$(\{x_0,\ldots,x_j\},\dist)$ is strongly $j$-embeddable 
% into $\mathbb{R}^j$  
for all $j\in\{1,\ldots, d\}$, and
for every $x,y\in X\setminus X_d$, $(X_d\cup\{x\}\cup \{y\})$ is $d$-embeddable.
% into  $\mathbb{R}^d$.
\end{proposition}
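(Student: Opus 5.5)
The plan is to reduce everything to the classical Gram-matrix (Schoenberg) characterization and then read off both conditions via Schur complements.

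\textbf{Step 1: Gram matrix and the embedding criterion.} Fix $x_0\in X$. For $i,j\ne 0$ define the Gram matrix $G=(g_{ij})$ by
\[
g_{ij}=\tfrac12\bigl(\dist(x_0,x_i)^2+\dist(x_0,x_j)^2-\dist(x_i,x_j)^2\bigr).
\]
I would first recall (or quickly reprove) Schoenberg's criterion: $\Xcal$ is $d$-embeddable iff $G$ is positive semidefinite of rank at most $d$. For the ``if'' direction, factor $G=P^{\top}P$ with $P$ a $d\times(n-1)$ matrix, and send $x_0\mapsto 0$ and $x_i$ to the $i$-th column of $P$. For the ``only if'' direction, translate the embedding so that $x_0$ sits at the origin; then $G$ is exactly the inner-product matrix of the images. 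Consequently, $\Xcal$ is strongly $d$-embeddable iff $G\succeq 0$ and $\operatorname{rank}G=d$.

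\textbf{Step 2: Cayley--Menger determinants versus Gram minors.} Next I would establish the determinant identity
\[
CM(x_0,\dots,x_j)=(-1)^{j+1}2^j\det G[\{1,\dots,j\}],
\]
together with its analogue for any index set $S$ in place of $\{1,\dots,j\}$. The route is elementary row and column operations on the bordered matrix: subtract the $x_0$ row and column from the others, then use the border of ones to clear entries. This sign and scaling bookkeeping is the step I expect to be the main technical obstacle, although it is routine. Once it is in place, condition~1 says exactly that the leading principal minors of $G_d:=G[\{1,\dots,d\}]$ are positive. Similarly, condition~2 says that the bordered Gram determinants on $\{1,\dots,d,x\}$ and on $\{1,\dots,d,x,y\}$ vanish.

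\textbf{Step 3, sufficiency.} Assume conditions~1 and~2 hold. By Sylvester's criterion, $G_d\succ 0$. For any $x$, the determinant over $\{1,\dots,d,x\}$ equals $\det G_d\cdot s_x$, where $s_x$ is the Schur complement of $G_d$. Since this determinant is $0$ and $\det G_d>0$, we get $s_x=0$. For a pair $x,y$, the $2\times 2$ Schur complement is $\begin{pmatrix}s_x&t_{xy}\\ t_{xy}&s_y\end{pmatrix}$, whose determinant is $-t_{xy}^2$. This must vanish, so $t_{xy}=0$. Hence the full Schur complement of $G_d$ in $G$ is the zero matrix. Therefore $G\succeq 0$ with rank exactly $d$, and Step~1 shows that $\Xcal$ is strongly $d$-embeddable.

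\textbf{Step 3, necessity.} Assume $\Xcal$ is strongly $d$-embeddable. Take a realization whose affine hull is $d$-dimensional, and greedily choose $x_0,\dots,x_d$ so that each new point is affinely independent of the previous ones. Then every $G[\{1,\dots,j\}]$ is positive definite, which gives condition~1. Any $d+2$ or $d+3$ points of the configuration lie in $\mathbb{R}^d$, so their Gram matrices have rank at most $d$ and hence the corresponding determinants vanish, which gives condition~2.

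\textbf{The equivalent reformulation.} Apply Step~1 to each subset. A nested chain with $\{x_0,\dots,x_j\}$ strongly $j$-embeddable for every $j$ is the same as $G[\{1,\dots,j\}]\succ 0$ for all $j$, i.e.\ positive leading principal minors. Given $G_d\succ 0$, the statement that $X_d\cup\{x,y\}$ is $d$-embeddable means its Gram matrix is PSD of rank $d$. By the same Schur-complement argument, this holds iff $s_x=s_y=t_{xy}=0$, which is exactly the vanishing of the Cayley--Menger determinants in condition~2.
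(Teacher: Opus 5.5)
Your proof is correct. The paper itself gives no proof of this proposition. It imports the Cayley--Menger characterization from Blumenthal's book and the reformulation from Sidiropoulos et al., so your argument is a self-contained proof where the paper has only a citation.

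The classical Menger--Blumenthal route is geometric and constructive. One realizes the nondegenerate simplex on $x_0,\dots,x_d$. The vanishing $(d+2)$-point determinants are then used to place every further point, uniquely, in the simplex's affine hull from its distances to the vertices. Finally, the vanishing $(d+3)$-point determinants certify that any two placed points are at the prescribed distance. This argument needs no Gram-matrix machinery and applies to general semimetric spaces.

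Your route instead goes through Schoenberg's criterion, Sylvester's criterion and Schur complements. As written it relies on factorizing a finite PSD matrix, which is all that is needed here since $X$ has $n$ points. In exchange it makes the structure transparent:
\begin{itemize}
\item condition~1 is exactly positive definiteness of $G_d$;
\item the $(d+2)$-point determinants are $\det G_d$ times the diagonal entries $s_x$ of the Schur complement of $G_d$;
\item your observation that the $2\times 2$ minor reduces to $-t_{xy}^2$ once $s_x=s_y=0$ is the algebraic content of the $(d+3)$-point step. Viewed as a function of the unknown $\dist(x,y)^2$, that determinant is a negative multiple of a perfect square, so it vanishes only at the correct value.
\end{itemize}
A further benefit is that the Sidiropoulos et al.\ reformulation drops out immediately, because strong $j$-embeddability of $\{x_0,\dots,x_j\}$ is just positive definiteness of the leading $j\times j$ block of $G$.

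The sign bookkeeping you flagged is indeed routine. After subtracting the $x_0$ row and column, the bordered matrix becomes $\begin{pmatrix}0&1&0\\ 1&0&u^{\top}\\ 0&u&-2G\end{pmatrix}$, where $u$ is the vector of squared distances to $x_0$. Expanding along the first row gives $-(-2)^j\det G=(-1)^{j+1}2^j\det G$, as you claimed.
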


\begin{proposition}[{\cite[Chapter~IV]{Blumenthal1970}}]\label{prop:basis}
 Let $(X,\dist)$ be a $d$-embeddable distance space, and let  $B\subseteq X$ be a metric basis. 
 Then an embedding of $B$ in $\mathbb{R}^d$ is unique up to rigid transformations, and given an   embedding $\varphi$ of $(B,\dist)$ into $\mathbb{R}^d$, there is a unique way to extend $\varphi$ to an embedding of $(X,\dist)$. Equivalently, if a 
realization of $(B,\dist)$ is fixed then the embedding of any point of~$X\setminus B$ in a $d$-embedding of $(X,\dist)$ 
is unique.  
\end{proposition}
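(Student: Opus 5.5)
The plan is to prove the two parts separately. The second part, uniqueness of the extension, is essentially the definition of a metric basis. The first part, uniqueness of the embedding of $B$ up to rigid transformations, holds for \emph{any} finite $d$-embeddable distance space, so it does not depend on $B$ being a basis.

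For the first part, let $\varphi,\psi$ be two embeddings of $(B,\dist)$ into $\mathbb{R}^d$, and fix some $b_0\in B$. Translate both so that $\varphi(b_0)=\psi(b_0)=0$; translations are rigid, so this loses nothing. For $b,b'\in B$, the polarization identity
\[
\langle \varphi(b),\varphi(b')\rangle=\tfrac12\bigl(\dist(b_0,b)^2+\dist(b_0,b')^2-\dist(b,b')^2\bigr)
\]
shows that the Gram matrix of the vectors $\{\varphi(b)\}_{b\in B}$ is determined by $\dist$. The same identity holds for $\psi$, so both families have the same Gram matrix. Now use the standard fact that two families of vectors with equal Gram matrices differ by an orthogonal map:
\begin{itemize}
\item define a linear map $U$ on $\operatorname{span}\{\varphi(b)\}$ by sending $\sum_b \lambda_b\varphi(b)$ to $\sum_b\lambda_b\psi(b)$;
\item $U$ is well defined and norm preserving, because $\|\sum_b\lambda_b\varphi(b)\|^2=\|\sum_b\lambda_b\psi(b)\|^2$ by the equality of the Gram matrices;
\item extend $U$ to an orthogonal map of $\mathbb{R}^d$ by sending an orthonormal basis of the orthogonal complement of $\operatorname{span}\{\varphi(b)\}$ to one of the orthogonal complement of $\operatorname{span}\{\psi(b)\}$. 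The two spans have equal dimension, namely the rank of the Gram matrix, so their complements do as well.
\end{itemize}
Hence $\psi=U\circ\varphi$ after translation, which is a rigid transformation.

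For the second part, fix a realization $\varphi$ of $(B,\dist)$. By the definition of a metric basis there is exactly one embedding $\Phi$ of $(X,\dist)$ extending $\varphi$. The ``equivalently'' reformulation is then immediate. If some $x\in X\setminus B$ had two possible images $\Phi(x)\neq\Phi'(x)$ under $d$-embeddings $\Phi,\Phi'$ of $X$ that both agree with the fixed realization on $B$, then $\Phi$ and $\Phi'$ would be two distinct extensions of $\varphi$, a contradiction. Conversely, if each point's image is forced, the extension is unique.

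One could also combine the two parts to show that the choice of realization of $B$ does not matter. If $\varphi'=T\circ\varphi$ for a rigid $T$, then $T\circ\Phi$ is an extension of $\varphi'$, and by uniqueness it is the only one.

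The only step with real content is the Gram-matrix argument in the first part, specifically extending $U$ when the points of $B$ do not affinely span $\mathbb{R}^d$. That is handled by the orthogonal-complement extension in the last bullet above. Everything else is bookkeeping with definitions.
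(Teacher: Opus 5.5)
Your proof is correct. The paper gives no proof of this statement; it imports it from Blumenthal, Chapter~IV. So the only comparison is with that citation, and your argument is a self-contained replacement for it.

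Under the paper's own definition, a metric basis is a set whose $d$-embeddings extend uniquely to $X$. You correctly observe that with this definition the second half of the proposition, and its ``equivalently'' reformulation, just restate the definition. The only real content is that any two $d$-embeddings of a finite distance space differ by an isometry of $\mathbb{R}^d$. Your Gram-matrix argument proves this cleanly, including the rank count that lets you extend $U$ from the span to all of $\mathbb{R}^d$. Your closing transport paragraph also shows that the conclusion holds under the weaker reading of the definition, where unique extension is required for only one realization of $B$. The same transport argument gives existence of an extension for every realization of $B$ from the $d$-embeddability of $X$ alone.

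Two small remarks. First, your $U$ may have determinant $-1$, and this cannot be avoided: an affine map that agrees with a reflection on $d+1$ affinely independent points is that reflection. So ``rigid transformation'' must be read as ``isometry of $\mathbb{R}^d$'', which is exactly what you prove. Second, Blumenthal supplies something beyond your argument: the intrinsic fact that a maximum independent set satisfies this extrinsic definition of metric basis. The paper actually relies on that fact later, via \Cref{prop:matroid}(iv) and \Cref{prop:realization}. It is not part of the present statement, so your proof does not need it.
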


Let $(X,\dist)$ be a  
distance space. For a nonnegative integer $r$, 
we say that $Y\subseteq X$ of size $r+1$ is \emph{independent} if $(Y,\dist)$ is strongly $r$-embeddable.

\begin{proposition}[{\cite[Chapter~IV]{Blumenthal1970}}]\label{prop:matroid}
Let $(X,\dist)$ be a strongly $d$-embeddable distance space. Then the following hold: {\em (i)} any single-element set is independent, {\em (ii)} if $Y \subseteq X$ is independent, then any~$\emptyset \subset Z \subseteq Y$ is independent, and {\em (iii)} if $Y,Z \subseteq X$ are independent and $|Y|>|Z|$ then there is a~$y\in Y\setminus Z$ such that $Z\cup\{y\}$ is independent, 

{\em (iv)} the maximum size of an independent set is $d+1$ and any independent set~$Y$ of size $d+1$ is a metric basis.
 \end{proposition}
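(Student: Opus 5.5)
The plan is to reduce everything to affine independence in one fixed realization, using the fact that the affine dimension of a point configuration is determined by its pairwise distances. Since $(X,\dist)$ is $d$-embeddable, fix a realization $P=(p_1,\dots,p_n)$ in $\mathbb{R}^d$. The central lemma is the following.

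\emph{Lemma.} For any $Y=\{y_0,\dots,y_r\}\subseteq X$ and any realization $q_0,\dots,q_r$ of $(Y,\dist)$ in any Euclidean space, $\dim \operatorname{aff}(q_0,\dots,q_r)=\operatorname{rank} G_Y$. Here $G_Y$ is the $r\times r$ Gram matrix with entries
\[
(G_Y)_{ij}=\tfrac12\bigl(\dist(y_0,y_i)^2+\dist(y_0,y_j)^2-\dist(y_i,y_j)^2\bigr)=\langle q_i-q_0,q_j-q_0\rangle .
\]

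To prove the lemma, note that the rank of a Gram matrix equals the dimension of the span of the vectors $q_i-q_0$. Since $G_Y$ depends only on $\dist$, this dimension is the same for every realization. Consequently $(Y,\dist)$ is $m$-embeddable if and only if $m\ge \operatorname{rank} G_Y$: any realization can be rotated into $\mathbb{R}^{\operatorname{rank}G_Y}$, and no realization fits in fewer dimensions. In particular, $Y$ of size $r+1$ is independent exactly when $\operatorname{rank} G_Y=r$, which holds exactly when the points $\{p_y : y\in Y\}$ are affinely independent in $\mathbb{R}^d$. I expect this lemma to be the only real obstacle; the key point is that the Gram identity forces the rank, so the notion does not depend on the chosen realization.

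With the lemma, (i)--(iii) are the standard axioms of the affine matroid on $\{p_x\}$.
\begin{itemize}
\item (i): a single point is affinely independent.
\item (ii): subsets of affinely independent sets are affinely independent.
\item (iii): this is the usual exchange property. If every $y\in Y\setminus Z$ lay in $\operatorname{aff}(Z)$, then $\operatorname{aff}(Y)\subseteq\operatorname{aff}(Z)$, which contradicts $|Y|-1>|Z|-1$.
\end{itemize}

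For (iv), strong $d$-embeddability together with the lemma gives $\dim\operatorname{aff}(P)=d$. Otherwise, $P$ could be rotated into $\mathbb{R}^{d-1}$. Hence the maximum size of an affinely independent subset is $d+1$.

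It remains to show that an independent $B$ with $|B|=d+1$ is a metric basis. First, any two realizations of $(B,\dist)$ in $\mathbb{R}^d$ have the same Gram matrix, so they differ by an orthogonal map plus a translation. Second, fix a realization $b_0,\dots,b_d$ of $B$ and a point $x$. A position $z$ for $x$ must satisfy $\|z-b_i\|^2=\dist(x,b_i)^2$ for all $i$. Subtracting the equation for $i=0$ from the others gives the linear system
\[
2\langle z,b_i-b_0\rangle=\|b_i\|^2-\|b_0\|^2-\dist(x,b_i)^2+\dist(x,b_0)^2 ,\qquad i=1,\dots,d .
\]
The vectors $b_i-b_0$ form a basis of $\mathbb{R}^d$, so this system has full rank and $z$ is unique. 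Existence of $z$ is inherited from $P$ after applying the rigid motion that takes $p_B$ to $(b_0,\dots,b_d)$. Therefore the extension of any embedding of $B$ is exactly this unique one, which also yields Proposition~\ref{prop:basis} for $B$.
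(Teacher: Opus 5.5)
Your proof is correct, but it takes a different route from the paper. The paper gives no argument of its own: it cites Blumenthal (Chapter~IV), where these properties are developed intrinsically from Cayley--Menger determinants, i.e.\ from the sign and vanishing conditions of Proposition~\ref{thm:Blumental}, without ever fixing a realization.

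You instead fix one realization $P$ and use the Gram identity to show that independence of $Y$ is exactly affine independence of its image, and that this does not depend on the chosen realization. Then (i)--(iii) reduce to the axioms of the affine matroid on the ground set $X$. Part (iv) follows from three facts: $\dim\operatorname{aff}(P)=d$, equal Gram matrices force congruence, and the full-rank linear system pins down every further point.

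The two viewpoints are the same object in different coordinates. Indeed $CM(x_0,\dots,x_r)=(-1)^{r+1}2^r\det G_Y$, so the condition $(-1)^{j+1}CM(x_0,\dots,x_j)>0$ for $1\le j\le r$ says precisely that the leading principal minors of your Gram matrix are positive. Your route gives a short, elementary, self-contained proof in which realization-independence is transparent. The determinant route gives a purely distance-based criterion that needs no realization. That is what the paper uses in Section~\ref{sec:fill-in}, where embeddability must be written as polynomial conditions on the unknown entries.

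One point of precision: embeddings need not be injective, so read ``the points $\{p_y : y\in Y\}$'' as the indexed family $(p_y)_{y\in Y}$. Then two elements at distance $0$ make a set dependent, and with that reading every step goes through, including the exchange argument for (iii).
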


\begin{proposition}[{\cite[Chapter~IV]{Blumenthal1970}} and \cite{AlencarBLL15,SipplS85}]\label{prop:realization}
Given a distance space $\Xcal=(X,\dist)$ with $n$ points and a positive integer $d$, in $\Oh(n^3)$ time, it can be decided whether $\Xcal$ can be embedded into $\mathbb{R}^d$ and, if such an embedding exists, then a metric basis and a realization can be constructed in this running time.    
\end{proposition}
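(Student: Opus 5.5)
The plan is to reduce the question to the Gram matrix (Schoenberg's criterion) and run a single symmetric elimination with diagonal pivoting, which simultaneously tests embeddability, picks a metric basis, and produces coordinates.

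\textbf{Step 1: Gram matrix.} Write $X=\{x_0,\ldots,x_{n-1}\}$ and fix $x_0$ as the origin. Define the $(n-1)\times(n-1)$ matrix
\[
\Gamma_{ij}=\tfrac12\bigl(\dist(x_0,x_i)^2+\dist(x_0,x_j)^2-\dist(x_i,x_j)^2\bigr).
\]
Computing $\Gamma$ takes $\Oh(n^2)$ time. By the classical Schoenberg / Young--Householder theorem, $\Xcal$ is $d$-embeddable if and only if $\Gamma$ is positive semidefinite with $\operatorname{rank}\Gamma\le d$. Moreover, any factorization $\Gamma=LL^{\top}$ with $L\in\mathbb{R}^{(n-1)\times r}$ and $r\le d$ gives a realization: put $p_0=0$ and let $p_i$ be row $i$ of $L$, padded with zeros to length $d$.

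\textbf{Step 2: pivoted elimination.} Run symmetric Gaussian elimination with diagonal pivoting, i.e.\ an outer-product Cholesky factorization. At each stage, let $S$ be the current Schur complement and inspect its diagonal.
\begin{itemize}
\item If some diagonal entry of $S$ is negative, $\Gamma$ is not PSD; answer ``not embeddable''.
\item If every diagonal entry is zero but some off-diagonal entry is nonzero, $\Gamma$ is again not PSD, since a PSD matrix has $S_{ij}^2\le S_{ii}S_{jj}$; answer ``not embeddable''.
\item If $S=0$, stop.
\item Otherwise choose an index $i$ with $S_{ii}>0$ as the next pivot. The new column of $L$ is $S_{\cdot i}/\sqrt{S_{ii}}$, and we update $S\leftarrow S-\ell\ell^{\top}$.
\end{itemize}
Each stage costs $\Oh(n^2)$ real-RAM operations, square roots included. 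There are at most $n$ stages, so the total is $\Oh(n^3)$. If the process ends with $S=0$ after $r$ pivots, then $\Gamma=LL^{\top}$ is PSD of rank $r$, and we answer ``yes'' exactly when $r\le d$. In that case the rows of $L$, padded to length $d$, give the realization.

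\textbf{Step 3: the pivots form a metric basis.} Let $\{x_{i_1},\ldots,x_{i_r}\}$ be the pivot points and take $B=\{x_0,x_{i_1},\ldots,x_{i_r}\}$. After $j$ pivots, the principal minor of $\Gamma$ on $x_{i_1},\ldots,x_{i_j}$ equals the product of the pivots, hence is positive. By the standard link between the Cayley--Menger determinant and this Gram minor, $\{x_0,x_{i_1},\ldots,x_{i_j}\}$ is strongly $j$-embeddable for each $j$. So $B$ is an independent set of size $r+1$.

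Since $\Xcal$ is strongly $r$-embeddable, \Cref{prop:matroid}(iv) shows that $B$ is a metric basis for embeddings in $\mathbb{R}^r$. For a target dimension $d>r$ one proceeds as follows: extend $B$ by $d-r$ auxiliary points in generic orthogonal directions, or equivalently fix the embedding in the $r$-dimensional affine hull. This yields uniqueness up to rigid motion. I would state this carefully following Blumenthal's convention.

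\textbf{Main obstacle.} The delicate step is the exact PSD and rank test. A naive floating-point Cholesky would misclassify zero pivots, so correctness depends on two facts. First, in the exact real-RAM model, a zero diagonal entry in a PSD Schur complement forces its entire row and column to vanish. Second, the Schur complement after eliminating a positive pivot is again PSD whenever $\Gamma$ is. Both follow from the Schur complement lemma for PSD matrices. With these two facts, the elimination is rank-revealing, and the decision, the basis, and the realization are all produced within the $\Oh(n^3)$ bound.
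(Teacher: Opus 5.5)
The paper gives no proof of this proposition. It imports the result from Blumenthal and from the realization algorithms of Sippl--Scheraga and Alencar et al. Your route is Schoenberg's Gram-matrix criterion followed by a rank-revealing outer-product Cholesky with diagonal pivoting, which is essentially the Sippl--Scheraga decomposition written out, and its main line is sound:
\begin{itemize}
\item In exact arithmetic, Schur complements of a PSD matrix after a positive pivot are PSD, and a zero diagonal entry forces its row to vanish. So every ``not embeddable'' exit is correct.
\item Eliminated rows stay zero, so there are at most $n-1$ pivots and $\Oh(n^3)$ operations.
\item Ending with $S=0$ certifies $\Gamma=LL^{\top}$ with $r=\operatorname{rank}\Gamma$.
\item The principal minor of $\Gamma$ on the first $j$ pivots equals their product. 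The identity $(-1)^{j+1}CM(x_0,x_{i_1},\dots,x_{i_j})=2^{j}\det\Gamma[\{i_1,\dots,i_j\}]$ then shows that $B$ is independent of size $r+1$.
\end{itemize}
Compared with the citation, this gives a self-contained argument in which the decision, the basis and the coordinates all come from one factorization.

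The one step that does not work as written is the case $d>r$ at the end of Step~3. The paper's metric basis is relative to $\mathbb{R}^d$: every embedding of $B$ into $\mathbb{R}^d$ must extend uniquely to $X$. \Cref{prop:matroid}(iv), applied to a strongly $r$-embeddable space, only speaks about $\mathbb{R}^r$. Adding $d-r$ ``auxiliary points'' is not admissible, because a metric basis must be a subset of $X$. Also, ``uniqueness up to rigid motion'' is not the property required.

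No such device is needed. Existence of an extension follows by moving your computed realization by a rigid motion of $\mathbb{R}^d$. For uniqueness, take any $d$-realization $(q_x)_{x\in X}$ extending the given embedding of $B$, and translate so that $q_{x_0}=0$.
\begin{itemize}
\item Its Gram matrix is $\Gamma$, so $q_{x_1},\dots,q_{x_{n-1}}$ span a linear subspace of dimension exactly $r$.
\item The vectors $q_{x_{i_1}},\dots,q_{x_{i_r}}$ are linearly independent, since their Gram matrix is positive definite. Hence they span that whole subspace.
\item Each $q_{x_i}$ is therefore the unique vector in this subspace with $\langle q_{x_i},q_{x_{i_k}}\rangle=\Gamma_{i\,i_k}$ for $k=1,\dots,r$, and these values are fixed by $\dist$.
\end{itemize}
So the positions of $B$ determine all other points, and $B$ is a metric basis in $\mathbb{R}^d$ for every $d\ge r$. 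This is exactly the form in which the paper later uses bases of size smaller than $d+1$ inside $\mathbb{R}^d$.
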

 
We remind that by our convention, we do not distinguish metric spaces and the corresponding distance matrices. Thus, we say that an $n$-tuple of points of $\mathbb{R}^d$ is a realization of an $n\times n$ distance matrix. Furthermore, we  say that a subset of indices $X\subseteq\{1,\ldots,n\}$ is a metric basis of the distance matrix meaning that the corresponding points of the distance space compose a metric basis for it. In this spirit, if $M$ is the distance matrix of a (strongly) $d$-embeddable distance space, then we say that $M$ is a {\em (strongly) $d$-embeddable EDM}.

\paragraph*{Graphs and matrices} We consider simple finite undirected graphs and refer to  \cite{Diestel} for standard graph-theoretic notation. Given a graph $G$, we denote by $V(G)$ and $E(G)$ the sets of vertices and edges, respectively. Throughout the paper we use $n$ and $m$ for $|V(G)|$ and $|E(G)|$, respectively, if the graph is clear from the context. A set of pairwise adjacent vertices $K$ of $G$ is called a \emph{clique} and a set of pairwise non-adjacent vertices is \emph{independent}. 
 A family of $k$ cliques $\mathcal{C}=\{C_1,\ldots, C_k\}$ of a graph $G$ is an \emph{edge clique cover} of $G$ if for every edge $uv\in E(G)$, there is $i\in\{1,\ldots,k\}$ such that $uv$ is \emph{covered} by $C_i$, that is, $u,v\in C_i$.
For a vertex $v\in V(G)$, we use $N_G(v)$ to denote the \emph{open neighborhood} of $v$, that is, the set of vertices adjacent to $v$. 
The \emph{degree} $d_G(v)=|N_G(v)|$, and the \emph{maximum degree} of $G$ is 
$\Delta(G)=\max\{d_G(v)\mid v\in V(G)\}$. For a graph $G$, its \emph{complement} $\overline{G}$ is the graph with the same set of vertices as $G$, and two distinct vertices $u,v\in V(G)$ are adjacent in $\overline{G}$ if and only if they are not adjacent in $G$.

Let $A$ be an $n \times n$ matrix.
The matrix obtained from $A$ by deleting $n-k$ rows and $n-k'$ columns (with $1 \le k,k' \le n$) is called a $k \times k'$ \emph{submatrix} of $A$.
A \emph{principal submatrix} of $A$ is a square submatrix obtained by deleting the same set of row and column indices; that is, if the $i$-th row of $A$ is deleted, then the $i$-th column is also deleted. Let $G$ be a graph with $V(G)=\{1,\ldots,n\}$.
 An $n \times n$ matrix $A = (a_{ij})$ over $\mathbb{R}_{\geq 0}$ with some unspecified elements is said to be a \emph{$G$-partial} matrix if (i)~the entry $a_{ij}$ is defined (or specified) if and only if $ij \in E(G)$, (ii)~$a_{ij} = a_{ji}$ for all $ij \in E(G)$, and $a_{ii}=0$ for all $i\in\{1,\ldots,n\}$. We also say that $G$ is the \emph{underlying} graph of $A$. 
Let $A$ be a $G$-partial matrix and let $d$ be a positive integer. A matrix $D$ is said to be a  \emph{$d$-\edm completion} of $A$ if:
\begin{enumerate}
    \item $D$ is $d$-embeddable, and
    \item $d_{ij} = a_{ij}$ for all $i j \in E(G)$.
\end{enumerate}
See \Cref{Fig:example} for an example.

\begin{figure}[ht]
\centering

% Row 1: G, A, D
\begin{minipage}[t]{0.2\textwidth}
\centering
 \vspace{0.5em}
%\vspace{-1.2em}
\begin{tikzpicture}[scale=1, every node/.style={circle, draw, fill=white, minimum size=2mm}]
    \node (v1) at (0,0) {$1$};  %{$p_1$};
    \node (v2) at (1.5,0) {$2$};  %{$p_2$};
    \node (v3) at (0.75,1.2) {$3$}; %{$p_3$};
    \node (v4) at (0.75,2.2) {$4$}; %{$p_4$};

    \draw (v1) -- (v2);
    \draw (v2) -- (v3);
    \draw (v1) -- (v3);
    \draw (v3) -- (v4);
\end{tikzpicture}
\end{minipage}
\hfill
\begin{minipage}[t]{0.35\textwidth}
\centering
\vspace{0.5em}
\[
A =
\begin{bmatrix}
0 & 1 & 1.25 & * \\
1 & 0 & 1.25 & * \\
1.25 & 1.25 & 0 & 1 \\
* & * & 1 & 0
\end{bmatrix}
\]
\end{minipage}
\hfill
\begin{minipage}[t]{0.35\textwidth}
\centering
\vspace{0.5em}
\[
D =
\begin{bmatrix}
0 & 1 & 1.25 & 4.25 \\
1 & 0 & 1.25 & 4.25 \\
1.25 & 1.25 & 0 & 1 \\
4.25 & 4.25 & 1 & 0
\end{bmatrix}
\]
\end{minipage}

\vspace{1.5em}

\vspace{1.5em}

\caption{Graph $ G$, $G$-partial matrix $ A $, and \edm completion $ D$ of $A$. The distance space  $\Xcal$  defined by $D$ is embeddable into $\mathbb{R}^2$. For example, a realization $\varphi$ of $\Xcal$ would map the elements of $\Xcal$  to points $p_1= (0,0)$, $p_2= (1,0)$, $p_3= (0.5,1)$, and $p_4= (0.5,2)$.}\label{Fig:example}
\end{figure}
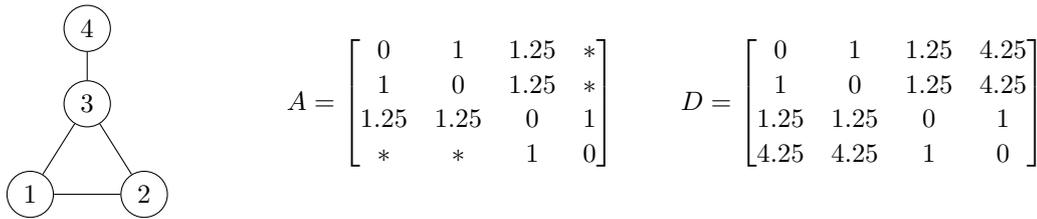

%As we are working with \red{symmetric $G$-partial matrices}, it 
It is convenient for us to combine graph-theoretic and matrix notation. Let  $A=(a_{ij})$ be symmetric $n\times n$ matrix and let $G$ be a graph with $V(G)=[n]$. For a set $X\subseteq [n]$, we use $G[X]$ to denote the subgraph of $G$ induced by $X$ and we use $A[X]$ to denote the principal submatrix indexed by $X$, that is, the submatrix of $A$ composed by the elements $a_{ij}$ with $i,j\in X$. We also write $G-X$ and $A-X$ to denote the graph obtained by the deletion of the vertices of $X$ and the submatrix obtained by the deletion of the rows and columns indexed by $X$, respectively, that is, $G-X=G[V(G)\setminus X]$ and $A-X=A[V(G)\setminus X]$. For a set $X=\{x\}$, we write $G-x$ and $A-x$ instead of $G-\{x\}$ and $A-\{x\}$, respectively.

We conclude this section with \Cref{obs:dense}, which shows that for tractability of \dedmcompshort{}, it is not sufficient to require high density of the specified element in the input matrix (or, equivalently, high density of the underlying graph $G$).

\denseObs*

\begin{proof}
We use the result of Saxe~\cite{saxe1979embeddability} that \dedmcompshort{} is strongly \classNP-complete for $d=1$. More precisely, Saxe~\cite[Theorem~4.1]{Saxe1979} proved that it is \classNP-complete to decide, given a weighted connected graph $G$ with a weight function $w\colon E(G)\rightarrow \{1,2,3,4\}$, whether there is a mapping $\varphi$ of the vertices of $G$ to points of the line such that for every edge $uv\in E(G)$, $|\varphi(u)-\varphi(v)|=w(uv)$. 

Let $G$ be an $n$-vertex connected graph with $V(G)=\{1,\ldots,n\}$, and let $w\colon E(G)\rightarrow \{1,2,3,4\}$ be a weight function. We use the property that the length of an arc of the circle of radius $r$ whose central angle is $\alpha$ equals $r\alpha$, and the length of the chord of the arc is $2r\sin(\alpha/2)$.
Let $\alpha<\frac{2\pi}{4n}=\frac{\pi}{2n}$ and let $r\in\mathbb{R}_{>0}$. We define 
$\dist_{u,v}=2r\sin(\frac{w(uv)\alpha}{2})$ for $uv\in E(G)$. Observe that because $G$ has no cycle of length more than $4n$ and $\alpha<\frac{\pi}{2n}$,
a mapping $\varphi$ of $V(G)$ to the line preserving the lengths of the edges exists if and only if there is a mapping  $\psi$ of $V(G)$ to the circle of radius $r$ such that $\|\psi(u)-\psi(v)\|_2^2=\dist_{u,v}^2$ for every $uv\in E(G)$. We next describe how to select the values $\alpha$ and $r$ to ensure that $r$ and the values of $\dist_{u,v}^2$ are integer.

Notice that $\arcsin(x)<\frac{\pi}{2}x$ if $0<x<1$. We select $\alpha=2\arcsin(\frac{1}{2n})$. Then $\alpha<2\frac{\pi}{2}\frac{1}{2n}=\frac{\pi}{2n}$. 
Then $\sin(\frac{\alpha}{2})=\frac{1}{2n}$. Recall that if $0\leq x\leq\pi/2$ then
$\cos x=\sqrt{1-\sin^2x}$. Then 
$\sin(2x)=2\sin x\sqrt{1-\sin^2x}$, $\sin(3x)=3\sin x-4\sin^3 x$, and 
$\sin(4x)=4\sqrt{1-\sin^2x}(\sin x-2\sin^3x)$. 
We define $r=2^4n^4$. Then because $w(uv)\in\{1,2,3,4\}$ for every $uv\in E(G)$, we obtain that $\dist_{u,v}^2=4r^2\sin^2(\frac{w(uv)\alpha}{2})$ is an integer for each $uv\in E(G)$. Moreover, each $\dist_{u,v}^2=\Oh(n^6)$.  

We construct the following instance $(M,d)$ of \dedmcompshort{} for $d=2$. Set $s=\lceil\frac{n^2}{2\varepsilon}\rceil$ and construct partial $N\times N$ symmetric matrix $M=(m_{ij})$ with zero diagonal elements for $N=n+s$:
\begin{itemize}
\item for distinct $i,j\in\{1,\ldots,n\}$, set $m_{ij}=\dist_{i,j}^2=4r^2\sin^2(\frac{w(uv)\alpha}{2})$ if $ij\in E(G)$, and let $m_{ij}$  be unspecified if $ij\notin E(G)$,
\item for distinct $i,j\in\{n+1,\ldots,n+s\}$, set 
$m_{ij}=0$,
\item for all $i\in\{1,\ldots,n\}$ and $j\in\{n+1,\ldots,n+s\}$, set $m_{ij}=m_{ji}=r^2$.
\end{itemize}

We show that $G$ admits a mapping $\varphi$ of the vertices of $G$ to points of the line such that for every edge $uv\in E(G)$, $|\varphi(u)-\varphi(v)|=w(uv)$ if and only if $(M,2)$ is a yes-instance of \dedmcompshort{}. 

For the forward direction, recall that if there is a mapping $\varphi$ preserving edge lengths then there is a mapping  $\psi$ of $V(G)$ to the circle $C$ of radius $r$ such that $\|\psi(u)-\psi(v)\|_2^2=\dist_{u,v}^2$ for every $uv\in E(G)$. We assume that $C$ has its center in $(0,0)$, and set $p_i=\psi(v_i)$ for $i\in\{1,\ldots,n\}$. Then we construct the completion $M^*=(m_{ij}^*)$ of $M$ by setting $m_{ij}^*=\|p_i-p_j\|_2^2$ for all distinct $i,j\in\{1,\ldots,n\}$ such that $ij\notin E(G)$.
We have that $M^*$ is a \edm because $(p_1,\ldots,p_N)$ where $p_j=(0,0)$ for $j\in \{n+1,\ldots,n+s\}$ is a realization of $M^*$ by the construction. 

For the opposite direction, assume that $M^*$ is a \edm completing $M$. Let $(p_1,\ldots,p_N)$ be an $N$-tuple of the points of $\mathbb{R}^2$ realizing $M^*$. Since $m_{ij}=0$ for distinct $i,j\in\{n+1,\ldots,n+s\}$, the points $p_{i}=p$ for $i\in\{n+1,\ldots,n+s\}$ are the same. Then the points $p_1,\ldots,p_n$ are on the circle $C$ of radius $r$ centered in $p$ as
$m_{ij}=m_{ji}=r^2$ for all $i\in\{1,\ldots,n\}$ and $j\in\{n+1,\ldots,n+s\}$. This means that there is a  mapping  $\psi$ of $V(G)$ to the circle of radius $r$ such that $\|\psi(u)-\psi(v)\|_2^2=\dist_{u,v}^2$ for every $uv\in E(G)$. This implies that there is a mapping $\varphi$ of the vertices of $G$ to points of the line such that for every edge $uv\in E(G)$, $|\varphi(u)-\varphi(v)|=w(uv)$. This completes the correctness proof for the reduction.

As $G$ has at most $\binom{n}{2}$ pairs of nonadjacent vertices, we have that $M$ has at most
$\binom{n}{2}$ unspecified elements. Then the number of unspecified elements is at most
$\binom{n}{2}\leq \frac{n^2}{2}\leq \varepsilon\lceil\frac{n^2}{2\varepsilon}\rceil= \varepsilon s\leq \varepsilon (n+s)=\varepsilon N$. This completes the proof.
\end{proof}

\section{Compressing $G$-partial matrices when $G$ is dense (distance from triviality I and II)}\label{sec:sparse}

This section is devoted to proving  \Cref{thm:compression}, \Cref{thm:Deltacompr} and \Cref{thm:Covercompr}. 

\compressionThm*

\begin{proof}

The main step of the proof is to identify an ``irrelevant'' vertex of $G$ in polynomial time, that is, a vertex whose deletion does not transform a no-instance into a yes-instance. 
%It is trivial to see that a yes-instance cannot be transformed into a no-instance by deleting a vertex of $G$.  
By exhaustively repeating this step of removing an irrelevant vertex we either solve the problem, or construct an equivalent instance on  $(d+ 1)^{\Oh(t^2)}$ rows and columns.

Let $M$ be a $G$-partial matrix, where $G$ excludes a $t$-block pattern. If $t=1$, then $M$ is fully specified (i.e., $G$ is complete) and so, we solve the instance $(M,d)$ using \Cref{prop:realization}. So, we assume that $t>1$. Let $\eta(d,t)=(d+1)t+(t-1)(d+1)^{t+1}+1,$ 
and
 $\rho(d,t)=\binom{2t+\eta(d,t)-2}{2t-1}.$

If $n< \rho(d,t)$, then the desired compression is already achieved and we simply output $(M,d)$. From now on, we assume that $n\geq \rho(d,t)$. Since $\overline{G}$ is $K_{t,t}$-free, it does not contain a clique of size $2t$.
By Ramsey's classic theorem~\cite{ErdosS1935}   $\overline G$ must therefore contain an independent set $X$ of size $\eta(d,t)$. Moreover, such a set  $X$ can be computed in polynomial time. 
By definition, $X$ induces a clique in $G$. Notice that because $M$ is not fully specified, $V(G)\neq X$.

We use \Cref{prop:realization} to check in polynomial time whether $M[X]$ is $d$-embeddable. If not, then  we  report that $(M,d)$ is a no-instance and stop.  For every $v\in V(G)\setminus X$, notice that
$C_v=\{v\}\cup (N_G(v)\cap X)$ is a clique of $G$. So, for every $v\in V(G)\setminus X$ we also check whether $M[C_v]$ is $d$-embeddable, and report that $(M,d)$ is a no-instance if it is not.

We next iteratively construct disjoint sets $X_1,\ldots,X_t\subseteq X$ of size at most $d+1$ as follows. Assuming that $X_0=\emptyset$, we select $X_i$ to be a metric basis of $M[X\setminus\bigcup_{j=0}^{i-1}X_j]$ for $i\in\{1,\ldots,t\}$ obtained using~\Cref{prop:realization}. As $|X_i|\leq d+1$ for $i\in\{1,\ldots,t\}$, we have that $|\bigcup_{i=1}^tX_i|\leq (d+1)t$. Moreover, since $X$ is large enough, the sets $X_1,\dots,X_t$ exist.

Consider $Y=\{y\in V(G)\setminus X\colon X_i\setminus N_G(y)\neq\emptyset\text{ for all }i\in\{1,\ldots,t\}\}$. 
%\end{equation*}
That is, $Y$ consists of all vertices outside $X$ having a non-neighbor in each $X_i$ in $G$. 
%We show that $Y$ has bounded size.

\begin{claim}
%\deferProof
\label{cl:sizeY}
$|Y|\leq (t-1)(d+1)^t$.
\end{claim}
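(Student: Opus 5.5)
We bound $|Y|$ by a pigeonhole argument combined with the $K_{t,t}$-freeness of $\overline{G}$. For every $y\in Y$ and every $i\in\{1,\ldots,t\}$, pick a vertex $x_i(y)\in X_i\setminus N_G(y)$; such a vertex exists by the definition of $Y$. This yields a signature $\sigma(y)=(x_1(y),\ldots,x_t(y))\in X_1\times\cdots\times X_t$. Since $|X_i|\le d+1$ for every $i$, there are at most $(d+1)^t$ distinct signatures.

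Now suppose, for contradiction, that $|Y|>(t-1)(d+1)^t$. By pigeonhole, some signature $(x_1,\ldots,x_t)$ is shared by a set $Y'\subseteq Y$ of at least $t$ vertices. Fix any $t$ of them, say $y_1,\ldots,y_t$. By construction, each $y_j$ is non-adjacent in $G$ to each $x_i$, so $y_jx_i\in E(\overline{G})$ for all $i,j\in\{1,\ldots,t\}$. The sets $\{x_1,\ldots,x_t\}$ and $\{y_1,\ldots,y_t\}$ are disjoint, because the $x_i$ lie in $X$ and the $y_j$ lie in $V(G)\setminus X$. The $x_i$ are pairwise distinct because the sets $X_1,\ldots,X_t$ are pairwise disjoint. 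Hence these two sets span a $K_{t,t}$ subgraph of $\overline{G}$. Equivalently, $M[\{y_1,\ldots,y_t\},\{x_1,\ldots,x_t\}]$ is a $t$-block pattern, which contradicts the assumption that $M$ excludes $t$-block patterns. Therefore $|Y|\le (t-1)(d+1)^t$.

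The step that needs the most care is ensuring that the $t$ chosen vertices $x_i$ are genuinely distinct and disjoint from the $y_j$, so that the bipartite subgraph has exactly $t$ vertices on each side. Both facts follow from the disjointness of the $X_i$ and from $Y\cap X=\emptyset$, so no further obstacle remains.
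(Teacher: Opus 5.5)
Your proof is correct and follows essentially the same route as the paper: you assign each $y\in Y$ a tuple of non-neighbors in $X_1,\ldots,X_t$, apply pigeonhole over at most $(d+1)^t$ tuples, and obtain a $K_{t,t}$ in $\overline{G}$. Your explicit check that the $x_i$ are distinct (since the $X_i$ are disjoint) and disjoint from the $y_j$ (since $Y\cap X=\emptyset$) fills in a detail the paper leaves implicit.
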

\begin{proof}
%[Proof of \Cref{cl:sizeY}]
Assume for the sake of contradiction that $|Y|> (t-1)(d+1)^t$.
For every $y\in Y$, $X_i\setminus N_G(y)\neq\emptyset$ for each $i\in\{1,\ldots,t\}$. Hence, for each $y\in Y$ and every $i\in\{1,\ldots,t\}$, $y$
has a neighbor $x_i^y\in X_i$ in $\overline{G}$. Notice that because $|X_i|\leq d+1$ for each $i\in\{1,\ldots,t\}$, there are at most $(d+1)^t$ distinct $t$-tuples $(x_1^y,\ldots,x_t^y)$. Thus, since 
$|Y|> (t-1)(d+1)^t$, by the pigeonhole principle, $Y$ contains at least $t$ vertices $y$ associated with the same tuple 
$(x_1^y,\ldots,x_t^y)$. However, this means that $\overline{G}$ contains a $K_{t,t}$ which is a contradiction. This proves the claim.
\end{proof}

For every $y\in Y$, we consider the clique $K_y=N_G(y)\cap X$. Since $K_y\subseteq C_y$ and we have already concluded that $M[C_y]$ is $d$-embeddable, we have that $M[K_y]$ is $d$-embeddable. 
%an EDM in $\mathbb{R}^d$. 
We compute a metric basis $Z_y$ of $M[K_y]$ using \Cref{prop:realization}. Because  
$|Y|\leq (t-1)(d+1)^t$ by \Cref{cl:sizeY} and $|Z_y|\leq d+1$ for $y\in Y$, we have that 
$|\bigcup_{y\in Y}Z_y|\leq (t-1)(d+1)^{t+1}$. Because $|\bigcup_{i=1}^tX_i|\leq (d+1)t$ and $|X|\geq (d+1)t+(t-1)(d+1)^{t+1}+1$, we have that there is $w\in X\setminus\Big(\Big(\bigcup_{i=1}^tX_i\big)\cup\big(\bigcup_{y\in Y}Z_y\Big)\Big)$.

\begin{claim}\label{cl:irrelevant}
The instances $(M,d)$ and $(M-w,d)$ of \dedmcompshort{} are equivalent. 
\end{claim}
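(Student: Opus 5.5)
The plan is to show both directions. The forward direction is immediate. The backward direction places $w$ at a point forced by the completion of $M-w$, and then uses an affine-hull argument to check that every specified entry in the row of $w$ is respected.

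\emph{Forward direction.} If $D$ is a $d$-EDM completion of $M$, then $D-w$ is a $d$-EDM completion of $M-w$.

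\emph{Placing $w$.} Suppose $D'$ is a $d$-EDM completion of $M-w$, realized by points $(p_v)_{v\neq w}$ in $\mathbb{R}^d$.
\begin{itemize}
\item By construction $w\notin X_1$, and $X_1$ is a metric basis of $M[X]$, which we already verified is $d$-embeddable.
\item Since $X_1\subseteq X\setminus\{w\}$ is independent of maximum size for $M[X]$, it is also a metric basis of $M[X\setminus\{w\}]$.
\item By \Cref{prop:basis}, the realization $(p_x)_{x\in X_1}$ extends uniquely to a realization of $M[X]$. By the same uniqueness applied to $X\setminus\{w\}$, this extension agrees with $p_x$ for every $x\in X\setminus\{w\}$.
\end{itemize}
This determines a point $q$ for $w$ such that $(p_v)_{v\in X\setminus\{w\}}$ together with $q$ realizes $M[X]$. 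I will set $p_w=q$ and define the completion $D$ of $M$ as the distance matrix of $(p_v)_{v\in V(G)}$. This is automatically a $d$-EDM. Specified entries not involving $w$ are correct because $D'$ is a completion, and entries between $w$ and $X$ are correct by the choice of $q$.

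\emph{Key tool: affine combinations.} What remains is every $v\in N_G(w)\setminus X$: we must show $\|q-p_v\|^2=m_{vw}$. I will use the following fact. If $B$ is a metric basis of a $d$-embeddable set $S$ and $u\in S$, then in every realization the image of $u$ is the same affine combination $\sum_{b\in B}\lambda_b b$ of the images of $B$. The coefficients $\lambda_b$ depend only on the distances inside $B\cup\{u\}$. For any further point $z$ we then have
\[
\|z-u\|^2=\sum_{b}\lambda_b\|z-b\|^2-\tfrac12\sum_{b,b'}\lambda_b\lambda_{b'}\|b-b'\|^2 .
\]
So $\|z-u\|^2$ is determined by the distances from $z$ to $B$ and the distances within $B\cup\{u\}$.

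\emph{Case $v\notin Y$.} Some $X_i$ satisfies $X_i\subseteq N_G(v)$. Now $X_i$ is a metric basis of $X'=X\setminus\bigcup_{j<i}X_j$, we have $w\in X'$, and $w\notin X_i$. The set $X_i\cup\{w,v\}$ lies in $C_v$, and $M[C_v]$ was verified to be $d$-embeddable.
\begin{itemize}
\item Apply the formula in a realization of $M[C_v]$ with $u=w$, $z=v$, $B=X_i$. This gives $m_{vw}$ as the expression built from the entries $m_{vb}$ ($b\in X_i$) and the distances inside $X_i\cup\{w\}$.
\item Apply the same formula in our realization $(p_v)$, where $q$ is the image of $w$ and $X'$ is realized consistently, so the same $\lambda$'s apply. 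This gives $\|q-p_v\|^2$ as the same expression, because $\|p_v-p_b\|^2=m_{vb}$ for the specified entries with $b\in X_i$.
\end{itemize}
Hence $\|q-p_v\|^2=m_{vw}$.

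\emph{Case $v\in Y$.} Here $w\in K_v=N_G(v)\cap X$, and $Z_v$ is a metric basis of $M[K_v]$ with $w\notin Z_v$. The same argument applies with $B=Z_v$, using the embeddings of $M[C_v]\supseteq M[K_v\cup\{v\}]$ and of $M[X]\supseteq M[K_v]$.

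\emph{Main obstacle.} The delicate step is justifying that the affine coefficients $\lambda_b$ are intrinsic and common to both realizations. This is needed because $X_i$ or $Z_v$ may span fewer than $d$ dimensions, so $v$ itself is not pinned down (it may be reflected across the hull). The argument relies only on the fact that $w$ lies in the affine hull of a basis. That fact follows from \Cref{prop:matroid}(iv): after translating so that a fixed element of $B$ sits at the origin, the Gram matrix of $S$ has the same rank as that of $B$.
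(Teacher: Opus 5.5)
Your proof is correct and uses the same skeleton as the paper. The forward direction is trivial, $w$ is placed via the metric basis $X_1$ of $M[X]$, and each remaining entry $m_{vw}$ is certified through $X_i\subseteq N_G(v)$ when $v\notin X\cup Y$ and through $Z_v$ when $v\in Y$.

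The difference is in how each case is certified. The paper argues via uniqueness of realizations. It lets $L'$ be the affine span of the basis ($X_i$ or $Z_v$) and splits on whether $p_v\in L'$. In the second case it uses the exchange property (\Cref{prop:matroid}~(iii)) to make the basis together with $v$ a metric basis of the clique, so the position of $w$ relative to $v$ is forced.

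You instead use the identity $\|z-u\|_2^2=\sum_b\lambda_b\|z-b\|_2^2-\frac12\sum_{b,b'}\lambda_b\lambda_{b'}\|b-b'\|_2^2$. Here the affine coordinates $\lambda_b$ of $w$ are determined by the distances within $B\cup\{w\}$. This writes $m_{vw}$ as a fixed function of specified entries, so it must agree in the realization of $C_v$ and in the extended realization of $M-w$.

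What each route buys:
\begin{itemize}
\item Your identity removes the case split and shows explicitly why a reflection of $v$ across the hull of $B$ is harmless.
\item The price is the Gram-matrix fact that the $\lambda_b$ exist and depend only on distances within $B\cup\{w\}$. You justify this correctly. That dependence is exactly what you need, because you apply it to realizations of $C_v$, not of the whole set $S$.
\item The paper's route stays within the metric-basis vocabulary of Blumenthal's propositions.
\end{itemize}

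Your placement step is also slightly more careful than the paper's: you check that the unique extension from $X_1$ agrees with the existing points of $X\setminus\{w\}$.
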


\begin{proof}
%[Proof of \Cref{cl:irrelevant}]
It is straightforward that if $(M,d)$ is a yes-instance then the same holds for $(M-w,d)$.

% For the opposite implication.
Let us now argue the converse.   Let $M=(m_{ij})$ and assume without loss of generality that $w=n$.
Assume that $(\hat{M},d)$ is a yes-instance for $\hat{M}=M-w$. Notice that $\hat{M}$ is a $\hat{G}$-partial matrix for $\hat{G}=G-w$. We have that $\hat{M}$ can be completed to an $(n-1)\times (n-1)$ EDM matrix $\hat{M}^*=(m_{ij}^*)$ for $i,j\in\{1,\ldots,n-1\}$. Hence, there is an $(n-1)$-tuple of points $(p_1,\ldots,p_{n-1})$ of $\mathbb{R}^d$ realizing $\hat{M}^*$, that is, $\|p_i-p_j\|_2^2=m_{ij}^*$ for $i,j\in\{1,\ldots,n-1\}$. We will extend $(p_1,\ldots,p_{n-1})$ by adding a point $p_w$ realizing $w$, and construct $M^*=(m_{ij}^*)$ from $\hat{M}^*$ by adding a row and a column such that $M^*$ is a completion of $M$. 

Recall that $X$ is a clique of $G$ and $X_1$ is a metric basis of the EDM $M[X]$. Let $d'=|X_1|-1$. Then the points $p_v$ for $v\in X\setminus \{w\}$ are embedded in the $d'$-dimensional affine subspace $L$ of $\mathbb{R}^d$ defined by the points $p_v$ for $v\in X_1$. Furthermore, because $M[X]$ is an EDM, there is a unique point $p_w\in \mathbb{R}^d$ such that $\|p_w-p_v\|_2^2=m_{vw}$ for $v\in X_1$, by \Cref{prop:basis}. Notice that $p_w\in L$.  
For every $v\in V(G)\setminus \{w\}$, we define $m_{vw}^*=m_{wv}^*=\|p_v-p_w\|_2^2$. We claim that for every $v\in V(G)\setminus \{w\}$ such that $vw\in E(G)$, $m_{vw}^*=m_{vw}$.

We consider three cases depending on where $v$ lies in the partition $X\uplus Y\uplus V(G)\setminus(X\cup Y)$ of $V(G)$. The proof is based on the following observation. 
Assume that $C$ is a clique of $G$ such that $M[C]$ is $d$-embeddable and let $B\subseteq C$ be a metric basis of $M[C]$. Then for any realization $(q_v)_{v\in C}$ of $M[C]$, the $|C|$-tuple of points $(q_v)_{v\in C}$ is the same as $(p_v)_{v\in C}$ up to rigid transformations, by \Cref{prop:basis}. Moreover, it is completely defined by $M$. Also, if $L$ is the $d'$-dimensional affine subspace of $\mathbb{R}^d$ containing $(p_v)_{v\in B}$ for $d'=|B|-1$, then all the points $p_u$ for $u\in C$ are in $L$, and by \Cref{prop:basis}, their coordinates in $L'$ are uniquely defined by the values $m_{vu}$ for $v\in B$.  

%\noindent \textbf{Case~1.} 
\begin{description}

\item[Case 1: $v\in X$.] Then the equality $m_{vw}^*=m_{vw}$ immediately follows from the definition of $p_w$ and the fact that $X_1$ is a metric basis of $M[X]$.

\item[Case 2: $v\in Y$.] Notice that in this case, $w\in K_v$. Recall that $Z_v\subseteq X$ is a metric basis of $M[K_v]$. Let $d''=|Z_v|-1\leq d'$. Since $K_v\subseteq X$, 
the points $p_u$ for $u\in Z_v$ are in $L$. Therefore, there is a $d''$-dimensional affine subspace $L'\subseteq L$ containing the points $p_u$ for $u\in K_v$ including $u=w$. Suppose that $p_v\in L'$. Then $Z_v$ is a metric basis of $M[C_v]$ for the clique $C_v$ of $G$. Since a metric basis defines the positions of the points in the realization in a unique way, we have that $m_{vw}^*=m_{vw}$. Assume that $p_v\notin L'$. Then $Z_v\cup \{v\}$ is a metric basis of $M[C_v]$ by \Cref{prop:matroid}~(iii).  %of size $d''+1$. 
This implies that $m_{vw}^*=m_{vw}$ by the uniqueness of the realization.

\item [Case 3: $v\notin (X\cup Y)$.] 
Then by the definition of $Y$, there is $i\in\{1,\ldots,t\}$ such that $X_i\subseteq N_G(v)$. As $vw\in E(G)$, we have that $C_v'=X_i\cup\{v,w\}$ is a clique of $G$. By the choice of $X_i$, $X_i$ is a metric basis of $M[X\setminus\bigcup_{j=0}^{i-1}X_j]$. Then for $d''=|X_i|-1$, there is a $d''$-dimensional affine subspace $L'\subseteq L$ containing the points $p_u$ for $u\in X_i\cup\{w\}$. If $p_v\in L'$ then $X_i$ is a metric basis of $M[C_v']$ and $m_{vw}^*=m_{vw}$. If $p_v\notin L'$, then $X_i\cup \{v\}$ is a metric basis of $M[C_v']$ by \Cref{prop:matroid}~(iii). Then again we have that $m_{vw}^*=m_{vw}$. This completes the case analysis.
\end{description}

Because $m_{vw}^*=m_{wv}^*=m_{vw}$ for all $v\in V(G)\setminus \{w\}$, we obtain that $M^*$ is a completion of $M$. By construction of $(p_1,\ldots,p_{n-1})$ and $p_w$, we have that $(p_1,\ldots,p_n)$ is a realization of $M^*$ in $\mathbb{R}^d$. Thus, $M^*$ is a completion of $M$ in $\mathbb{R}^d$. This means that $(M,d)$ is a yes-instance of \dedmcompshort{} and completes the proof.
\end{proof}

%By \Cref{cl:irrelevant}, $w$ is irrelevant. Then 
%we 
We set $M:=M-w$ and iterate. In at most $n$ rounds, we either solve the problem or obtain an equivalent instance $(M',d)$, where $M'$ is a $n'\times n'$ submatrix of $M$ with $n'<\binom{2t+\eta(d,t)-2}{2t-1}$. For the running time, note that the construction of the independent set $X$ as well as that of $X_1,\ldots,X_t$ is done in polynomial time by \Cref{prop:realization}. Then $Y$ and the sets $Z_y$ for $y\in Y$ can also be constructed in polynomial time by \Cref{prop:realization}. So, an irrelevant vertex can be found in polynomial time. Because the total number of rounds is at most $n$, the overall running time is polynomial. 
%This concludes the proof.
\end{proof}

As discussed, by combining \Cref{thm:compression} with \Cref{thm:exactAlgoCompletion}, we obtain the following corollary.

\fptKttCor*

%\newpage

In the remaining part of the section, we show that for some other classes of dense graphs $G$, we can obtain a better compression for \dedmcompshort{} than the one obtained by using \Cref{thm:compression} as a black box. The proofs of these results still follow the same lines as the proof of \Cref{thm:compression}. 
First, we consider the case when $\overline{G}$ has bounded maximum degree. Since a graph of maximum degree at most $\Delta$ does not contain $K_{\Delta+1,\Delta+1}$, in the following theorem, we consider a subclass of matrices considered in \Cref{thm:compression}.

\DeltaCompr*

\begin{proof}
%[Proof of \Cref{thm:Deltacompr}]
Let $(M,d)$ be an instance of \dedmcompshort{} such that $M$ is a $G$-partial matrix and the maximum degree of $\overline{G}$ is at most $\Delta$. If $G$ is a complete graph then we solve the problem directly using \Cref{prop:realization}, and if $n\leq (d+1)(\Delta+1)^2$ then we immediately output the instance. Thus, $n>(d+1)(\Delta+1)^2$. 
Because the maximum degree of $\overline{G}$ is at most $\Delta$, $\overline{G}$ has an independent set $X$ of size at least $(d+1)(\Delta+1)+1$ which can be computed greedily. We prove that either $X$ has an irrelevant vertex $w$ or $(M,d)$ is a no-instance. 

Because $X$ is a clique of $G$, $M[X]$ should be an EDM in $\mathbb{R}^d$. We verify this by making use of  \Cref{prop:realization} and return that $(M,d)$ is a no-instance if this is not the case. Also, for every $v\in V(G)\setminus X$, we consider the clique $C_v=\{v\}\cup (N_G(v)\cap X)$. If there is $v\in V(G)\setminus X$ such that $M[C_v]$ is not an  EDM in $\mathbb{R}^d$, then we report that $(M,d)$ is a no-instance and stop. 
From now on, we assume that this is not the case.

We iteratively construct disjoint sets $X_1,\ldots,X_{\Delta+1}\subseteq X$ of size at most $d+1$ as follows. Assuming that $X_0=\emptyset$, we select $X_i$ to be a metric basis of $M[X\setminus\bigcup_{j=0}^{i-1}X_j]$ for $i\in\{1,\ldots,\Delta+1\}$ using \Cref{prop:realization}. Because $|X_i|\leq d+1$ for $i\in\{1,\ldots,\Delta+1\}$, $|\bigcup_{i=1}^{\Delta+1}X_i|\leq (d+1)(\Delta+1)$. Since $|X|\geq (d+1)(\Delta+1)+1$, there is 
$w\in X\setminus\Big(\bigcup_{i=1}^{\Delta+1}X_i\Big)$. We prove that $w$ is irrelevant.

\begin{claim}\label{cl:Deltairr}
The instances $(M,d)$ and $(M-w,d)$ of \dedmcompshort{} are equivalent. 
\end{claim}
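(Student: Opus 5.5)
The plan is to mirror the proof of \Cref{cl:irrelevant}, only with a simpler case analysis: because $\overline{G}$ has maximum degree at most $\Delta$, no auxiliary set $Y$ is needed.

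\textbf{Easy direction.} If $(M,d)$ is a yes-instance, then so is $(M-w,d)$, since deleting a row and column of a completion gives a completion of $M-w$.

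\textbf{Setting up the converse.} Fix a realization $(p_v)_{v\neq w}$ in $\mathbb{R}^d$ of a completion of $M-w$. Since $X$ is a clique of $G$, $M[X]$ is $d$-embeddable, and $X_1$ is a metric basis of $M[X]$. By \Cref{prop:basis} there is a unique point $p_w$ in the affine hull $L$ of $\{p_u: u\in X_1\}$ with $\|p_w-p_u\|_2^2=m_{uw}$ for all $u\in X_1$. Moreover, the points $p_u$ for $u\in X$ then realize $M[X]$, so $\|p_w-p_u\|_2^2=m_{uw}$ for all $u\in X$. Define $m^*_{vw}=\|p_v-p_w\|_2^2$ for every $v\neq w$. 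It remains to check that $m^*_{vw}=m_{vw}$ for each $v$ with $vw\in E(G)$. This already holds for $v\in X$, which is Case~1 of \Cref{cl:irrelevant}.

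\textbf{The case $v\notin X$.} In $\overline{G}$ the vertex $v$ has at most $\Delta$ neighbours. The sets $X_1,\ldots,X_{\Delta+1}$ are pairwise disjoint, so by pigeonhole some $X_i$ contains no $\overline{G}$-neighbour of $v$, that is, $X_i\subseteq N_G(v)$. Together with $vw\in E(G)$ this makes $C'_v=X_i\cup\{v,w\}$ a clique of $G$. This plays the role of Case~3 of \Cref{cl:irrelevant}, and the argument there transfers.

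In detail, $M[C'_v]$ is $d$-embeddable, since $C'_v\subseteq C_v$ and $M[C_v]$ was checked. Because $X_i$ is a metric basis of $M[X\setminus\bigcup_{j<i}X_j]$ and $w$ lies in that set, the point $p_w$ lies in the affine hull $L'$ of $\{p_u:u\in X_i\}$. Its position there is determined by the values $m_{uw}$ for $u\in X_i$, which $p_w$ satisfies. We then split on the position of $p_v$.

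If $p_v\in L'$, then $X_i$ is a metric basis of $M[C'_v]$. If $p_v\notin L'$, then $X_i\cup\{v\}$ is independent, and hence a metric basis of $M[C'_v]$ by \Cref{prop:matroid}. In either case the realization $(p_u)_{u\in X_i\cup\{v\}}$ agrees with $M$ on all its pairs. By \Cref{prop:basis}, the unique placement of $w$ relative to this basis consistent with $M[C'_v]$ is therefore determined. That placement is $p_w$, because $p_w$ already has the correct distances to $X_i$ and lies in $L'$. Consequently $\|p_v-p_w\|_2^2=m_{vw}$.

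\textbf{Main obstacle.} The delicate point is this last uniqueness step. One must make sure that the point forced by $M[C'_v]$ coincides with our $p_w$, which requires $p_w$ to lie in the affine hull of the basis. When $p_v\notin L'$, the basis $X_i\cup\{v\}$ has a larger hull, so the argument has to use that $w$ is dependent on $X_i$ within $M[C'_v]$. This dependence holds because $X_i\cup\{w\}$ is not independent in $M[X]$, and independence depends only on the distances among those points. With that, all specified entries of row $w$ are matched, and $(p_1,\ldots,p_n)$ realizes a $d$-completion of $M$.
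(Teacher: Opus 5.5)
Your proposal is correct and follows essentially the same route as the paper. You place $p_w$ in the affine hull of $\{p_u : u\in X_1\}$, treat $v\in X$ directly, and for $v\notin X$ use pigeonhole over the disjoint sets $X_1,\ldots,X_{\Delta+1}$ to obtain the clique $X_i\cup\{v,w\}$, then split on whether $p_v$ lies in the affine hull of $\{p_u : u\in X_i\}$; your explicit observation that $X_i\cup\{w\}$ is dependent, so the position of $w$ forced by the basis $X_i\cup\{v\}$ must coincide with $p_w$, fills in a step the paper leaves implicit.
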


The proof of this claim follows along the same lines as the proof of \Cref{cl:irrelevant}, so we have included it after the proof of this theorem.

Since $w$ is irrelevant, we set $M:=M-w$ and iterate. In at most $n$ rounds, we either solve the problem or obtain an equivalent instance $(M',d)$ where $M'$ is a $n'\times n'$  submatrix of $M$ with $n'
\leq (d+1)(\Delta+1)^2$. This concludes the description of the algorithm and its correctness proof.   

The construction of $X$ can be done in polynomial time. Then $X_1,\ldots,X_{\Delta+1}$ are constructed in polynomial time by \Cref{prop:realization}. Thus, we find $w$ in polynomial time. Since the number of rounds is at most $n$, the overall running time is polynomial. This concludes the proof.
\end{proof}

\begin{proof}[Proof of \Cref{cl:Deltairr}.] 
As mentioned, the proof follows the same lines as the proof of \Cref{cl:irrelevant}. It is sufficient to prove that if $(M,d)$ is a yes-instance, then the same holds for $(M-w,d)$. Let $M=(m_{ij})$ and assume without loss of generality that $w=n$. Assume that $(\hat{M},d)$ is a yes-instance for $\hat{M}=M-w$. Notice that $\hat{M}$ is a $\hat{G}$-partial matrix for $\hat{G}=G-w$. We have that $\hat{M}$ can be completed to a $(n-1)\times (n-1)$ EDM matrix $\hat{M}^*=(m_{ij}^*)$ for $i,j\in\{1,\ldots,n-1\}$. Hence, there is an $(n-1)$-tuple of points $(p_1,\ldots,p_{n-1})$ of $\mathbb{R}^d$ realizing $\hat{M}^*$, that is, $\|p_i-p_j\|_2^2=m_{ij}^*$ for $i,j\in\{1,\ldots,n-1\}$. We extend $(p_1,\ldots,p_{n-1})$ by adding a point $p_w$ realizing $w$, and construct $M^*=(m_{ij}^*)$ from $\hat{M}^*$ by adding a row and a column such that $M^*$ is a completion of $M$. 

We use the fact that $X$ is a clique of $G$ and $X_1$ is a metric basis of the EDM $M[X]$. Let $d'=|X_1|-1$. Then the points $p_v$ for $v\in X\setminus \{w\}$ are embedded in the $d'$-dimensional affine subspace $L$ of $\mathbb{R}^d$ defined by the points $p_v$ for $v\in X_1$. Furthermore, because $M[X]$ is an EDM, there is a unique point $p_w\in L$ such that $\|p_w-p_v\|_2^2=m_{vw}$ for $v\in X_1$ by \Cref{prop:basis}.   
For every $v\in V(G)\setminus \{w\}$, we define $m_{vw}^*=m_{wv}^*=\|p_v-p_w\|_2^2$. We show that for every $v\in V(G)\setminus \{w\}$ such that $vw\in E(G)$, $m_{vw}^*=m_{vw}$. We have the following two cases.

%\noindent \textbf{Case~1.}

\begin{description}
\item[Case 1: $v\in X$.] Then the equality $m_{vw}^*=m_{vw}$ follows from the fact that $X_1$ is a metric basis of $M[X]$ and the definition of $p_w$.

\item[Case 2: $v\in V(G)\setminus X$.] Because the maximum degree of $\overline{G}$ is at most $\Delta$, 
there is $i\in\{1,\ldots,\Delta+1\}$ such that $X_i\subseteq N_G(v)$. As $vw\in E(G)$, we have that $C_v'=X_i\cup\{v,w\}$ is a clique of $G$. By the choice of $X_i$, $X_i$ is a metric basis of $M[X\setminus\bigcup_{j=0}^{i-1}X_j]$. Then for $d''=|X_i|-1$, there is a $d''$-dimensional affine subspace $L'\subseteq L$ containing the points $p_u$ for $u\in X_i\cup\{w\}$. If $p_v\in L'$ then $X_i$ is a metric basis of $M[C_v']$ and $m_{vw}^*=m_{vw}$. If $p_v\notin L'$, then $X_i\cup \{v\}$ is a metric basis of $M[C_v']$ by \Cref{prop:matroid}~(iii). Then again we have that $m_{vw}^*=m_{vw}$. %This completes the case analysis.

\end{description}

Since $m_{vw}^*=m_{wv}^*=m_{vw}$ for all $v\in V(G)\setminus \{w\}$, $M^*$ is a completion of $M$, and  $(p_1,\ldots,p_n)$ is a realization of $M^*$ in $\mathbb{R}^d$. Thus, $M^*$ is a completion of $M$ in $\mathbb{R}^d$. This completes the proof.
\end{proof}

Finally in this section, we prove \Cref{thm:Covercompr}.
% which we restate here.
%
\CoverCompr*

\begin{proof} Let $(M,d)$ be an instance of \dedmcompshort{} such that $M$ is a $G$-partial matrix and 
$\mathcal{C}=\{C_1,\ldots,C_k\}$ is an edge clique cover of $G$. If $G$ is a complete graph then we solve the problem directly using \Cref{prop:realization}, and we immediately output the instance if $|V(G)|\leq (d+1)k^2$. For each $i\in\{1,\ldots,k\}$, we check whether $M[C_i]$ is an EDM in $\mathbb{R}^d$ using \Cref{prop:realization}. If this does not hold for some $i\in\{1,\ldots,k\}$, we conclude that $(M,d)$ is a no-instance and stop. 
From now on, we assume that $n>(d+1)k^2$ and for each $i\in\{1,\ldots,k\}$, $M[C_i]$ is an EDM in $\mathbb{R}^d$. We prove that there is an irrelevant vertex.

If $G$ has an isolated vertex $w$ then $w$ is trivially irrelevant because the corresponding point can be mapped in $\mathbb{R}^d$ arbitrarily. Assume that $G$ has no isolated vertices. Then each vertex belongs to some clique in $\mathcal{C}$. Because $|V(G)|> (d+1)k^2$, there is $i\in\{1,\ldots,k\}$ such that 
$|C_i|>(d+1)k$. For every $j\in\{1,\ldots,k\}$ (including $j=i$), we select a metric basis $X_j$ of $M[C_i\cap C_j]$ using \Cref{prop:realization}. Because $|X_j|\leq d+1$ for $j\in\{1,\ldots,k\}$, $|\bigcup_{j=1}^{k}X_j|\leq (d+1)k$. As $|C_i|> (d+1)k$, there is 
$w\in C_i\setminus\Big(\bigcup_{i=1}^{k}X_i\Big)$. We prove that $w$ is irrelevant.

\begin{claim}\label{cl:CCirr}
The instances $(M,d)$ and $(M-w,d)$ of \dedmcompshort{} are equivalent. 
\end{claim}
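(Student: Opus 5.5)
One direction is immediate. If $(M,d)$ is a yes-instance, then deleting the row and column of $w$ from a $d$-completion of $M$ gives a $d$-completion of $M-w$.

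For the converse we follow the template of \Cref{cl:irrelevant} and \Cref{cl:Deltairr}. Fix a completion $\hat{M}^*$ of $\hat M=M-w$ and a realization $(p_v)_{v\neq w}$ of it in $\mathbb{R}^d$. We place $p_w$ using the clique $C_i$, which contains $w$. Since $X_i$ is a metric basis of $M[C_i\cap C_i]=M[C_i]$ and $M[C_i]$ is $d$-embeddable, \Cref{prop:basis} gives a unique point $p_w$ in the affine hull $L$ of $\{p_u: u\in X_i\}$ with $\|p_w-p_u\|_2^2=m_{wu}$ for all $u\in X_i$. Here $X_i\subseteq C_i\setminus\{w\}$, so these points are already fixed. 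We then set $m^*_{vw}=\|p_v-p_w\|_2^2$ for every $v\neq w$. It remains to check that $m^*_{vw}=m_{vw}$ for every edge $vw\in E(G)$.

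Take an edge $vw$. Since $\mathcal{C}$ is an edge clique cover, some $C_j$ contains both $v$ and $w$, so $w\in C_i\cap C_j$. The set $X_j$ is a metric basis of $M[C_i\cap C_j]$ and avoids $w$. Consider the clique $C'=X_j\cup\{v,w\}\subseteq C_j$. Because $M[C_j]$ is $d$-embeddable, so is $M[C']$, and its specified entries coincide with $\hat M^*$ on pairs not involving $w$.

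First we show that $p_w$ agrees with the genuine position forced by $X_j$. The points $(p_u)_{u\in C_i\setminus\{w\}}$ realize $M[C_i]-w$. By \Cref{prop:basis} they coincide, up to a rigid motion, with a realization of the full $M[C_i]$ restricted to $C_i\setminus\{w\}$. In that realization, the image of $w$ satisfies all distances to $C_i$; call this image $q_w$. Since $X_i$ is a basis and $X_i\subseteq C_i\setminus\{w\}$, the rigid motion maps $q_w$ to $p_w$. Hence $\|p_w-p_u\|_2^2=m_{wu}$ for all $u\in C_i\setminus\{w\}$, and in particular for all $u\in X_j$.

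Next, let $L'$ be the affine hull of $\{p_u:u\in X_j\}$. Because $X_j$ is a metric basis of $M[C_i\cap C_j]$ and $w\in C_i\cap C_j$, the point $p_w$ lies in $L'$ and is determined by its distances to $X_j$. We now argue exactly as in Case 3 of \Cref{cl:irrelevant}. If $p_v\in L'$, then $X_j$ is a metric basis of $M[C']$. If $p_v\notin L'$, then $X_j\cup\{v\}$ is a metric basis of $M[C']$ by \Cref{prop:matroid}(iii). In either case the realization $(p_u)_{u\in X_j\cup\{v\}}$, which satisfies the prescribed distances of $M[C']$ among $X_j\cup\{v\}$, extends uniquely to a realization of $M[C']$. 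The point $p_w$ is the forced position for $w$ in that extension, so $\|p_v-p_w\|_2^2=m_{vw}$.

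The main obstacle is the bookkeeping in the step that shows $p_w$ has the correct distances to all of $C_i\setminus\{w\}$, and hence to $X_j$, even though it was defined only through $X_i$. One also has to justify that a realization of the basis of $M[C']$ extends uniquely regardless of the ambient dimension. Both are applications of \Cref{prop:basis}, combined with the observation that $X_j\subseteq C_i\setminus\{w\}$. With these in place, $(p_1,\dots,p_n)$ realizes a completion of $M$, and so $(M,d)$ is a yes-instance.
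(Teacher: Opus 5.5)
Your proposal is correct and follows the paper's proof. You place $p_w$ via the basis $X_i$ of $M[C_i]$; then, for an edge $vw$, you use the clique $C_j$ covering it, the basis $X_j$ of $M[C_i\cap C_j]$, and the dichotomy $p_v\in L'$ versus $p_v\notin L'$ to force $\|p_v-p_w\|_2^2=m_{vw}$. The only differences are minor. You treat $v\in C_i$ and $v\notin C_i$ uniformly and use the smaller clique $X_j\cup\{v,w\}$ rather than $(C_i\cap C_j)\cup\{v,w\}$. You also spell out two points the paper leaves implicit: why $p_w$ has the correct distances to all of $C_i\setminus\{w\}$, and that $C_j$ must be chosen to contain both $v$ and $w$.
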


Again, the proof uses the same arguments as the proof of \Cref{cl:irrelevant}, so we have placed its proof outside this theorem's proof. 
Because we have an irrelevant vertex $w$,  we set $M:=M-w$ and iterate. In at most $n$ rounds, we either solve the problem or obtain an equivalent instance $(M',d)$ where $M'$ is a $n'\times n'$ submatrix of $M$ with $n'\leq (d+1)k^2$. This concludes the description of the algorithm and its correctness proof.   
Because $\mathcal{C}$ is given, $X_1,\ldots,X_k$ can be constructed in polynomial time by \Cref{prop:realization}. Thus, we find $w$ in polynomial time. Since the number of rounds is at most $n$, the overall running time is polynomial. This concludes the proof.
\end{proof}

\begin{proof}[Proof of \Cref{cl:CCirr}.]
We have to prove that if $(M,d)$ is a yes-instance then the same holds for $(M-w,d)$. Let $M=(m_{ij})$ and assume without loss of generality that $w=n$. Assume that $(\hat{M},d)$ is a yes-instance for $\hat{M}=M-w$. Notice that $\hat{M}$ is a $\hat{G}$-partial matrix for $\hat{G}=G-w$. We have that $\hat{M}$ can be completed to a $(n-1)\times (n-1)$ EDM matrix $\hat{M}^*=(m_{ij}^*)$ for $i,j\in\{1,\ldots,n-1\}$. Hence, there is an $(n-1)$-tuple of points $(p_1,\ldots,p_{n-1})$ of $\mathbb{R}^d$ realizing $\hat{M}^*$, that is, $\|p_i-p_j\|_2^2=m_{ij}^*$ for $i,j\in\{1,\ldots,n-1\}$. We extend $(p_1,\ldots,p_{n-1})$ by adding a point $p_w$ realizing $w$, and construct $M^*=(m_{ij}^*)$ from $\hat{M}^*$ by adding a row and a column such that $M^*$ is a completion of $M$. 

Since $C_i$ is a clique of $G$ and $X_i$ is a metric basis of the EDM $M[C_i]$, 
the points $p_v$ for $v\in C_i\setminus \{w\}$ are embedded in the $d'=(|X_i|-1)$-dimensional affine subspace $L$ of $\mathbb{R}^d$ defined by the points $p_v$ for $v\in C_i$. Because $M[C_i]$ is an  EDM, there is a unique point $p_w\in L$ such that $\|p_w-p_v\|_2^2=m_{vw}$ for $v\in X_1$ by \Cref{prop:basis}.   
For every $v\in V(G)\setminus \{w\}$, we define $m_{vw}^*=m_{wv}^*=\|p_v-p_w\|_2^2$. We show that for every $v\in V(G)\setminus \{w\}$ such that $vw\in E(G)$, $m_{vw}^*=m_{vw}$. We have the following two cases.

\begin{description}\item[Case 1: $v\in C_i$.] Then the equality $m_{vw}^*=m_{vw}$ follows from the fact that $X_i$ is a metric basis of $M[C_i]$ and the definition of $p_w$.

\item[Case 2: $v\in V(G)\setminus C_i$.] Then $v\in C_j$ for some $j\in\{1,\ldots,k\}$ such that $j\neq i$. Recall that $X_j$ is a metric basis of $M[C_i\cap C_j]$. Since $v\in C_j\setminus C_i$,
$C_j'=(C_i\cap C_j)\cup\{v,w\}$ is a clique of $G$. By the choice of $X_j$, for $d''=|X_i|-1$, there is a $d''$-dimensional affine subspace $L'\subseteq L$ containing the points $p_u$ for $u\in X_j\cup\{w\}$. If $p_v\in L'$ then $X_j$ is a metric basis of $M[C_j']$ and $m_{vw}^*=m_{vw}$. If $p_v\notin L'$, then $X_j\cup \{v\}$ is a metric basis of $M[C_j']$ by \Cref{prop:matroid}~(iii). Then again we have that $m_{vw}^*=m_{vw}$. 
\end{description}
As $m_{vw}^*=m_{wv}^*=m_{vw}$ for all $v\in V(G)\setminus \{w\}$, $M^*$ is a completion of $M$, and  $(p_1,\ldots,p_n)$ is a realization of $M^*$ in $\mathbb{R}^d$. Thus, $M^*$ is a completion of $M$ in $\mathbb{R}^d$. This completes the proof.
\end{proof}

Notice that \Cref{thm:Covercompr} assumes that an edge clique cover is given. However, an edge clique cover of size at most $k$ can be found in \classFPT in $k$ time (if exists) by the results of Gramm et al.~\cite{GrammGHN08}.

\begin{proposition}[\cite{GrammGHN08}]\label{prop:ECC}
Given a graph $G$ and a positive integer $k$, it can be decided in $2^{2^{\Oh(k)}}n^{\Oh(1)}$ time whether $G$ admits an edge clique cover of size at most $k$. Furthermore, an edge clique cover can be found in the same time if it exists.     
\end{proposition}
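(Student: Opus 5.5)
Proposition~\ref{prop:ECC} is a cited result of Gramm et al.~\cite{GrammGHN08}. I would not re-prove it from scratch; instead I would recall their kernelization-based argument in outline.

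The plan has two parts: a polynomial-time kernel for \textsc{Edge Clique Cover} with at most $2^k$ vertices, followed by brute force on the kernel. For the kernel I would apply the standard reduction rules exhaustively.
\begin{enumerate}
\item Delete isolated vertices.
\item If an edge $uv$ lies in only one maximal clique (for instance, when $N[u]\cap N[v]$ is a clique), take that clique into the solution, mark its edges as covered, and decrease $k$.
\item If two vertices $u,v$ have the same closed neighbourhood, $N[u]=N[v]$, delete one of them. Any cover of the reduced graph extends by placing $v$ in every clique that contains $u$.
\end{enumerate}
The key counting step then goes as follows. Suppose the reduced graph has a cover $C_1,\dots,C_k$ and no isolated vertices. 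Assign to each vertex $v$ the characteristic vector $\{i : v\in C_i\}\in\{0,1\}^k$. Two vertices with the same vector have the same closed neighbourhood, so after rule 3 the vectors are pairwise distinct. Because there are no isolated vertices, none of them is the all-zero vector. Hence the reduced graph has at most $2^k$ vertices; if it has more, we answer no.

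On a kernel with at most $2^k$ vertices, I would enumerate assignments of vertices to subsets of $[k]$. There are $(2^k)^{2^k}=2^{k2^k}$ such assignments, which is $2^{2^{\Oh(k)}}$. For each assignment, check in polynomial time that every set $C_i$ is a clique and that every edge is covered. Finally, lift the solution back to the original graph through the reversible reduction rules, which yields the cover itself rather than just a decision. The total running time is $2^{2^{\Oh(k)}}n^{\Oh(1)}$.

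The main obstacle is justifying rule 2 when marked, already-covered edges are present, because such edges may still be used by later cliques. Gramm et al.\ handle this by carrying the covered/uncovered labels through the kernel. Since the paper invokes the proposition as a black box, I would simply cite~\cite{GrammGHN08} for these details.
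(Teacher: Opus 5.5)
The paper gives no proof of this proposition: it quotes it from \cite{GrammGHN08} and uses it as a black box. Your decision to cite it therefore matches the paper, and the $2^k$-vertex kernel followed by brute force is the right route. The outline you give, however, would not work as written if read as a self-contained argument.

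Your counting step uses the fact that every edge of the reduced graph must be covered by $C_1,\dots,C_k$. After rule 2 fires, the graph keeps marked edges that need not be covered. So a non-isolated vertex can have the all-zero vector, and two vertices with equal vectors need not be twins. Concretely, take the path $a\,x\,c$ with $k=2$. Rule 2 fires on $ax$ and then on $xc$, leaving the same path with both edges marked and $k=0$. No rule applies any more, the graph has $3>2^0$ vertices, and you would answer no on a yes-instance.

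Separately, your justification of twin deletion fails when $uv$ is an isolated edge. In that case $u$ lies in no clique of any cover of $G-v$, so putting $v$ into the cliques containing $u$ leaves $uv$ uncovered. For $G=K_2$ and $k=0$, deleting $v$ and then the isolated $u$ turns a no-instance into a yes-instance, unless you impose an order of rule application that you never specify.

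Both problems disappear if you drop rule 2, which is not needed for the $2^k$ bound. Use only three rules:
\begin{enumerate}
\item delete isolated vertices;
\item delete an isolated edge and decrease $k$ by one;
\item if adjacent $u\neq v$ with $N[u]=N[v]$ do not form an isolated edge, delete $v$.
\end{enumerate}
No edge is ever marked. In rule 3, $u$ has a neighbour $w\neq v$, so $u$ lies in some clique of any cover of $G-v$, and your extension argument becomes valid. Your vector argument then shows that a reduced yes-instance has at most $2^k-1$ vertices. The brute force over $2^{k2^k}$ assignments and the lifting of the cover back through the rules go through as you describe.
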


Then we obtain the following corollary.

\fptECC*

\begin{proof}
We first use the algorithm of Gramm et al.~\cite{GrammGHN08} to compute  an edge clique cover of size~$k$
for the underlying graph (or conclude that one does not exist). This takes time $2^{2^{\bigoh(k)}}\cdot n^{\bigoh(1)}$.  Following this, we run the compression in 
\Cref{thm:Covercompr} followed by the algorithm of \Cref{thm:exactAlgoCompletion}. 
\end{proof}

\section{The case of bounded fill-in (distance from triviality III)}\label{sec:fill-in}

In this section, we present our algorithm for {\dedmcompshort} for $G$-partial matrices when $G$ is almost chordal, i.e., it has a small fill-in. Recall that the minimum fill-in of a graph $G$ is the size of a smallest set of non-edges whose addition to $G$ makes it chordal.

\XPFillIn*

Towards the proof of this theorem, let us prepare as follows.
%by recalling some results from the literature. 

\begin{proposition}[{\rm \cite{Bakonyi1995}}]\label{prop:chordalTheorem}
Every $G$-partial matrix $M$ where $G$ is chordal
can be completed to an EDM in ${\mathbb R}^{d}$  if every fully specified principal submatrix of $M$ is $d$-embeddable. 
\end{proposition}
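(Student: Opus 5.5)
We prove the statement by induction on the number of vertices of the chordal graph $G$, using a perfect elimination ordering (equivalently, a simplicial vertex). The base case $n\le 1$ is trivial. For the inductive step, let $v$ be a simplicial vertex of $G$; such a vertex exists by the classical theorem of Dirac. Its closed neighborhood $K=\{v\}\cup N_G(v)$ is then a clique. The graph $G-v$ is again chordal, and every fully specified principal submatrix of $M-v$ is also one of $M$, so it is $d$-embeddable. By induction, $M-v$ has a $d$-EDM completion $\hat M^*$ with a realization $(p_u)_{u\ne v}$ in $\mathbb{R}^d$.

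It remains to place $p_v$ so that $\|p_v-p_u\|_2^2=m_{uv}$ for all $u\in N_G(v)$. The other entries of row $v$ are then filled with $\|p_v-p_u\|_2^2$. Because $K$ is a clique, $M[K]$ is fully specified and hence $d$-embeddable by hypothesis. Moreover, $M[N_G(v)]$ agrees with $\hat M^*[N_G(v)]$, since $N_G(v)$ is a clique of $G-v$. We therefore fix a metric basis $B$ of $M[N_G(v)]$ using \Cref{prop:matroid}, with $|B|=r+1$, and let $L$ be the $r$-dimensional affine hull of $\{p_u\colon u\in B\}$.

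We take any realization $(q_u)_{u\in K}$ of $M[K]$ in $\mathbb{R}^d$. By \Cref{prop:basis}, the configurations $(q_u)_{u\in N_G(v)}$ and $(p_u)_{u\in N_G(v)}$ are congruent, so there is an isometry of $\mathbb{R}^d$ mapping $q_u$ to $p_u$ for all $u\in N_G(v)$. Setting $p_v$ to be the image of $q_v$ gives all the required distances. This is the same extension argument as in \Cref{cl:irrelevant}. If $q_v$ lies in the affine hull of $N_G(v)$, the image is forced; otherwise we need $r+1\le d$ to fit a new orthogonal direction. This holds because $K$ is $d$-embeddable, so $B\cup\{v\}$ is independent of size at most $d+1$.

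The main obstacle is exactly this congruence-extension step. We need that an isometry between two finite subsets of $\mathbb{R}^d$ extends to a rigid motion of all of $\mathbb{R}^d$, a standard fact obtained by extending along affine hulls and completing an orthonormal frame. With it in hand, the rest is bookkeeping. Chordality enters only through the existence of a simplicial vertex, which is what guarantees that the new point's specified neighbours form a clique.
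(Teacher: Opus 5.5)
Your proof is correct. The paper gives no proof of this proposition; it is imported from \cite{Bakonyi1995}. So your argument is a self-contained justification rather than a reconstruction of an in-paper proof.

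The only real ingredient is the fact you name at the end. A distance-preserving map between finite subsets of $\mathbb{R}^d$ extends to a rigid motion of $\mathbb{R}^d$: move a base point to the origin, observe that the Gram matrices coincide, so $q_u\mapsto p_u$ extends to a linear isometry between the spans, and complete it on the orthogonal complements, which have equal dimension. Once you invoke this fact, the metric basis $B$, the affine hull $L$, and the case split on whether $q_v$ lies in the affine hull of $N_G(v)$ can be deleted. The image of $q_v$ under a rigid motion of $\mathbb{R}^d$ lies in $\mathbb{R}^d$ automatically, so no count $r+1\le d$ is needed. Note also that you use the hypothesis only for the sets $\{v\}\cup N_G(v)$ along a perfect elimination ordering, each of which lies in a maximal clique. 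You therefore prove the maximal-clique version, which is the form the proof of \Cref{thm:XPAlgorithmbyFillIn} effectively uses.

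Your route differs from the classical one for chordal completion theorems, going back to Grone et al.\ \cite{Grone1984} in the PSD setting. There one passes from $G$ to the complete graph through a chain of chordal graphs, each obtained by adding one non-edge that lies in a unique maximal clique. At each step a single entry is filled using a one-unknown completion lemma for a clique minus an edge. That approach stays at the matrix level, but it needs a graph-theoretic chain lemma plus a separate single-entry lemma, whose geometric content is again gluing two configurations along a common clique. Your vertex-elimination argument needs only a simplicial vertex and one congruence per step, and preservation of the dimension $d$ is immediate.

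It is also the cleanest instance of the one-point extension behind the paper's irrelevant-vertex claims such as \Cref{cl:irrelevant}. There the neighbours of $w$ need not form a clique, so the paper must fix $p_w$ via a metric basis and check each neighbour through a different clique. Simpliciality lets a single isometry satisfy all constraints at once.
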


\begin{definition}[\cite{BasuPR06}]\label{def:polynomialFormulas}
{\em   Let $R$ be a real closed field and $\Pcal\subset R[X_1,\dots,X_t]$ be a finite set of polynomials. 
%$\Pcal$-{\em Atoms} are polynomial equations and  inequations created from $\Pcal$. Formally, 
A {\em $\Pcal$-atom} is one of $P=0$,$P\neq 0$, $P\ge 0$, $P\le 0$, where $P$ is a polynomial in $\Pcal$ and a {\em quantifier-free $\Pcal$-formula} is a formula constructed from $\Pcal$-atoms together with the logical connectives $\wedge$, $\vee$ and $\neg$. The $R$-{\em realization} of a formula $F$ with free variables $x_1,\dots,x_r$  is a mapping $\nu:\{x_1,\dots,x_r\}\to R$ so that the sentence resulting from $F$ (denoted by $F[\nu]$) by instantiating each free variable $x$ with $\nu(x)$ is true. 
}
\end{definition}

\begin{proposition}[Theorem 13.13, \cite{BasuPR06}]\label{prop:basuPolynomialModelFindingTheorem}
Let $s,\ell\in {\mathbb N}$. 
Let $(\exists X_1)\dots  (\exists X_t) F(X_1,\dots,X_t),$
    be a sentence, where $F(X_1,\dots,X_t)$ is a quantifier free $\Pcal$-formula where $\Pcal\subset R[X_1,\dots,X_t]$ is a set of at most $s$ polynomials each of degree at most $\ell$. There exists an algorithm to decide the truth of the sentence with complexity\footnote{The measure of complexity here is the number of arithmetic operations. Recall that we use the real RAM model.}
%    \footnote{The measure of complexity here is the 
    $s^{t+1}\cdot \ell^{\Oh(t)}$ in $D$ where $D$ is the ring generated by the coefficients of the polynomials in $\Pcal_{\ell}$. 
\end{proposition}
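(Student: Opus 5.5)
Since this is a cited result, I only outline the standard critical-point strategy behind it. The reduction is to show that the truth of $(\exists X_1)\dots(\exists X_t)F$ depends only on which \emph{sign conditions} $\sigma\in\{-1,0,1\}^{\Pcal}$ are realizable in $R^t$. Indeed, $F$ is a Boolean combination of $\Pcal$-atoms, so its truth at a point is determined by the signs of the polynomials of $\Pcal$ at that point. It therefore suffices to compute a finite set $S\subset R^t$ that meets every connected component of every realizable sign condition of $\Pcal$. Then we evaluate the signs of all $s$ polynomials at each point of $S$ and test $F$ on the resulting sign vector. The sentence is true if and only if some point of $S$ satisfies $F$. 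The complexity target then splits as $|S|\le s^{t}\ell^{\Oh(t)}$ sample points, times $s\cdot\ell^{\Oh(t)}$ operations per point for sign evaluation, which gives $s^{t+1}\ell^{\Oh(t)}$.

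To construct $S$, I would follow the usual general-position route. First, replace each $P\in\Pcal$ by the infinitesimally perturbed polynomials $P\pm\varepsilon$ (and $P\pm\delta\varepsilon$ as needed) over the real closed field of Puiseux series $R\langle\varepsilon,\delta\rangle$. After this, at most $t$ of the perturbed hypersurfaces meet at any point, and every intersection is smooth and transversal. A standard argument shows that every connected component of a realizable sign condition of $\Pcal$ contains, in the limit $\varepsilon\to 0$, a point of some connected component of an algebraic set $Z(Q_1,\dots,Q_j)$ with $j\le t$, where the $Q_i$ are perturbed polynomials. There are $\sum_{j\le t}\binom{s}{j}4^j=s^{t}\Oh(1)^t$ such choices. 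For each choice, we make the set bounded by intersecting with a large sphere (an infinitesimal radius $1/\Omega$ can be used). We then find one point on every connected component as a critical point of a generic linear projection, or of a distance function. To do so, we deform the system once more, \`a la Gr\"obner-free ``critical point method,'' into one with finitely many simple solutions, at most $\ell^{\Oh(t)}$ by B\'ezout. These solutions are represented by rational univariate representations and Thom encodings. Limits as the infinitesimals tend to $0$ are taken symbolically, so that all computation stays inside the ring $D$ generated by the coefficients.

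The main obstacle will be this last step. We must certify that the critical-point system is zero-dimensional with $\ell^{\Oh(t)}$ solutions regardless of degeneracy in the original input (unbounded components, singular sets, non-transversal intersections). We must also do this while counting only arithmetic operations in $D$. I would handle it via the deformation $\mathrm{Def}(Q,\zeta)=(1-\zeta)Q+\zeta\,(\text{generic polynomial with known solutions})$ and algebraic limits of bounded points, exactly as in the sampling algorithm of Basu, Pollack and Roy. The remaining steps, namely enumerating subsets and evaluating signs through Thom encodings, are routine bookkeeping that produces the stated $s^{t+1}\ell^{\Oh(t)}$ bound.
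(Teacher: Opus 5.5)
The paper gives no proof of this statement; it imports Theorem~13.13 of Basu, Pollack and Roy. Your outline matches the skeleton of their proof: sample points meeting every component of every realizable sign condition, obtained by critical-point sampling on bounded algebraic sets cut out by at most $t$ perturbed polynomials, followed by sign evaluation. The step that fails as you state it is the perturbation, and it is exactly the step that yields the factor $s^{t}$. Perturbing every $P\in\Pcal$ by the \emph{same} infinitesimals, $P\pm\varepsilon$ and $P\pm\delta\varepsilon$, does not put the family in general position. For $t=2$ and $\Pcal=\{X_1,\,X_2,\,2X_1-X_2\}$, the three lines $X_1=\varepsilon$, $X_2=\varepsilon$ and $2X_1-X_2=\varepsilon$ all pass through $(\varepsilon,\varepsilon)$. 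Transversality fails too: $X_1=\varepsilon$ and $X_1+X_2^2=\varepsilon$ are tangent at $(\varepsilon,0)$. The reason is that the perturbing constants do not depend on the index, so the vector $(\varepsilon,\dots,\varepsilon)$ can lie in the (at most $t$-dimensional) image of $(P_{i_0},\dots,P_{i_t})\colon R^{t}\to R^{t+1}$. A component of a sign condition may then be reachable only through the zero set of more than $t$ perturbed polynomials. In that case the enumeration is no longer bounded by $\sum_{j\le t}4^{j}\binom{s}{j}$, and the $s^{t+1}\ell^{\Oh(t)}$ bound is unsupported.

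The repair, which is what Basu, Pollack and Roy do, keeps your two levels but makes the perturbing term depend on the index. Take $\ell'>\ell$ even and $H_i=1+\sum_{j=1}^{t} i^{j}X_j^{\ell'}$, and replace $P_i$ by $(1-\delta)P_i\pm\delta H_i$ and $(1-\delta)P_i\pm\delta\gamma H_i$. For $t+1$ distinct indices, the coefficient matrix $(i^{j})_{0\le j\le t}$ of the homogenized $H_i$ is Vandermonde, so these $H_i$ have no common zero, even projectively. Hence the parameter values at which the perturbed polynomials acquire a common projective zero form a proper Zariski-closed set, and the infinitesimals avoid it. Because $H_i\ge 1$ on $R^{t}$, the four perturbations of a single $P_i$ have pairwise disjoint real zero sets. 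This justifies taking at most one perturbation per index and gives your count $\sum_{j\le t}4^{j}\binom{s}{j}$. Only a constant number of infinitesimals is introduced, so each sampling call still costs $\ell^{\Oh(t)}$. Two smaller corrections. First, smoothness and transversality are not needed, since bounded algebraic sampling applied to $Q_1^2+\dots+Q_j^2$ handles singular zero sets. Second, boundedness must be enforced by intersecting with the closed ball of radius $1/\Omega$, e.g.\ as the projection of the sphere $\Omega^2(X_1^2+\dots+X_{t+1}^2)=1$ in one extra variable. Intersecting with the sphere in $R^{t}$ itself misses every component lying strictly inside it.
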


\begin{definition}\label{def:hatMatrix}
{\em Let $M$ be a $G$-partial matrix where $V(G)=[n]$ and let $Z$ denote the set of pairs in $[n]\choose 2$ such that every pair in $Z$ indexes a pair of  unspecified entries of $M$, that is, for every $\{i,j\}\in Z$, $M_{ij}$ and $M_{ji}$ are both unspecified. Define the matrix $\hat M$ as follows.   If $\{i,j\}\notin Z$, then define $\hat M_{ij}=M_{ij}$, otherwise define $\hat M_{ij}=z_{ij}$ where $z_{ij}$ is an indeterminate. 
 }
 \end{definition}

We have the following adaptation of the notion of $Z$-augmented Cayley-Menger determinant from \cite{BentertFGRS25}.  
\begin{definition}\label{def:augmentedCayleyMenger}
{\em Let $M$ and $Z$ be as described in \Cref{def:hatMatrix}.
Let $I=\{x_0,\dots,x_r\}\subseteq [n]$.
The {\em $Z$-Augmented Cayley-Menger determinant indexed by $I$} is obtained from the Cayley-Menger determinant by replacing each $\dist^{2}_{x_i,x_j}$ with $\hat M_{x_i,x_j}$.

That is:

\[
CM_Z ( x_0,x_1, \dots, x_r) =  \det\left( \begin{matrix}
0 & 1 & 1 & 1 & \dots & 1 \\
1 & 0 & \hat M_{x_0x_1} & \hat M_{x_0x_2} & \dots & \hat M _{x_0x_r} \\
1 & \hat M_{x_0x_1} & 0 & \hat M_{x_1x_2} & \dots & \hat M_{x_1x_r} \\
1 & \hat M_{x_0x_2} & \hat M_{x_1x_2} & 0 & \dots & \hat M_{x_2x_r} \\
\vdots & \vdots & \vdots & \vdots & \ddots & \vdots \\
1 & \hat M_{x_0x_r} & \hat M_{x_1x_r} & \hat M_{x_2x_r} & \dots & 0 \\
\end{matrix} \right)
\]

}
\end{definition}

\begin{lemma}
%\deferProof
\label{lem:structureOfAugmentedCM}
Let $M,Z$ and $I=\{x_0,\dots,x_r\}$ be as described in \Cref{def:augmentedCayleyMenger}.  Then, $CM_Z ( x_0,x_1, \dots, x_r)$ is a multi-variate polynomial with real coefficients, over the set  $\{z_{ij}\mid \{i,j\}\in Z\}$ of indeterminates  where each monomial has degree at most ${\rm min}(|I|,2|Z|)$. 
\end{lemma}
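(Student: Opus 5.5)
We will prove the lemma by expanding the determinant with the Leibniz formula and bounding how many indeterminates a single term can contain. Write $A$ for the $(r+2)\times(r+2)$ matrix in \Cref{def:augmentedCayleyMenger}, with rows and columns indexed by $0,1,\dots,r+1$. Row/column $0$ is the bordering row/column of ones (and a zero corner), and row/column $a+1$ corresponds to $x_a$. Every entry of $A$ is either a real constant (namely $0$, $1$, or a specified value $M_{x_ax_b}$) or a single indeterminate $z_{x_ax_b}$ with $\{x_a,x_b\}\in Z$. Hence
\[
CM_Z(x_0,\dots,x_r)=\sum_{\sigma}\operatorname{sgn}(\sigma)\prod_{i=0}^{r+1}A_{i\sigma(i)}
\]
is a sum of products of such entries. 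Each product is a real constant times a product of indeterminates, so the determinant is a polynomial with real coefficients in $\{z_{ij}\mid\{i,j\}\in Z\}$. This settles the first part; it remains to bound the degree of every monomial.

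\textbf{Bound by $|I|$.} A term with $A_{00}$ as a factor is zero, so every nonzero term has $\sigma(0)\neq 0$. Such a term uses exactly one entry from row $0$, which is a constant $1$. The remaining $r+1$ factors come from rows $1,\dots,r+1$, and at most $r+1$ of them are indeterminates. Moreover, some row $i\ge 1$ has $\sigma(i)=0$ (column $0$ must be hit), and that factor is also the constant $1$. So each monomial has degree at most $r=|I|-1\le|I|$.

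\textbf{Bound by $2|Z|$.} A single term contains at most one factor per row, and each indeterminate $z_{ij}$ occupies only the two positions $(a+1,b+1)$ and $(b+1,a+1)$, where $x_a=i$ and $x_b=j$. So within one permutation product, $z_{ij}$ appears with exponent at most $2$. Summing over the at most $|Z|$ distinct indeterminates (only those with $\{i,j\}\subseteq I$ can occur at all), the degree is at most $2|Z|$.

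Combining the two bounds, each monomial has degree at most $\min(|I|,2|Z|)$. There is no real obstacle here. The only points needing care are that the diagonal entries $0$ and the border entries $1$ contribute no indeterminates, and that the lemma speaks of monomials after collecting terms. Cancellation between terms can only remove monomials, never raise a degree, so both bounds carry over from individual terms to the polynomial.
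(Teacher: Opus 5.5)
Your proof is correct and is essentially the paper's argument made explicit: the constant border row/column caps the number of indeterminate factors in any Leibniz term by $|I|$, and counting the positions occupied by indeterminates caps it by $2|Z|$. You also sharpen the first bound to $|I|-1$, since column $0$ must be hit by some row $i\ge 1$. You read $z_{ij}$ as a single indeterminate in two symmetric positions (exponent at most $2$), which fits the displayed matrix, where $\hat M_{x_ax_b}$ appears on both sides of the diagonal; the paper's one-line proof instead counts $2|Z|$ distinct variables, each in a unique position (as with $z_{uv},z_{vu}$ in the fill-in algorithm), and the bound $2|Z|$ holds under either reading.
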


	\begin{proof}
	The lemma follows from the  fact that the determinant is computed for a matrix with the following properties. It is a $(|I|+1) \times (|I|+1)$ matrix where the first row and column contain only constants, there are exactly $2|Z|$ distinct variables in total and each variable appears at a unique position in the matrix. 
\end{proof}

\begin{lemma}
\label{lem:computeCMpolynomial}
There is an algorithm that, given $M, I, Z$ as described in \Cref{def:augmentedCayleyMenger}, runs in time $|I|^{\bigoh(|I|)}\cdot n^{\bigoh(1)}$ and produces the polynomial representing the $Z$-Augmented Cayley-Menger determinant indexed by $I$. 
	\end{lemma}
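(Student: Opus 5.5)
We compute the polynomial by expanding the determinant with the Leibniz formula. Write $N=|I|+1$ and let $A$ be the $N\times N$ matrix of \Cref{def:augmentedCayleyMenger}. Each entry of $A$ is either a real constant (a $0$, a $1$, or a specified entry $M_{x_ix_j}$) or a single indeterminate $z_{x_ix_j}$ with $\{x_i,x_j\}\in Z$. We first build $A$ explicitly. This takes $\bigoh(|I|^2)$ lookups in $M$ and in $Z$, plus at most $n^{\bigoh(1)}$ time to read the input and index the pairs of $Z$ lying inside $I$.

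Next we use
\[
\det A=\sum_{\sigma\in S_N}\operatorname{sgn}(\sigma)\prod_{a=1}^{N}A_{a,\sigma(a)}.
\]
For each of the $N!=|I|^{\bigoh(|I|)}$ permutations we do the following. We compute $\operatorname{sgn}(\sigma)$ in $\bigoh(N)$ time. We multiply together the constant factors, and we record the variable factors as a multiset of indeterminates. This multiset is a monomial with at most $N$ factors, represented as a sorted list of $\bigoh(N)$ (variable, exponent) pairs. Each term therefore costs $\bigoh(N\log N)$ arithmetic operations in the real RAM model. If some constant factor is $0$, we discard the term.

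We collect the terms in a dictionary keyed by monomial, summing coefficients that share a key. There are at most $N!$ distinct monomials, and each insertion costs $\mathrm{poly}(N)\cdot \log(N!)=|I|^{\bigoh(1)}$ comparisons. The whole procedure therefore runs in $|I|^{\bigoh(|I|)}\cdot n^{\bigoh(1)}$ time. Finally we drop monomials whose coefficient is zero. The result is exactly the polynomial $CM_Z(x_0,\dots,x_r)$, consistent with \Cref{lem:structureOfAugmentedCM}.

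There is no real obstacle. The only point needing care is that the output is a \emph{symbolic} polynomial, so Gaussian elimination over the reals cannot be used directly. Leibniz expansion avoids this, and its factorial cost fits inside the allowed $|I|^{\bigoh(|I|)}$ bound. As an alternative we could use Laplace expansion with memoization over column subsets, which costs $2^{\bigoh(|I|)}$ polynomial operations. Each such operation multiplies polynomials of bounded size, so this route is also within budget. We prefer Leibniz expansion because it is simpler.
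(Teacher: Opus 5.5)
Your proposal is correct and matches the paper's approach. The paper simply constructs $\hat M[I]$ and evaluates the $(|I|+1)\times(|I|+1)$ determinant ``by brute force''. Your Leibniz expansion, with symbolic collection of monomials and $(|I|+1)!=|I|^{\bigoh(|I|)}$ terms, is a more detailed version of that same step and gives the stated bound.
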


\begin{proof}
Constructing $\hat M$ given $M,I,Z$ is straightforward and clearly a polynomial-time procedure. From this, we construct $\hat M[I]$, after which the algorithm only has to evaluate the determinant of an $|I|+1$-dimensional matrix, which can be done by brute force in the specified time. 
\end{proof}

\begin{lemma}\label{lem:everyMetricBasisWorks}
	Consider an $n\times n$ EDM $M$ and $Y\subseteq [n]$. Let $B\subseteq [n]$ be a metric basis of $M-Y$. Then, there is a metric basis of $M$ that contains $B$. 
\end{lemma}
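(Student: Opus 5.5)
The plan is to use the matroid structure from \Cref{prop:matroid}. Let $M$ be the distance matrix of the space $(X,\dist)$ with $X=[n]$, and suppose it is strongly $d$-embeddable for some $d$. Then the independent sets of $(X,\dist)$ (sets $Y'$ of size $r+1$ that are strongly $r$-embeddable) satisfy the matroid axioms:
\begin{itemize}
\item every singleton is independent;
\item independence is inherited by nonempty subsets;
\item the exchange property holds;
\item the maximum size of an independent set is $d+1$, and every independent set of that size is a metric basis.
\end{itemize}
Write $B=\{b_0,\dots,b_r\}$. It is a metric basis of $M-Y$, that is, of the subspace $(X\setminus Y,\dist)$, which is itself strongly $r'$-embeddable for some $r'\le d$.

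First I check that $B$ may be taken to be an independent set of size $r'+1$ in $(X\setminus Y,\dist)$, and hence also in $(X,\dist)$. Independence depends only on the distances among the points of $B$, so the second claim is immediate once the first holds. For the first claim, the metric bases computed in the paper via \Cref{prop:realization} are of this form. If $B$ is an arbitrary metric basis in the sense of the definition, I would argue as follows. Fix a realization of $B$ in $\mathbb{R}^{d}$ and let $A$ be its affine hull. Every point of $X\setminus Y$ must lie in $A$: otherwise reflecting that point through $A$ (or rotating it about $A$) gives a second extension, contradicting uniqueness. Hence $\dim A=r'$, and $B$ contains an independent subset $B'$ of size $r'+1$. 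Moreover, $B'$ determines the position of every other point of $B$.

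Next I extend $B$ greedily. If $|B|<d+1$, an independent set of size $d+1$ exists in $X$ by \Cref{prop:matroid}(iv). Repeated application of the exchange property (iii) then adds elements to $B$ one at a time while keeping it independent, until the size reaches $d+1$. The resulting set $B^*\supseteq B$ is independent of size $d+1$, so it is a metric basis of $M$ by \Cref{prop:matroid}(iv). If instead $B$ was a non-minimal metric basis, I run the same augmentation on $B'$ to obtain $B'^*$ and output $B'^*\cup B$. Any superset of a metric basis is a metric basis, provided the fixed embedding of the superset is consistent with $M$, which is the case here.

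The main obstacle is the first step: making sure the notion of "metric basis" is the independent-set notion, so that $B$ is independent in the whole space. Here the affine-hull/reflection argument is the key point. Once that is settled, the matroid augmentation from \Cref{prop:matroid} finishes the proof directly.
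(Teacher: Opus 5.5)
Your proof is correct and takes the same route as the paper: you augment $B$ by repeated use of the exchange property (iii) against an independent set of size $d+1$, and then conclude by (iv) that the result is a metric basis of $M$. Your preliminary step (showing that a metric basis of $M-Y$ contains an independent subset of full rank, and covering non-minimal bases via $B'^*\cup B$) is something the paper tacitly assumes; the one fix needed there is that the reflection through $A$ must be applied to the whole configuration, not only to the offending point, so that it yields a genuine second extension.
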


\begin{proof}
If $B$ is a metric basis of $M$, then we are done, so suppose not.  
Let $B_{\text{opt}}$ be a metric basis of $M$. By our assumption that $B$ is not a metric basis of $M$   and \Cref{prop:matroid} (iv), we have that $|B_{\text{opt}}|>|B|$. Moreover,  by \Cref{prop:matroid} (iii), there exists a subset $B_{\text{ext}} \subseteq B_{\text{opt}}$ with size $|B_{\text{opt}}| - |B|$ such that
$   B' = B \cup B_{\text{ext}}$
    is an independent set. By \Cref{prop:matroid} (iv), it follows that $B'$ is a metric basis of $M$, completing the proof of the lemma.
\end{proof}

\XPFillIn*

\begin{proof}

	Let the instance $(M,d)$ be given, where $M$ is $G$-partial and moreover, let $X$ be a set of at most $k$ non-edges of $G$ such that  $G'=G+X$ is a chordal graph.  It is straightforward to construct $G$
 from $M$ and once we have $G$,  we 
% invoke \Cref{prop:computeFillIn} to
  compute $X$ (if it exists) in time $2^{o(k)}n^{\bigoh(1)}$ using the algorithm of Fomin and Villanger \cite{FominV13}. 
 In the rest of the proof, we will only refer to the graph $G'$, so we can refer to the pairs in $X$ as edges.   Let $\Zcal$ denote a set of $2k$ indeterminates, two per edge in $X$.  Formally, for every edge $uv\in X$, $\Zcal$ contains the indeterminates $z_{uv}$ and $z_{vu}$. 
% $\Zcal=\{z_{uv}\mid uv\in X\}$.
%
Let $C_1,\dots,C_t$ be the maximal cliques in $G'$.
 It is well-known that $t\leq n$ and these cliques can be computed in polynomial time \cite{Golumbic80}.

\begin{enumerate}
\item Do the following for each $i\in [t]$:
\begin{enumerate}
\item  Define $\Zcal_i=\{z_{uv}\mid u,v\in V(C_i)\}$. That is, $\Zcal_i$ comprises those indeterminates corresponding to those edges in $X$ that have both endpoints in $V(C_i)$.  

	\item Define $V_X(C_i)=\{u\mid \exists v:uv\in X \wedge \{u,v\}\subseteq V(C_i)\}$. 
%	\item 
That is, $V_X(C_i)$ is the set of endpoints of edges in $X$ that are contained in $C_i$.

	\item Define $B_i$ to be an arbitrary metric basis of $M_i=M[V(C_i)\setminus V_X(C_i)]$. 
	Since by definition, $M_i$ is fully specified, $B_i$ is well-defined.
	If $|B_i|>d+1$, then stop and conclude that the input is a no-instance.
	\item For every $Y\subseteq V_X(C_i)$ of size at most $d+1-|B_i|$, do the following:
	 \begin{enumerate}\item Define $B^{Y}_i=B_i\cup Y=\{x_0,\dots,x_{r}\}$. 
	\item  Use \Cref{lem:computeCMpolynomial} to construct the following set $P^{Y}_i$ of atoms:
	\begin{enumerate}
		\item   $(-1)^{j+1}CM_{Z_{i}}( x_0,x_1, \dots, x_j)>0$, where $1\leq j\leq r$. 
	\item $CM_{Z_{i}}(x_0,x_1, \dots, x_r,x)=0$ for every $x\in V(C_i)\setminus B^{Y}_i$. 
	\item $CM_{Z_{i}}(x_0,x_1, \dots, x_r,x,y)=0$ for every $x,y\in V(C_i)\setminus B^{Y}_i$.
	\item $z_{uv}\geq 0$ for each indeterminate   $z_{uv}\in Z_i$. 
	\item $z_{uv}-z_{vu}=0$ for each pair $z_{uv},z_{vu}$ of indeterminates in $Z_i$. 
	\end{enumerate}
	\item Define $\phi^{Y}_i$ to be the quantifier-free formula defined as the conjunction of the atoms in $P^{Y}_i$. 
%	\item We argue that the sentence  defined by prepending $\phi^{Y}_i$ with existential quantifications for every indeterminate [TODODEFINE] is true if and only if there is an assignment to the distances between the appropriate pairs that ensures the $d$-embeddability of $C_i$ with $B^{Y}_i$ as a basis. 
	
\end{enumerate}
\item Define $\Psi_i$ to be the quantifier-free formula defined as the disjunction of the formulas $\phi^{Y}_i$ taken over all $Y\subseteq V_X(C_i)$ of size at most $d+1-|B_i|$.
	\end{enumerate}

\item Define the quantifier-free formula $\Gamma=\bigwedge_{i\in [t]}\Psi_i$.

\item  Run the algorithm of  \Cref{prop:basuPolynomialModelFindingTheorem} on the sentence obtained by existentially quantifying the indeterminates in $\Zcal$  and prepending them to $\Gamma$.
 Return the same answer as that produced by this invocation. 
	\end{enumerate}

\begin{claim}\label{clm:correctnessOfFillInAlgorithms}
$(M,d)$ is a yes-instance of {\dedmcompshort} if and only if this algorithm returns {\sf Yes}. 
\end{claim}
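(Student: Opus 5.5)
We prove the two directions separately, and in both we use the Blumenthal characterization (\Cref{thm:Blumental}) as the bridge between embeddability of a clique submatrix and the truth of the Cayley--Menger atoms in $P^{Y}_i$. Since the algorithm of \Cref{prop:basuPolynomialModelFindingTheorem} decides the sentence $\exists \Zcal\,\Gamma$ correctly, it suffices to show that $(M,d)$ is a yes-instance if and only if there is an assignment $\nu\colon\Zcal\to\mathbb{R}$ with $\Gamma[\nu]$ true.

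For the forward direction, fix a $d$-EDM completion $D$ of $M$. Set $\nu(z_{uv})=\nu(z_{vu})=D_{uv}$ for every $uv\in X$. Atoms (iv) and (v) then hold trivially. Fix a maximal clique $C_i$ of $G'$. Under $\nu$, the matrix $\hat M[V(C_i)]$ becomes $D[V(C_i)]$, which is $d$-embeddable. The set $B_i$ is a metric basis of $D[V(C_i)\setminus V_X(C_i)]$. By \Cref{lem:everyMetricBasisWorks}, with $Y$ taken to be $V_X(C_i)$ inside the EDM $D[V(C_i)]$, there is a metric basis $B_i\cup Y$ of $D[V(C_i)]$ with $Y\subseteq V_X(C_i)$. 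Because $D[V(C_i)]$ is $d$-embeddable, \Cref{prop:matroid}(iv) gives $|B_i\cup Y|\le d+1$, so this $Y$ is one of the enumerated sets. This also shows that the early ``no-instance'' exit is never triggered on a yes-instance.

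It remains to check atoms (i)--(iii) for this $Y$. The subtle point is ordering: condition~(i) of \Cref{thm:Blumental} requires the chain $\{x_0,\dots,x_j\}$ to be strongly $j$-embeddable for every prefix. Since $B^Y_i$ is independent, every nonempty subset is independent by \Cref{prop:matroid}(ii), so every prefix is independent in any order. This yields the sign conditions in (i). Atoms (ii) and (iii) hold because $D[V(C_i)]$ is strongly $r$-embeddable with basis $B^Y_i$, so adding one or two further points does not raise the dimension. Hence $\phi^Y_i[\nu]$ is true, so $\Psi_i[\nu]$ is true, and therefore $\Gamma[\nu]$ is true.

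For the backward direction, let $\nu$ satisfy $\Gamma$. Define a $G'$-partial matrix $M'$ by setting $M'_{uv}=\nu(z_{uv})$ for $uv\in X$; this is symmetric and nonnegative by (iv) and (v). For each $i$ some $\phi^Y_i[\nu]$ holds. Then \Cref{thm:Blumental}, applied with $X_r=B^Y_i$ (so $r\le d$), shows that $M'[V(C_i)]$ is strongly $r$-embeddable and hence $d$-embeddable. Every fully specified principal submatrix of $M'$ is indexed by a clique of $G'$, which lies inside some maximal clique $C_i$, so it is $d$-embeddable as well. Since $G'$ is chordal, \Cref{prop:chordalTheorem} completes $M'$, and therefore $M$, to a $d$-EDM.

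The main obstacle is the exact matching between the formula and \Cref{thm:Blumental}. The variables in $\Zcal_i$ appear only in $CM_{Z_i}$, so one must confirm that every entry indexed inside $C_i$ is either specified or in $X$; this holds because $C_i$ is a clique of $G'=G+X$. One must also handle the degenerate cases $r=0$ and $V(C_i)=B^Y_i$, where the atoms in (i)--(iii) are vacuous and a single point is trivially $0$-embeddable.
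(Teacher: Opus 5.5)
Your proof is correct and follows the same route as the paper. The forward direction extends $B_i$ to a metric basis of the completed clique via \Cref{lem:everyMetricBasisWorks} and reads off atoms (A)--(C) from \Cref{thm:Blumental}; the backward direction applies \Cref{thm:Blumental} to each maximal clique and concludes with \Cref{prop:chordalTheorem}.

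One step that both you and the paper leave implicit is that the elements added to $B_i$ lie in $V_X(C_i)$, which the lemma as stated does not guarantee. It holds because $B_i$ is an independent set of maximum size in the subspace indexed by $V(C_i)\setminus V_X(C_i)$, so by \Cref{prop:matroid}~(ii) no other vertex of that set can belong to an independent superset of $B_i$.
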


	\begin{proof}
We prove both directions of the equivalence. 

%\textbf{Forward direction:} 
Suppose $M$ has a completion to an EDM $M''$. For each edge $uv \in X$, define $z_{uv} = M''_{uv}$.  
 Recall that $M''_{uv}=M''_{vu}\geq 0$, so we have that $z_{uv}=z_{vu}\geq 0$. Let $\nu \colon \mathcal{Z} \to \mathbb{R}$ be the map satisfying $\nu(z_{uv}) = M''_{uv}$ for all $z_{uv} \in \mathcal{Z}$. To show $\Gamma[\nu]$ holds, it suffices to verify $\Psi_i[\nu]$ for every $i \in [t]$, so fix an $i \in [t]$. By the definition of $\Psi_i$, we need to argue the existence of a subset $Y \subseteq V_X(C_i)$ with $|Y| \leq d + 1 - |B_i|$ such that $\phi^Y_i[\nu]$ holds. By \Cref{lem:everyMetricBasisWorks}, there exists a metric basis $B'_i$ of $M''[V(C_i)]$ containing $B_i$. Since $M''[V(C_i)]$ is $d$-embeddable, $|B'_i| \leq d + 1$, so $Y = B'_i \setminus B_i$ satisfies $|Y| \leq d + 1 - |B_i|$. We have already observed that atoms in $P^{Y}_i$ of the type (D) and (E) are already satisfied, so it remains to consider types (A)--(C). However, the fact that these are satisfied
% 
%  The truth of $\phi^Y_i[\nu]$ 
  follows from \Cref{thm:Blumental}, as $M''[V(C_i)]$ is an EDM with metric basis $B'_i$.

%\textbf{Converse direction:} 
Conversely, let $\nu \colon \mathcal{Z} \to \mathbb{R}$ such that $\Gamma[\nu]$ is true. Construct the matrix $\wtilde{M}$ by setting $\wtilde{M}_{ij} = M_{ij}$ if $z_{ij} \notin \mathcal{Z}$, and $\wtilde{M}_{ij} = \nu(z_{ij})$ otherwise. By construction, $\wtilde{M}$ is a $G'$-partial matrix. By \Cref{prop:chordalTheorem}, we need only verify that every fully specified principal submatrix of $\wtilde{M}$ is $d$-embeddable. Each such submatrix is contained in $\wtilde{M}[V(C_i)]$ for some $i \in [t]$, so fix an $i \in [t]$. Since $\Psi_i[\nu]$ is true, there exists $Y \subseteq V_X(C_i)$ with $|Y| \leq d + 1 - |B_i|$ such that $\phi^Y_i[\nu]$ is true. Fix a smallest such $Y$.  Let $B'_i = B_i \cup Y$. By \Cref{thm:Blumental}, $B'_i$ is a metric basis of $\wtilde{M}[V(C_i)]$, showing that $\wtilde{M}[V(C_i)]$ $d$-embeddable.

This completes the proof of the claim. 
\end{proof}

\begin{claim}\label{clm:correctSetOfPolynomials}
For each $i\in [t]$ and $Y\subseteq V_X(C_i)$, 
the atoms in $P^{Y}_i$ are $\Pcal^{Y}_i$-atoms, where $\Pcal^{Y}_i$ is a set of $\bigoh(n^{2})$ polynomials each of degree at most ${\rm min}(d+3,2k)$. 
\end{claim}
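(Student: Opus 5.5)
The plan is to bound, for fixed $i$ and $Y$, first the number of atoms and then the degree of each, treating the five types (A)--(E) separately. The basis $B^{Y}_i=B_i\cup Y$ has size $r+1\le d+1$: if $|B_i|>d+1$ the algorithm has already stopped, and $|Y|\le d+1-|B_i|$.

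\emph{Counting.} Type (A) contributes at most $r\le d\le n$ atoms. Type (B) contributes at most $|V(C_i)|\le n$ atoms, and type (C) at most $\binom{n}{2}+n$ atoms, one for each pair $x,y$ (allowing $x=y$ is harmless). Types (D) and (E) contribute at most $|\Zcal_i|\le 2k\le 2n^2$ atoms, since $X$ consists of non-edges and so $k\le\binom n2$. The total is $\bigoh(n^2)$. We then take $\Pcal^{Y}_i$ to be the set of polynomials appearing on the left-hand sides of these atoms.

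\emph{Degree bound for the Cayley--Menger atoms.} Every Cayley--Menger determinant in (A)--(C) is indexed by a set $I$ with $|I|\le r+3\le d+3$; the largest case is type (C), $B^{Y}_i\cup\{x,y\}$. By \Cref{lem:structureOfAugmentedCM}, each such polynomial has degree at most $\min(|I|,2|Z_i|)\le\min(d+3,2k)$, using $|Z_i|$ edge variables. To be safe about the bound $2k$, note that each entry of the bordered matrix in \Cref{def:augmentedCayleyMenger} is either a constant or a single indeterminate. A monomial in the Leibniz expansion picks one entry from each row and each column, so it contains each of the $2k$ distinct indeterminates at most once; each may appear at most once per position, and each sits at a unique position. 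Hence every monomial has degree at most the number of distinct indeterminates, which is at most $2k$. It also has degree at most $|I|$, because the first row contributes a constant.

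\emph{Remaining atoms and main obstacle.} Types (D) and (E) are linear, so their degree is $1\le\min(d+3,2k)$ whenever $k\ge1$. If $k=0$, the graph $G$ is already chordal, every $\Zcal_i$ is empty, and all polynomials are constants; we would treat this degenerate case separately or simply note it. The strict inequality in (A) is not literally one of the atom types of \Cref{def:polynomialFormulas}. I would handle it by writing $P>0$ as $\neg(P\le 0)$, which is a quantifier-free $\Pcal$-formula. The main point needing care is matching this to the formal atom definition and the degree-$2k$ bound; the counting itself is routine.
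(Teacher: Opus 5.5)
Your proposal is correct and follows essentially the same route as the paper: split the atoms into types (A)--(E), get the degree bound $\min(d+3,2k)$ from \Cref{lem:structureOfAugmentedCM} (whose Leibniz-expansion argument you re-derive), observe that type (C) dominates the count with $\bigoh(n^2)$ atoms, that (D) and (E) are linear, and that $k\le\binom{n}{2}$. There are two harmless slips. Types (D) and (E) together give up to $\tfrac{3}{2}|\Zcal_i|\le 3k$ atoms rather than $|\Zcal_i|$. Also, $d\le n$ is not given, although $r\le |V(C_i)|-1<n$ bounds type (A) anyway. Your remarks on the case $k=0$ and on writing $P>0$ as $\neg(P\le 0)$ add care that the paper omits.
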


	\begin{proof}
Fix $i \in [t]$ and let $Y \subseteq V_X(C_i)$ 
be as described in the statement of the claim.  Consider the atoms in $P^Y_i$, which by definition fall into types (A)--(E). 
  By \Cref{lem:structureOfAugmentedCM}, each augmented CM determinant appearing in atoms of Type (A)-(C) is a polynomial of degree at most ${\rm min}(d+3,2k)$. Moreover, the number of atoms of Type (A)-(C) is dominated by Type (C), of which there are $\bigoh(n^{2})$ since we have $\bigoh(n^{2})$ choices of $x$ and $y$. Moreover types (D) and (E) collectively contain at most $3k$ atoms with degree-1 polynomials.
Finally, recall that $k$ is at most the number of edges in the $n$-vertex graph $G'$, so $k=\bigoh(n^{2})$. This completes the proof of the claim.
\end{proof}

\begin{claim}\label{clm:correctnessOfFillInRunTime}
The algorithm runs in time $d^{\bigoh(kd)}\cdot 2^{\bigoh(k^{2})}\cdot n^{\bigoh(k)}$. 
\end{claim}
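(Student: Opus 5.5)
The running time splits into three parts: the preprocessing, the construction of the formula $\Gamma$, and the single call to the algorithm of \Cref{prop:basuPolynomialModelFindingTheorem}. I will bound each separately and multiply or add.

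\textbf{Preprocessing.} Computing $G$ from $M$ and the fill-in set $X$ with the algorithm of Fomin and Villanger takes $2^{o(k)}n^{\bigoh(1)}$ time. Listing the at most $n$ maximal cliques $C_1,\dots,C_t$ of the chordal graph $G'$ is polynomial. Computing each $B_i$ by \Cref{prop:realization} takes $\bigoh(n^3)$ time.

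\textbf{Building $\Gamma$.} Fix $i\in[t]$.
\begin{itemize}
\item Since $|V_X(C_i)|\le 2k$, the number of sets $Y$ of size at most $d+1$ is at most $\sum_{j\le d+1}\binom{2k}{j}\le (2k+1)^{d+1}$.
\item For each $Y$, the set $P^Y_i$ has $\bigoh(n^2)$ atoms by \Cref{clm:correctSetOfPolynomials}.
\item Each atom is an augmented Cayley--Menger determinant on at most $d+3$ indices. By \Cref{lem:computeCMpolynomial} it is computed in $(d+3)^{\bigoh(d)}n^{\bigoh(1)}$ time.
\item The resulting polynomial has degree at most $\min(d+3,2k)$ in at most $2k$ variables. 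It therefore has at most $\binom{2k+d+3}{d+3}\le 2^{\bigoh(k+d)}$ monomials, so writing it down is also within budget.
\end{itemize}
Summing over $i\le n$, constructing $\Gamma$ costs $(2k+1)^{d+1}(d+3)^{\bigoh(d)}n^{\bigoh(1)}$. This is at most $d^{\bigoh(kd)}2^{\bigoh(k^2)}n^{\bigoh(1)}$, using $(2k)^{d}\le \max(d^{\bigoh(d)},2^{\bigoh(k\log k)})$.

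\textbf{The call to \Cref{prop:basuPolynomialModelFindingTheorem}.} The sentence has $2k$ variables.
\begin{itemize}
\item The polynomial set $\Pcal=\bigcup_{i,Y}\Pcal^Y_i$ has size $s\le n\cdot(2k+1)^{d+1}\cdot\bigoh(n^2)$.
\item Every polynomial has degree $\ell\le\min(d+3,2k)$.
\end{itemize}
The complexity is then
\[
s^{2k+1}\ell^{\bigoh(k)}\le n^{\bigoh(k)}\,(2k+1)^{\bigoh(kd)}\,(d+3)^{\bigoh(k)}.
\]
The only factor that still needs care is $(2k+1)^{\bigoh(kd)}=2^{\bigoh(kd\log k)}$, which I handle by cases.
\begin{itemize}
\item If $d\ge k$, then $2^{\bigoh(kd\log k)}\le 2^{\bigoh(kd\log d)}=d^{\bigoh(kd)}$.
\item If $d<k$, then $kd\log k\le k^2\log k$. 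That gives $2^{\bigoh(k^2\log k)}$, slightly more than the claimed $2^{\bigoh(k^2)}$.
\end{itemize}
The case $d<k$ is the obstacle. I would first repair it by noting that distinct $Y$ only change which atoms appear, not which polynomials appear. The polynomials are the augmented Cayley--Menger determinants on subsets of $V(C_i)$ of size at most $d+3$, and there are only $n^{\bigoh(d)}$ such subsets. That gives $s\le n^{\bigoh(d)}$ rather than a count involving $k^{d}$.

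If $d<k$ this is still too large, since it yields $n^{\bigoh(dk)}$. In that case I would instead bound the contribution of $Y$ by $2^{2k}$, the number of all subsets of $V_X(C_i)$. This gives $s\le 2^{2k}n^{\bigoh(1)}$ and hence $s^{2k+1}\le 2^{\bigoh(k^2)}n^{\bigoh(k)}$, which fits the bound.

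Combining the cases: when $d\ge k$ the factor from $Y$ is $(2k+1)^{d+1}\le d^{\bigoh(d)}$, and when $d<k$ it is at most $2^{2k}$. Either way $s\le d^{\bigoh(d)}2^{\bigoh(k)}n^{\bigoh(1)}$. Then
\[
s^{\bigoh(k)}\ell^{\bigoh(k)}=d^{\bigoh(kd)}\,2^{\bigoh(k^2)}\,n^{\bigoh(k)},
\]
and this dominates the preprocessing and construction costs, which gives the claimed running time.
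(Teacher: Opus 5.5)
Your proposal is correct and is essentially the paper's argument: bound the total number of atoms (hence polynomials) in $\Gamma$ by $d^{\bigoh(d)}2^{\bigoh(k)}n^{\bigoh(1)}$, then apply \Cref{prop:basuPolynomialModelFindingTheorem} with $2k$ variables and degree at most $\min(d+3,2k)$. The detour through $(2k+1)^{d+1}$ and the case split on $d$ versus $k$ are unnecessary, because the bound $2^{|V_X(C_i)|}\le 4^k$ on the number of sets $Y$ holds for every $d$, and the paper uses exactly this bound from the start.
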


\begin{proof} Steps 1(a), 1(b) clearly run in polynomial time. Step 1(c) requires polynomial time (see \Cref{prop:realization}) to compute a metric basis of $M_i$. Step 1(d) processes at most $4^k$ subsets $Y$, computing $\mathcal{O}(n^2)$ atoms in $P^Y_i$ per subset (each atom is computed in $d^{\mathcal{O}(d)}n^{\bigoh(1)}$ time by \Cref{lem:computeCMpolynomial}).

The total construction time for $\Gamma$ in Step 2 is therefore upper bounded by:
$$\sum_{\substack{i \in [t] \\ Y \in \binom{V_X(C_i)}{\leq d+1 - |B_i|}}} \mathcal{O}\left(|P^Y_i|\right)= \mathcal{O}(\rho(d,n,k)),$$
%
% $\sum_{i,Y} \mathcal{O}(|P^Y_i|) $, 
 where $\rho(d,n,k)$ denotes an upper bound on the total number of atoms that appear in $\Gamma$.
 Plugging in \Cref{clm:correctSetOfPolynomials} 
 to bound the size of each $P^{Y}_i$ 
 gives us $\rho(d,n,k) = d^{c_1 d} \cdot 2^k \cdot n^{c_2}$ for some constants $c_1, c_2 \geq 1$.

 All polynomials appearing in atoms that make up $\Gamma$ have degree $\leq \min(d+3, 2k)$ by \Cref{clm:correctSetOfPolynomials}. So, when Step 3 invokes the algorithm of \Cref{prop:basuPolynomialModelFindingTheorem} with at most $2k$ quantified variables, this yields a total runtime upper bounded by $\rho(d,n,k)^{2k+2} \cdot (2k)^{\mathcal{O}(k)}$, which is upper bounded by the claimed running time.
\end{proof}

This completes the proof of the theorem.
	\end{proof}

	Since every $n$-vertex graph has minimum fill-in at most ${n\choose 2}$ and every $n\times n$ EDM can be embedded into $\mathbb{R}^{n}$, it naively follows from  \Cref{thm:XPAlgorithmbyFillIn} by plugging in $k\leq n^{2}$ and $d\leq n$ that {\dedmcompshort} can be solved in time  $2^{\bigoh(n^{4})}$. However, a more careful argument using the approach from \cite{BentertFGRS25} leads to the better runtime stated in \Cref{thm:exactAlgoCompletion}.

\exactAlgoCompletion*

\begin{proof}
The key insight in this proof compared to the proof of \Cref{thm:XPAlgorithmbyFillIn} is that we can simply ``guess'' the metric basis of the hypothetical completed matrix, construct polynomial equations corresponding to the guess and then use \Cref{prop:basuPolynomialModelFindingTheorem}.   The details follow. 

 Define $Z=\{z_{uv}\mid uv\notin E(G)\}$. That is, $Z$ comprises those indeterminates corresponding to the non-edges of $G$. 
For every $Y=\{x_0,\dots,x_{r}\}\subseteq V(G)$ of size  $r\leq d+1$, use \Cref{lem:computeCMpolynomial} to construct the following set $P^{Y}$ of atoms:

\begin{enumerate}
		\item   $(-1)^{j+1}CM_{Z}( x_0,x_1, \dots, x_j)>0$, where $1\leq j\leq r$. 
	\item $CM_{Z}(x_0,x_1, \dots, x_r,x)=0$ for every $x\in V(G)\setminus Y$. 
	\item $CM_{Z}(x_0,x_1, \dots, x_r,x,y)=0$ for every $x,y\in V(G)\setminus Y$.
	\item $z_{uv}\geq 0$ for each indeterminate   $z_{uv}\in Z$. 
	\item $z_{uv}-z_{vu}=0$ for each pair $z_{uv},z_{vu}$ of indeterminates in $Z$. 
	\end{enumerate}
	
	Define $\phi^{Y}$ to be the quantifier-free formula defined as the conjunction of the atoms in $P^{Y}$. 

For each $Y\subseteq V(G)$ of size at most $d+1$, 
run the algorithm of  \Cref{prop:basuPolynomialModelFindingTheorem} on the sentence obtained by existentially quantifying the indeterminates in $Z$  and prepending them to $\phi^{Y}$.
Return \textsc{Yes} if and only if for at least one choice of  $Y$ this invocation on $\exists Z\cdot \phi^{Y}$ returned \textsc{Yes}. The correctness is straightforward since we exhaustively consider all possible metric bases for the hypothetical completed EDM. The claimed running time follows from the fact that each invocation to \Cref{prop:basuPolynomialModelFindingTheorem} is done on a system defined over $n^{\bigoh(1)}$ atoms each of degree at most $|Z|\leq n^{2}$ and there are at most $2^{n}$ such invocations in total. 
 \end{proof}

\section{Concluding remarks and open questions}\label{sec:conclusion}

By \Cref{thm:XPAlgorithmbyFillIn}, the \dedmcompshort{} problem can be solved in polynomial time for every fixed value of $d$ and $k$. 
%So, it is in the class \classXP{} when parameterized by the size of the fill-in \( k \) of the underlying graph and the dimension \( d \). 
Moreover, for fixed values of $k$, our algorithm is \classFPT in $d$. 
It is therefore a natural question whether the problem is in \classFPT{} with respect to {\em both} $d$ and $k$.  

Besides this, our work raises a series of questions asking which other structural parameters of the graph \( G \) yield \classXP{} or \classFPT{} algorithms.

One of the most well-studied parameters in graph algorithms is the \emph{treewidth} of the graph.  
However, a classic result by Saxe~\cite{saxe1979embeddability} shows that the \dedmcompshort{} problem is NP-hard for \( d = 1 \) even when \( G \) is a cycle, which is a graph of treewidth at most two.  
Therefore, parameterized by treewidth and dimension, the problem is \classParaNPComplete.  
The same hardness result applies when parameterizing by the size of the \emph{feedback vertex set} as a cycle has a feedback vertex set of size one.

A more interesting situation arises when parameterizing by the size of the \emph{minimum vertex cover} of \( G \).  
Observe that if a vertex cover of \( G \) has size at most \( k \), then the fill-in of \( G \) is at most \( \binom{k}{2} \).  
Indeed, one can make \( G \) chordal by adding all missing edges between the vertices in the vertex cover.
Another useful observation is that if the vertex cover of \( G \) has size at most \( k \), then we can assume without loss of generality that the embedding dimension \( d \leq k + 1 \).  
This is because once the vertices in the vertex cover are embedded, the remaining vertices form an independent set, and all of them can be placed by making use of at most one additional dimension.
Therefore, by \Cref{thm:XPAlgorithmbyFillIn}, the \dedmcompshort{} problem is in \classXP{} when parameterized by the size of the minimum vertex cover of \( G \).  
It is therefore a natural open question whether the problem is in \classFPT{} or \classW{1}\nobreakdash-hard under this parameterization.

Another parameter of \( G \) whose influence on the complexity of \dedmcompshort{} is not yet understood is   \emph{tree-depth} \cite{NesetrilOdM12}.  
A first step towards this goal would be understanding the parameterized complexity of \dedmcompshort{} 
when parameterized by both the tree-depth and \( d \).

\bibliography{book_pc}

\end{document}